\documentclass[11pt]{llncs}
\usepackage{amsmath,amssymb}
\usepackage{graphicx}
\usepackage{fullpage}
\usepackage[top=0.8in, bottom=0.9in, left=0.9in, right=0.9in]{geometry}

\usepackage{lineno}


\def\calC{\mathcal{C}}
\def\calH{\mathcal{H}}


\begin{document}

\title{Algorithms for the Line-Constrained Disk Coverage and Related Problems\thanks{This research was supported in part by NSF under Grant CCF-2005323. A preliminary version of this paper will appear in {\em Proceedings of the 17th Algorithms and Data Structures Symposium (WADS 2021)}.}}
\author{Logan Pedersen 
\and
Haitao Wang 
}

 \institute{
 Department of Computer Science\\
  Utah State University, Logan, Utah 84322, USA\\
  \email{logan.pedersen@aggiemail.usu.edu, haitao.wang@usu.edu}
}

\maketitle

\pagestyle{plain}
\pagenumbering{arabic}
\setcounter{page}{1}

\vspace{-0.1in}
\begin{abstract}
Given a set $P$ of $n$ points and a set $S$ of $m$ weighted disks in the
plane, the disk coverage problem asks for a subset of disks of minimum
total weight that cover all points of $P$. The problem is
NP-hard. In this paper, we consider a line-constrained version in
which all disks are centered on a line $L$ (while points of $P$ can be anywhere in the plane). We present an
$O((m+n)\log(m+n)+\kappa\log m)$ time algorithm for the problem, where
$\kappa$ is
the number of pairs of disks that intersect. Alternatively, we can also solve the problem in $O(nm\log(m+n))$ time. For the unit-disk case where
all disks have the same radius, the running time can be reduced to $O((n+m)\log(m+n))$.
In addition, we solve in $O((m+n)\log(m+n))$ time the
$L_{\infty}$ and $L_1$ cases of the problem, in which the disks are
squares and diamonds, respectively.
As a by-product, the 1D version of the problem where all points of $P$ are on $L$
and the disks are line segments on $L$ is also solved in $O((m+n)\log(m+n))$ time.
We also show that the problem has an $\Omega((m+n)\log (m+n))$ time lower bound even for the 1D case.

We further demonstrate that our techniques can also be used to solve
other geometric coverage problems. For example, given in the plane a
set $P$ of $n$ points and a set $S$ of $n$ weighted half-planes, we
solve in $O(n^4\log n)$ time the problem of finding a subset of
half-planes to cover $P$ so that their total weight is minimized. This improves the previous
best algorithm of $O(n^5)$ time
by almost a linear factor. If all half-planes are lower ones, then our algorithm
runs in $O(n^2\log n)$ time, which improves the previous best
algorithm of $O(n^4)$ time
by almost a quadratic factor.
\end{abstract}

\section{Introduction}
\label{sec:intro}

Given a set $P$ of $n$ points and a set $S$ of $m$ disks in the
plane such that each disk has a weight,
the {\em disk coverage} problem asks for a subset of disks of minimum
total weight that cover all points of $P$.
We assume that the union of all disks covers all points of $P$.
It is known that the problem is NP-hard~\cite{ref:FederOp88} and many
approximation algorithms have been proposed,
e.g.,~\cite{ref:LiA15,ref:MustafaPt09}.

In this paper, we consider a line-constrained version of the
problem in which all disks (possibly with different radii) have their centers on a line $L$, say, the
$x$-axis. To the best of our knowledge, this line-constrained problem
was not particularly studied before. We present an
$O((m+n)\log(m+n)+\kappa\log m)$ time algorithm, where $\kappa$ is
the number of pairs of disks that intersect (and thus $\kappa\leq m(m-1)/2$; e.g., if the disks are disjoint, then $\kappa=0$ and the algorithm runs in $O((m+n)\log(m+n))$ time). Alternatively, we can also solve the problem in $O(nm\log(m+n))$ time.
For the {\em unit-disk case} where
all disks have the same radius, the running time can be reduced to $O((n+m)\log(m+n))$.
In addition, we solve in $O((m+n)\log(m+n))$ time the
$L_{\infty}$ and $L_1$ cases of the problem, in which the disks are
squares and diamonds, respectively.
As a by-product, we present an  $O((m+n)\log (m+n))$ time algorithm
for the 1D version of the problem where all points of $P$ are on $L$
and the disks are line segments of $L$.
In addition, we show that the problem has an $\Omega((m+n)\log (m+n))$ time lower bound in the algebraic decision tree model even for the 1D case. This implies that our algorithms for the 1D, $L_{\infty}$, $L_1$, and unit-disk cases are all optimal.

Our algorithms potentially have applications, e.g., in facility locations. For example, suppose we want to build some facilities along a railway which is represented by $L$ (although an entire railway may not be a straight line, it may be considered straight in a local region) to provide service for some customers that are represented by the points of $P$. The center of a disk represents a candidate location for building a facility that can serve the customers covered by the disk and the cost for building the facility is the weight of the disk. The problem is to determine the best locations to build facilities so that all customers can be served and the total cost is minimized. This is exactly an instance of our problem.

Although the problems are line-constrained, our techniques can actually be used to solve other geometric coverage
problems. If all disks of $S$ have the same radius and the set of disk centers are separated from $P$ by a line $\ell$, the problem is called {\em line-separable unit-disk coverage}. The unweighted case of the problem where the weights of all disks are $1$ has been studied in the literature~\cite{ref:AmbuhlCo06,ref:ClaudeAn10,ref:ClaudePr09}. In particular, the fastest algorithm was given by Claude et al.~\cite{ref:ClaudeAn10} and the runtime is $O(n\log n + nm)$. The algorithm, however, does not work for the weighted case.
Our algorithm for the line-constrained $L_2$ case can be used to solve the weighted case in $O(nm\log (m+n))$ time or in $O((m+n)\log (m+n)+\kappa\log m)$ time, where $\kappa$ is the number of pairs of disks that intersect on the side of $\ell$ that contains $P$.
More interestingly, we can use the algorithm to solve the following
{\em half-plane coverage problem}. Given in the plane a set $P$ of $n$
points and a set $S$ of $m$ weighted half-planes, find a subset of the
half-planes to cover all points of $P$ so that their total weight is
minimized. For the {\em lower-only case} where all half-planes are
lower ones, Chan and Grant~\cite{ref:ChanEx14} gave an $O(mn^2(m+n))$
time algorithm. In light of the observation that a half-plane is a
special disk of infinite radius, our line-separable unit-disk
coverage algorithm can be applied to solve the problem
in $O(nm\log (m+n))$ time or in $O(n\log n+m^2\log m)$ time. This
improves the result of~\cite{ref:ChanEx14} by almost a quadratic
factor (note that the techniques of~\cite{ref:ChanEx14} are applicable to more general problem settings such as downward shadows of $x$-monotone curves). For the general case where both upper and lower half-planes
are present, Har-Peled and Lee~\cite{ref:Har-PeledWe12} proposed an
algorithm of $O(n^5)$ time when $m=n$. By using our lower-only case
algorithm, we solve the problem in $O(n^3m\log (m+n))$ time or in
$O(n^3\log n+n^2m^2\log m)$ time. Hence, our result improves the one
in~\cite{ref:Har-PeledWe12} by almost a linear factor.
We believe that our techniques may have other applications that remain
to be discovered.


\subsection{Related work}

Our problem is a new type of set cover problem.
The general set cover problem, which is fundamental and has
been studied extensively, is hard to solve, even
approximately~\cite{ref:FeigeA98,ref:Hockbaum87,ref:LundOn94}. Many
set cover problems in geometric settings, often called geometric
coverage problems, are also NP-hard,
e.g.,~\cite{ref:ChanEx14,ref:Har-PeledWe12}. As mentioned above, if
the line-constrained condition is dropped, then the disk coverage
problem becomes NP-hard, even if all disks are unit disks with the
same weight~\cite{ref:FederOp88}. Polynomial time approximation
schemes (PTAS) exist for the unweighted problem~\cite{ref:MustafaPt09}
as well as the weighted unit-disk case~\cite{ref:LiA15}.


Alt et al.~\cite{ref:AltMi06} studied a problem closely related to
ours, with the same input, consisting of $P$, $S$, and $L$, and the
objective is also to find a subset of disks of minimum total weight
that cover all points of $P$. But the difference is that $S$ is comprised
of all possible disks centered at $L$ and the weight of each disk is
defined as $r^{\alpha}$ with $r$ being the radius of the disk and
$\alpha$ being a given constant at least $1$. Alt et
al.~\cite{ref:AltMi06} gave an $O(n^4\log n)$ time algorithm for any
$L_{p}$ metric and any $\alpha\geq 1$, an $O(n^2\log n)$ time
algorithm for any $L_{p}$ metric and $\alpha=1$, and an $O(n^3\log n)$
time algorithm for the $L_{\infty}$ metric and any $\alpha\geq 1$.
Recently, Pedersen and Wang~\cite{ref:PedersenOn18} improved all these
results by providing an $O(n^2)$ time algorithm for any $L_{p}$ metric
and any $\alpha\geq 1$. A 1D variation of the problem was studied
in the literature where points of $P$ are all on $L$ and another set
$Q$ of $m$ points is given on $L$ as the only candidate centers
for disks. Bil\`o et al.~\cite{ref:BiloGe05} first showed that the
problem is solvable in polynomial time. Lev-Tov and
Peleg~\cite{ref:Lev-TovPo05} gave an algorithm of $O((n+m)^3)$
time for any $\alpha\geq 1$. Biniaz et al.~\cite{ref:BiniazFa18}
recently proposed an $O((n+m)^2)$ time algorithm for the case $\alpha=1$.
Pedersen and Wang~\cite{ref:PedersenOn18} solved the problem in	
$O(n(n+m)+m\log m)$ time for any $\alpha\geq 1$.

Other line-constrained problems have also been studied in the literature, e.g.,~\cite{ref:KarmakarSo13,ref:WangLi16}.


\subsection{Our approach}

We first solve the 1D version of the line-constrained problem by a simple dynamic programming algorithm.
Then, for the general ``1.5D'' problem (i.e., points of $P$ are in the plane), a key observation is that if the points
of $P$ are sorted by their $x$-coordinates, then
the sorted list can be partitioned into sublists such that there
exists an optimal solution in which each disk covers a sublist.
Based on the observation, we reduce the 1.5D problem to an instance
of the 1D problem with a set $P'$ of $n$ points and a set $S'$ of
segments. Two challenges arise in our approach.

The first challenge is to give a small bound on the size of
$S'$. A straightforward method shows that $|S'|\leq n\cdot m$. In the
unit-disk case and the $L_1$ case, we prove that $|S'|$ can be
reduced to $m$ by similar methods. In the $L_{\infty}$ case,
with a different technique, we show that $|S'|$ can be bounded by $2(n+m)$. The most challenging
case is the $L_2$ case. By a number of observations, we prove that
$|S'|\leq 2(n+m)+\kappa$.

The second challenge of our approach is to compute the set $S'$ (the
set $P'$, which actually consists of all projections of the points of
$P$ onto $L$, can be easily obtained in $O(n)$ time).
Our algorithms for computing $S'$ for all cases
use the sweeping technique. The algorithms for the unit-disk case and the
$L_1$ case are relatively easy, while those for the $L_{\infty}$ and
$L_2$ cases require much more effort. Although the two
algorithms for $L_{\infty}$ and $L_2$ are similar in spirit, the
intersections of the disks in the $L_2$ case bring more difficulties and make
the algorithm more involved and less efficient. In summary, computing
$S'$ can be done in $O((n+m)\log(n+m))$ time for all cases except
the $L_2$ case which takes $O((n+m)\log(n+m)+\kappa\log m)$ time.

\paragraph{\bf Outline.} The rest of the paper is organized as follows. We define some notation in Section~\ref{sec:pre} and we present our algorithm for the 1D problem in Section~\ref{sec:1d}.
The unit-disk case and the $L_1$ case are discussed in Section~\ref{sec:unit} and
Section~\ref{sec:l1}, respectively. The algorithms for the $L_{\infty}$ and $L_2$ cases are given in
Section~\ref{sec:general}. Using the algorithm for the $L_2$ case, we solve the line-separable disk coverage problem and the half-plane coverage problem in Section~\ref{sec:line-separable}.
Section~\ref{sec:conclude} concludes the paper with a lower bound proof.

\section{Preliminaries}
\label{sec:pre}

We assume that $L$ is the $x$-axis. We also assume that all points of $P$ are above or on $L$ since otherwise if a point $p_i$ is below $L$, then we could obtain the same optimal solution by replacing $p_i$ with its symmetric point with respect to $L$. For ease of exposition, we make a general position assumption that no two points of $P$ have the same $x$-coordinate and no point of $P$ lies on the boundary of a disk of $S$.

For any point $p$ in the plane, we use $x(p)$ and $y(p)$ to refer to its $x$-coordinate and $y$-coordinate, respectively.

We sort all points of $P$ by their $x$-coordinates, and let $p_1,p_2,\ldots,p_n$ be the sorted list from left to right on $L$. For any $1\leq i\leq j\leq n$, let $P[i,j]$ denote the subset $\{p_i,p_{i+1},\ldots,p_j\}$.
Sometimes we use indices to refer to points of $P$. For example, point $i$ refers to $p_i$.

We sort all disks of $S$ by the $x$-coordinates of their centers from left to right, and let $s_1,s_2,\ldots,s_m$ be the sorted list. For each disk $s_i$, we use $c_i$ to denote its center and use $w_i$ to denote its weight. We assume that each $w_i$ is positive (otherwise one could always include $s_i$ in the solution).
For each disk $s_i$, let $l_i$ and $r_i$ refer to its leftmost and rightmost points, respectively.

We often talk about the relative positions of two geometric objects $O_1$ and $O_2$ (e.g., two points, or a point and a line).  We say that $O_1$ is to the {\em left} of $O_2$ if $x(p)\leq x(p')$ holds for any point $p\in O_1$ and any point $p'\in O_2$, and {\em strictly left} means $x(p)< x(p')$. Similarly, we can define {\em right, above, below}, etc.

For convenience, we use $p_0$ (resp., $p_{n+1}$) to denote a point on $L$ strictly to the left (resp. right) of all points of $P$ and all disks of $S$.

We use the term {\em optimal solution subset} to refer to a subset of $S$ used in an optimal solution, and the {\em optimal objective value} refers to the total sum of the weights of the disks in an optimal solution subset.

\section{The 1D problem}
\label{sec:1d}

In the 1D problem, each disk $s_i\in S$ is a line segment on $L$, and thus $l_i$ and $r_i$ are the left and right endpoints of $s_i$, respectively.
We present a simple dynamic programming algorithm for the problem. We first introduce some notation.

For each segment $s_j\in S$, let $f(j)$ refer to the index of the rightmost point of $P\cup\{p_0\}$ strictly to the left of $l_j$, i.e., $f(j)=\arg\max_{0\leq i\leq n}x(p_i)<x(l_j)$. Due to the definition of $p_0$, $f(j)$ is well defined.
The indices $f(j)$ for all $j=1,2,\ldots,m$ can be obtained in $O(n+m)$ time after we sort all points of $P$ along with the left endpoints of all segments of $S$.

For each $i\in [1,n]$, let $W(i)$ denote the minimum total weight of a subset of disks of $S$ covering all points of $P[1,i]$. Our goal is to compute $W(n)$. For convenience, we set $W(0)=0$.
For each segment $s_j\in S$, we define its {\em cost} as $cost(j)=w_j+W(f(j))$.
One can verify that $W(i)$ is equal to the minimum $cost(j)$ among all segments $s_j\in S$ that cover $p_i$. This is the recursive relation of our dynamic programming algorithm.

We sweep a point $q$ on $L$ from left to right. Initially, $q$ is at $p_0$. During the sweeping, we maintain a subset $S(q)$ of segments that cover $q$, and the cost of each segment of $S(q)$ is already known. Also, the values $W(i)$ for all points $p_i\in P$ to the left of $q$ have been computed.
An event happens when $q$ encounters an endpoint of a segment of $S$ or a point of $P$. To guide the sweeping, we sort all endpoints of the segments of $S$ along with the points of $P$.

If $q$ encounters a point $p_i\in P$, then we find the segment of $S(q)$ with the minimum cost and assign the cost to $W(i)$. If $q$ encounters the left endpoint of a segment $s_j$, we set $cost(j)=w_j+W(f(j))$ and then insert $s_j$ into $S(q)$. If $q$ encounters the right endpoint of a segment, we remove the segment from $S(q)$. If we maintain the segments of $S(q)$ by a balanced binary search tree with their costs as keys, then processing each event takes $O(\log m)$ time as $|S(q)|\leq m$.

Therefore, the sweeping takes $O((n+m)\log m)$ time, after sorting the points of $P$ and all segment endpoints in $O((n+m)\log(n+m))$ time. After the sweeping, $W(n)$ is the optimal objective value, and an optimal solution subset of $S$ can be obtained by the standard back-tracking technique, and we omit the details.

\begin{theorem}\label{theo:1d}
The 1D disk coverage problem is solvable in $O((n+m)\log(n+m))$ time.
\end{theorem}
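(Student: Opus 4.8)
The plan is to verify that the dynamic programming recursion stated above is correct, and then confirm that the sweep implements it within the claimed time bound. Since the preamble has already laid out almost every ingredient, the proof is mostly a matter of assembling and justifying the pieces.

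First I would establish the correctness of the recurrence $W(i) = \min\{\,cost(j) : s_j\in S \text{ covers } p_i\,\}$ with $cost(j) = w_j + W(f(j))$. For the ``$\leq$'' direction, take any segment $s_j$ covering $p_i$ together with an optimal subset achieving $W(f(j))$ on $P[1,f(j)]$; since $s_j$ is a segment on $L$ whose right endpoint $r_j$ lies at or to the right of $p_i$ and whose left endpoint $l_j$ lies strictly to the right of $p_{f(j)}$, the segment $s_j$ covers exactly the consecutive block $P[f(j)+1, \text{(some index} \geq i)]$, so adding $s_j$ to that subset covers all of $P[1,i]$ at total weight $cost(j)$. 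For the ``$\geq$'' direction, fix an optimal subset $S^\ast$ for $P[1,i]$ and let $s_j\in S^\ast$ be the segment covering $p_i$ whose left endpoint is leftmost; then every point of $P[1,i]$ not covered by $s_j$ lies strictly left of $l_j$, hence lies in $P[1,f(j)]$, so $S^\ast\setminus\{s_j\}$ covers $P[1,f(j)]$ and has weight at least $W(f(j))$, giving $W(S^\ast)\geq w_j + W(f(j)) = cost(j)$. The base case $W(0)=0$ is the empty cover, and the hypothesis that the union of all disks covers $P$ guarantees $W(i)$ is finite (some segment covers $p_i$).

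Next I would argue that the sweep computes these values in the right order and that each event is handled correctly. The key invariant is that when $q$ reaches a point or endpoint at $x$-coordinate $t$, the set $S(q)$ equals exactly those $s_j$ with $x(l_j) \le t \le x(r_j)$, and $W(i)$ has been correctly computed for every $p_i$ with $x(p_i) < t$; this follows by induction on events, using that left-endpoint events strictly precede the point events they enable (when the left endpoint of $s_j$ is processed, $f(j)$ points to $p_{f(j)}$, which lies strictly left of $l_j$, so $W(f(j))$ is already known, and $cost(j)$ can be set before $s_j$ enters $S(q)$). When $q$ hits $p_i$, the segments currently in $S(q)$ are precisely those covering $p_i$, so the minimum-cost element of $S(q)$ gives $W(i)$ by the recurrence. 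The balanced-binary-search-tree representation of $S(q)$ keyed on $cost(\cdot)$ supports insert, delete, and find-min in $O(\log m)$ time since $|S(q)| \le m$.

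Finally I would tally the running time: sorting the $n$ points of $P$ together with the $2m$ segment endpoints takes $O((n+m)\log(n+m))$; computing all $f(j)$ from the merged sorted order takes $O(n+m)$; the sweep processes $O(n+m)$ events at $O(\log m)$ each, for $O((n+m)\log m)$; and back-tracking to recover an optimal subset is linear. The total is $O((n+m)\log(n+m))$, proving the theorem. I do not anticipate a genuine obstacle here; the only point requiring a little care is the ``$\geq$'' direction of the recurrence — one must pick the segment of the optimal subset covering $p_i$ with the \emph{leftmost} left endpoint (rather than an arbitrary one) to ensure the remaining points form a prefix $P[1,f(j)]$ of the sorted order, so that the inductive value $W(f(j))$ applies.
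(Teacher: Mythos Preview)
Your proposal is correct and follows essentially the same approach as the paper: the same dynamic-programming recurrence $W(i)=\min_{s_j\ni p_i} cost(j)$ with $cost(j)=w_j+W(f(j))$, the same left-to-right sweep maintaining $S(q)$ in a balanced BST keyed by cost, and the same time accounting. You supply more justification for the recurrence than the paper (which simply says ``one can verify''), which is fine; as a minor note, your insistence on choosing the segment with \emph{leftmost} left endpoint in the ``$\geq$'' direction is unnecessary---for any $s_j\in S^\ast$ covering $p_i$, every point of $P[1,i]$ outside $s_j$ already lies strictly left of $l_j$ (since $x(r_j)\ge x(p_i)$), so an arbitrary choice suffices.
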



\section{The unit-disk case}
\label{sec:unit}

In this case, all disks of $S$ have the same radius. We will reduce
the problem to an instance of the 1D problem and then apply
Theorem~\ref{theo:1d}. To this end, we will need to present several observations.

For each disk $s_i$, among all points of $P\cup\{p_0,p_{n+1}\}$ to the right of its center $c_i$,
define $a_r(i)$ as the index of the leftmost point outside $s_i$ (e.g., see Fig.~\ref{fig:toprightpoint}).
Similarly, among all points of $P\cup\{p_0,p_{n+1}\}$ to the left of $c_i$, define
$a_l(i)$ as the index of the rightmost point outside $s_i$. Note that $a_r(i)$ and $a_l(i)$ are well defined due to $p_0$ and $p_{n+1}$.
If $a_l(i)+1<a_r(i)$, then we say that $s_i$ is a {\em useful} disk.

\begin{figure}[t]
\begin{minipage}[t]{\textwidth}
\begin{center}
\includegraphics[height=1.0in]{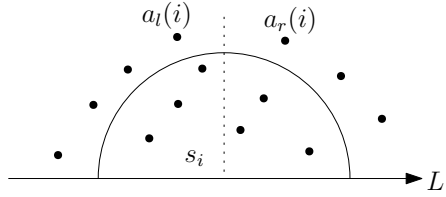}
\caption{\footnotesize Illustrating the two points $a_r(i)$ and $a_l(i)$. The black points are points of $P$. The vertical line is the one through the center of $s_i$. Only the upper half disk of $s_i$ is shown.}
\label{fig:toprightpoint}
\end{center}
\end{minipage}
\vspace{-0.15in}
\end{figure}

Let $P(s_i)$ denote the subset of points of $P$ that are covered by
$s_i$. We further partition $P(s_i)$ into three
subsets as follows. Let $P_l(s_i)$ consist of the points of
$P(s_i)$ strictly to the left of point $a_l(i)$. Let $P_r(s_i)$ consist of the points of
$P(s_i)$ strictly to the right of point $a_r(i)$. Let
$P_m(s_i)=P(s_i)\setminus \{ P_l(s_i)\cup P_r(s_i)\}$.
Observe that $P_m(s_i)\neq \emptyset$ if and only if $s_i$ is a
useful disk, and if $s_i$ is a useful disk, then $P_m(s_i)=P[a_l(i)+1,a_r(i)-1]$.

The following lemma is due to the fact that all disks of $S$ have the same radius and are centered at $L$.

\begin{lemma}\label{lem:unitcover}
Consider a disk $s_i$. If another disk $s_j$ covers the point $a_r(i)$,
then $s_j$ covers all points of $P_r(s_i)$; similarly,
if another disk $s_j$ covers the point $a_l(i)$, then $s_j$ covers all points of $P_l(s_i)$.
\end{lemma}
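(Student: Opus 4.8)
The plan is to exploit the fact that all disks have a common radius $\rho$ and all centers lie on $L$, so that membership of a point $p$ in a disk $s_k$ is governed entirely by the horizontal distance $|x(p)-x(c_k)|$ together with $y(p)$: namely $p\in s_k$ iff $(x(p)-x(c_k))^2+y(p)^2\le \rho^2$, i.e.\ iff $|x(p)-x(c_k)|\le\sqrt{\rho^2-y(p)^2}$. I will prove the first claim (about $a_r(i)$ and $P_r(s_i)$); the second is symmetric by reflecting across the vertical line through $c_i$. So fix disks $s_i$ and $s_j$ with $s_j$ covering the point $q:=p_{a_r(i)}$, and let $p$ be an arbitrary point of $P_r(s_i)$; I must show $p\in s_j$.

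First I would record the geometric configuration. By definition $q$ is the leftmost point of $P\cup\{p_0,p_{n+1}\}$ to the right of $c_i$ that lies \emph{outside} $s_i$, and every point of $P_r(s_i)$ lies strictly to the right of $q$, hence strictly to the right of $c_i$; thus $x(c_i)<x(q)<x(p)$. Since $q\notin s_i$ and $p\notin s_i$ while both are to the right of $c_i$, we have $x(q)-x(c_i)>\sqrt{\rho^2-y(q)^2}$ and $x(p)-x(c_i)>\sqrt{\rho^2-y(p)^2}$. The key quantitative step is to compare the radii $\sqrt{\rho^2-y(q)^2}$ and $\sqrt{\rho^2-y(p)^2}$: I claim $y(p)\ge y(q)$ is \emph{not} what I need directly; rather, what the argument really uses is that $p$ being outside $s_i$ and farther right than $q$ forces $p$ to be ``deep'' relative to $s_j$ as well. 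Concretely, I would argue by cases on the side of $c_j$ on which $q$ lies, the essential point being that $s_j$, having the same radius, behaves on its boundary exactly like $s_i$ does. The cleanest route: because $q\in s_j$, we have $|x(q)-x(c_j)|\le \sqrt{\rho^2-y(q)^2}$. I want $|x(p)-x(c_j)|\le\sqrt{\rho^2-y(p)^2}$.

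The main obstacle — and the heart of the lemma — is handling the $y$-coordinates, since $p$ and $q$ need not be comparable in height. I expect to resolve this by observing that for points of $P$ lying outside $s_i$ and to the right of $c_i$, there is a monotonicity: if $x(c_i)<x(q)<x(p)$ and both are outside $s_i$, then in fact $p$ lies outside the disk of radius $\rho$ centered anywhere in $(-\infty,x(q)]$ at whose boundary $q$ sits — more precisely, the function $t\mapsto \big(\text{does the radius-}\rho\text{ disk centered at }(t,0)\text{ contain }p\text{ but exclude }q\big)$ never holds for $t\le x(c_j)\le x(q)$ once $q$ is excluded by the $c_i$-disk. I would make this rigorous by the following comparison: if $x(c_j)\le x(c_i)$ then $x(p)-x(c_j)\ge x(p)-x(c_i)$ and one shows the disk $s_j$ cannot reach $q$ either (contradiction with $q\in s_j$) unless $y(p)$ is small enough — so actually the only viable case is $x(c_j)>x(c_i)$, where I then compare $x(p)-x(c_j)$ against $x(q)-x(c_j)$ and use that $p$'s excess over the $s_i$-boundary bounds its excess over the $s_j$-boundary. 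If a direct coordinate comparison gets unwieldy, I would fall back on the characterization of $P_r(s_i)$ via $a_r(i)$ (Observation preceding the lemma: $P_m(s_i)=P[a_l(i)+1,a_r(i)-1]$, so $P_r(s_i)=P[a_r(i)+1,\cdot]\cap P(s_i)$), reducing to showing: among right-of-$c_i$ points outside $s_i$, containment in any common-radius $L$-centered disk is ``upward closed'' in the left-to-right order once the first such point $q$ is contained. That monotonicity is exactly what the unit-radius/common-center structure buys, and pinning it down cleanly is the step I expect to spend the most care on; the rest is bookkeeping and a symmetric repeat for the $a_l(i)$ statement.
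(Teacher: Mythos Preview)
Your proposal contains a fundamental misreading of the definition of $P_r(s_i)$. You write ``$p\notin s_i$'' and derive $x(p)-x(c_i)>\sqrt{\rho^2-y(p)^2}$, and you later frame the goal as a monotonicity statement ``among right-of-$c_i$ points outside $s_i$.'' But $P_r(s_i)$ is, by definition, a subset of $P(s_i)$---the points \emph{covered} by $s_i$---so every $p\in P_r(s_i)$ lies \emph{inside} $s_i$, not outside. (You even write $P_r(s_i)=P[a_r(i)+1,\cdot]\cap P(s_i)$ at one point, so the correct definition is in front of you, but the surrounding argument contradicts it.) With the wrong sign on this containment, the entire coordinate comparison you set up is inverted and the case analysis on $x(c_j)$ versus $x(c_i)$ is aimed at the wrong target.

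Once you use the correct fact that $p\in s_i$, the proof becomes much simpler than what you sketch. The paper's argument is purely region-based: let $D$ be the portion of the disk $s_i$ lying to the right of the vertical line through $q=p_{a_r(i)}$. Every point of $P_r(s_i)$ lies in $D$ (it is inside $s_i$ and to the right of $q$). Because $s_i$ and $s_j$ have the same radius, are both centered on $L$, and $s_j$ contains $q$ while $s_i$ does not, one verifies directly that $D\subseteq s_j$---regardless of whether $c_j$ is left or right of $q$. No $y$-coordinate comparison between $p$ and $q$ is needed, and there is no ``main obstacle'' of the kind you describe.
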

\begin{proof}
We only prove the case for $a_r(s_i)$, since the other case is
similar. Let $k=a_r(i)$. Assume that a disk $s_j$ covers the point
$p_k$. Our goal is to prove that $s_j$ covers all points of
$P_r(s_i)$. This is obviously true if $P_r(s_i)=\emptyset$. In the
following, we assume that $P_r(s_i)\neq \emptyset$. This implies that
$x(p_k)<x(r_i)$, where $r_i$ is the rightmost point of $s_i$.
Also, by definition, we have $x(c_i)\leq x(p_k)$, where $c_i$ is the
center of $s_i$.

\begin{figure}[h]
\begin{minipage}[t]{\textwidth}
\begin{center}
\includegraphics[height=0.9in]{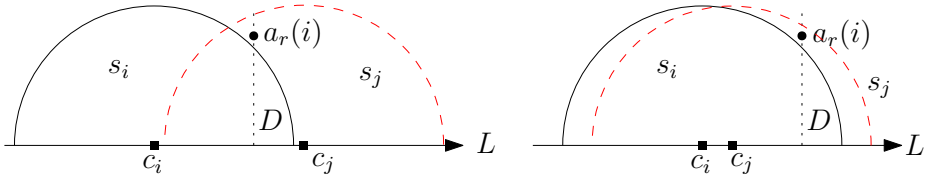}
\caption{\footnotesize Illustrating the proof of Lemma~\ref{lem:unitcover}. The red dashed half-circle is $s_j$ and the black solid half-circle is $s_i$. The two squares on $L$ are the centers of the two disks. Left: $c_j$ is to the right of point $a_r(i)$. Right: $c_j$ is to the left of point $a_r(i)$. In both cases, disk $s_j$ contains the region $D$. }
\label{fig:contain}
\end{center}
\end{minipage}
\vspace{-0.15in}
\end{figure}

Let $D$ be the region of $s_i$ to the right of the vertical line
through $p_k$. By definition, $P_r(s_i)=D\cap P$.
Since $s_i$ and $s_j$ have the same radius and $s_j$
covers $p_k$ while $s_i$ does not, one can verify that $D$ must be
contained in the disk $s_j$, regardless of whether $c_j$ is to the
left or right of $p_k$ (e.g., see Fig.~\ref{fig:contain}). Therefore, $s_j$ covers all points of $P_r(s_i)$.
\qed
\end{proof}

The following lemma will help us to reduce the problem to the 1D
problem.

\begin{lemma}\label{lem:unitopt}
Suppose $S_{opt}$ is an optimal solution subset and $s_i$ is a disk
in $S_{opt}$. Then, the following hold.
\begin{enumerate}
\item
$s_i$ must be a useful disk.
\item
$P_m(s_i)$ has at least one point not covered by any disk of $S_{opt}\setminus \{s_i\}$.
\item
All points of $P_l(s_i)\cup P_r(s_i)$ are covered by the disks of $S_{opt}\setminus \{s_i\}$.
\end{enumerate}
\end{lemma}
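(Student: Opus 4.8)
The plan is to combine the positivity of the weights with Lemma~\ref{lem:unitcover}. The first step is a standard redundancy argument. Since every $w_j>0$ and $S_{opt}$ is optimal, $s_i$ cannot be removable: if every point of $P(s_i)$ were covered by some disk of $S_{opt}\setminus\{s_i\}$, then $S_{opt}\setminus\{s_i\}$ would still cover all of $P$ while having strictly smaller total weight, contradicting optimality. Hence there is a point $p_t\in P(s_i)$ that is covered by no disk of $S_{opt}\setminus\{s_i\}$; call such a $p_t$ a \emph{private} point of $s_i$. Note that $p_t\neq a_l(i)$ and $p_t\neq a_r(i)$, since $p_t$ lies in $s_i$ whereas those two points lie outside $s_i$ by definition. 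Once we show $p_t\in P_m(s_i)$, Statement~2 follows immediately, and Statement~1 follows because $P_m(s_i)\neq\emptyset$ is exactly the condition that $s_i$ is useful (by the observation preceding the lemma, together with $P_m(s_i)=P[a_l(i)+1,a_r(i)-1]$).

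The heart of the proof is to show $p_t\in P_m(s_i)$, i.e., that $p_t$ lies in neither $P_l(s_i)$ nor $P_r(s_i)$; I will argue the $P_r(s_i)$ case, the $P_l(s_i)$ case being symmetric. Suppose for contradiction that $p_t\in P_r(s_i)$. Then $P_r(s_i)\neq\emptyset$, which forces $a_r(i)\le n$: if $a_r(i)=n+1$, then no point of $P$ lies strictly to the right of $p_{n+1}$, so $P_r(s_i)$ would be empty. Thus $p_{a_r(i)}$ is a genuine point of $P$, and by definition it lies outside $s_i$, so it must be covered by some disk $s_j\in S_{opt}$ with $j\neq i$. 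Applying Lemma~\ref{lem:unitcover} to $s_i$ and this $s_j$ (which covers $a_r(i)$), we conclude that $s_j$ covers all points of $P_r(s_i)$, in particular $p_t$ — contradicting that $p_t$ is private to $s_i$. Hence $p_t\notin P_r(s_i)$, and symmetrically $p_t\notin P_l(s_i)$, so $p_t\in P_m(s_i)$. This establishes Statements~1 and~2.

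For Statement~3, take an arbitrary point $p_t\in P_r(s_i)$ (the case $p_t\in P_l(s_i)$ is symmetric). As in the previous paragraph, $P_r(s_i)\neq\emptyset$ forces $a_r(i)\le n$, so $p_{a_r(i)}$ is a point of $P$ lying outside $s_i$, and therefore is covered by some $s_j\in S_{opt}$ with $j\neq i$; Lemma~\ref{lem:unitcover} then yields that $s_j$ covers $p_t$ as well. Hence every point of $P_l(s_i)\cup P_r(s_i)$ is covered by a disk of $S_{opt}\setminus\{s_i\}$, which is Statement~3.

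The one point that requires care — and the only place the argument could slip — is the handling of the sentinel points $p_0$ and $p_{n+1}$: one must verify that whenever $P_r(s_i)$ (resp.\ $P_l(s_i)$) is nonempty, the index $a_r(i)$ (resp.\ $a_l(i)$) actually lies in $[1,n]$, so that the ``disk of $S_{opt}$ covering $p_{a_r(i)}$'' genuinely exists and, since $p_{a_r(i)}\notin s_i$, is distinct from $s_i$. Granting this, the rest is a direct application of Lemma~\ref{lem:unitcover}, and I do not anticipate any further obstacle.
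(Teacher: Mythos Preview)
Your proof is correct and follows essentially the same approach as the paper: both arguments start from the existence of a private point of $s_i$ (using $w_i>0$), then use Lemma~\ref{lem:unitcover} to show that whenever $P_r(s_i)$ (resp.\ $P_l(s_i)$) is nonempty the point $p_{a_r(i)}$ (resp.\ $p_{a_l(i)}$) is a genuine point of $P$ covered by some other disk in $S_{opt}$, which therefore covers all of $P_r(s_i)$ (resp.\ $P_l(s_i)$). The paper organizes this as an explicit four-case split on whether $a_l(i)=0$ and $a_r(i)=n+1$, while you fold the sentinel cases into the observation that $P_r(s_i)\neq\emptyset$ forces $a_r(i)\le n$; the logical content is identical.
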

\begin{proof}
First of all, since $s_i$ is in $S_{opt}$ and $w_i>0$, $s_i$ must cover a
point $p^*\in P$ that is not covered by any other disk of $S_{opt}$.
Depending on whether $a_l(i)=0$ and whether $a_r(i)=n+1$, there are
several cases.

\begin{itemize}
\item
If $a_l(i)=0$ and $a_r(i)=n+1$, then all points of $P$ are covered by
$s_i$. Therefore, $S_{opt}$ has only one disk, which is $s_i$. Further,
$a_l(i)=0$ and $a_r(i)=n+1$ imply that  $P_l(s_i)= P_r(s_i)=
\emptyset$. Hence, the lemma follows.

\item
If $a_l(i)\neq 0$ and $a_r(i)=n+1$, then some disk $s_j$ of
$S_{opt}\setminus\{s_i\}$ must cover the point $a_l(i)$. Then,
by Lemma~\ref{lem:unitcover}, $s_j$ must cover all points of
$P_l(s_i)$. Hence, $p^*\not\in P_l(s_i)$. Since $a_r(i)=n+1$, we have
$P_r(s_i)=\emptyset$. Thus, $p^*$ is in $P_m(s_i)$. Therefore, the
lemma follows.

\item
If $a_l(i)= 0$ and $a_r(i)\neq n+1$, then the proof is analogous to
the above second case and we omit it.

\item
If $a_l(i)\neq 0$ and $a_r(i)\neq n+1$, then by a similar proof as the
above second case, we know that all points of $P_l(s_i)$ are covered
by a disk of $S_{opt}\setminus\{s_i\}$. Similarly, since $a_r(i)\neq n+1$,
we can show that all points of $P_r(s_i)$ are covered
by a disk of $S_{opt}\setminus\{s_i\}$. This implies that $p^*$ is in
$P_m(s_i)$. Therefore, the lemma follows.
\qed
\end{itemize}
\end{proof}

By Lemma~\ref{lem:unitopt}, to find an optimal solution, it is sufficient to consider only useful disks, and
further, for each useful disk $s_i$, it is sufficient to assume that
it only covers the points of $P_m(s_i)=P[a_l(i)+1,a_r(i)-1]$.
This observation leads to the following approach to reduce our problem to an
instance of the 1D problem.

We assume that the indices $a_l(i)$ and $a_r(i)$ for all $i\in [1,m]$
are known.
For each point $p_i$, we project it vertically on $L$, and let $P'$ be
the set of all projected points. For each useful
disk $s_i$, we create a segment on $L$ whose left endpoint has
$x$-coordinate equal to $x(p_{k+1})$ with $k=a_l(i)$ and whose right
endpoint has $x$-coordinate equal to $x(p_{k'-1})$ with $k'=a_r(i)$,
and the weight of the segment is equal to $w_i$. Let $S'$ be the set
of all segments thus defined.
According to the above discussion, an optimal solution to the 1D
problem on $P'$ and $S'$
corresponds to an optimal solution to our original problem on $P$ and
$S$.
By Theorem~\ref{theo:1d}, the 1D problem can be solved in
$O((n+m)\log(n+m))$ time because $|P'|=n$ and $|S'|\leq m$.

It remains to compute the indices $a_l(i)$ and $a_r(i)$ for all $i\in
[1,m]$, which is done in the following lemma.

\begin{lemma}\label{lem:unita}
Computing $a_l(j)$ and $a_r(j)$ for all $j\in [1,m]$ can be done in
$O((n+m)\log(n+m))$ time.
\end{lemma}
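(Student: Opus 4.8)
The plan is to compute all the values $a_r(j)$ in a single left-to-right plane sweep, and then compute all $a_l(j)$ symmetrically by a right-to-left sweep (or equivalently by mirroring). Recall that $a_r(j)$ is the index of the leftmost point of $P\cup\{p_0,p_{n+1}\}$ that lies to the right of the center $c_j$ and \emph{outside} the disk $s_j$. First I would sort all points of $P$ by $x$-coordinate (already done as $p_1,\dots,p_n$) and sort all disk centers $c_1,\dots,c_m$ by $x$-coordinate (already done). The main observation I want to exploit is that for a unit-radius disk $s_j$ with center $c_j$, a point $p_i$ with $x(p_i)\ge x(c_j)$ lies outside $s_j$ if and only if $\|p_i - c_j\| > \rho$, where $\rho$ is the common radius; since all $y(p_i)\ge 0$ and $c_j$ is on $L$, this is a simple comparison of a fixed quantity against $\rho^2$. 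So for each disk I essentially need, among points to the right of $c_j$, the leftmost one whose (squared) distance to $c_j$ exceeds $\rho^2$.

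The key step is to process centers in increasing $x$-order while maintaining a suitable data structure over the points. I would sweep a vertical line and, for the current center $c_j$, restrict attention to points $p_i$ with $x(p_i) > x(c_j)$; among these I need the smallest index $i$ with $(x(p_i)-x(c_j))^2 + y(p_i)^2 > \rho^2$. Equivalently, the points $p_i$ \emph{inside} $s_j$ to the right of $c_j$ form a contiguous prefix of the right-neighbors of $c_j$ (this is exactly the geometric fact that the intersection of a disk with $P$, restricted to one side of the center, is an index-interval of $P$ — which also underlies the definition of $P_r(s_i)$), and $a_r(j)$ is the index just past that prefix. To find this prefix endpoint I would keep the right-neighbors in a balanced binary search tree ordered by $x$-coordinate, augmented so that each node stores the minimum over its subtree of a monotone "reach" quantity; then a single top-down search locates $a_r(j)$ in $O(\log n)$ time. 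As the sweep line advances past each point $p_i$, I delete $p_i$ from the "right" structure, so at the moment we handle $c_j$ the structure contains exactly the points to the right of $c_j$. Each deletion and each query costs $O(\log n)$, giving $O((n+m)\log(n+m))$ overall; adding the initial sort gives the claimed bound.

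The part that needs the most care is arguing that the inside-points on one side of the center really do form a contiguous block of $P$'s index order so that a clean monotone search applies, and choosing the right "reach" key so that the augmented-tree search is valid. Concretely, for a point $p_i$ to the right of $c_j$, being inside $s_j$ means $x(p_i) < x(c_j) + \sqrt{\rho^2 - y(p_i)^2}$, i.e. $x(p_i) - \sqrt{\rho^2 - y(p_i)^2} < x(c_j)$; so if I define, for each point $p_i$, the left reach $\lambda_i = x(p_i) - \sqrt{\rho^2 - y(p_i)^2}$ (defined as $+\infty$ when $y(p_i) > \rho$, since such a point is never covered), then among the points to the right of $c_j$ the ones inside $s_j$ are precisely those with $\lambda_i < x(c_j)$, and $a_r(j)$ is the smallest index $i > $ (index of last point left of $c_j$) with $\lambda_i \ge x(c_j)$. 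Storing $\min \lambda_i$ over subtrees in the BST keyed by $x$-coordinate makes this a standard predecessor-style descent. The symmetric quantity $\mu_i = x(p_i) + \sqrt{\rho^2 - y(p_i)^2}$ handles $a_l(j)$ in the right-to-left sweep. I would also note the easy edge cases: if no right-neighbor of $c_j$ is inside $s_j$ then $a_r(j)$ is simply the index of the leftmost point right of $c_j$ (or $n+1$ if there is none), which the data structure returns automatically.
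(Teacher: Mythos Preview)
Your reduction of ``$p_i$ to the right of $c_j$ is inside $s_j$'' to the inequality $\lambda_i < x(c_j)$ with $\lambda_i = x(p_i) - \sqrt{\rho^2 - y(p_i)^2}$ is correct, and an augmented BST query per centre does achieve the stated bound. Two points need correcting, though. First, the claim that the inside points on the right of $c_j$ form a contiguous prefix of the right-neighbours is false: for the unit disk centred at the origin and the points $(0.1,0.5)$, $(0.5,0.9)$, $(0.8,0.1)$, the middle one is outside while the other two are inside. Fortunately your algorithm does not need this claim, since all you actually require is the leftmost right-neighbour with $\lambda_i \ge x(c_j)$. Second, to locate that leftmost point you must augment each subtree with the \emph{maximum} of $\lambda$, not the minimum: at a node, recurse left if the left subtree's max is at least $x(c_j)$, else test the node itself, else recurse right. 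Storing subtree minima does not support this search (knowing the left subtree's minimum is below the threshold tells you nothing about whether some element there is above it).

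The paper's proof takes a different and simpler route that exploits the unit-disk constraint more directly. It sweeps left to right, maintaining a FIFO queue $Q$ of disks whose centres have been passed but whose $a_r$ is still unknown. On reaching a point $p_i$, it repeatedly pops the front disk of $Q$ (the one with leftmost centre) and sets its $a_r$ to $i$, as long as that disk fails to cover $p_i$; the key observation is that once the front disk covers $p_i$, every disk in $Q$ does, because a unit disk centred farther right on $L$ is at least as close to $p_i$. This makes the sweep itself $O(n+m)$ after sorting. Your augmented-tree approach spends an extra logarithmic factor per centre, which is still within the lemma's bound but a little looser; on the other hand, your $\lambda_i$ formulation makes the correctness argument entirely local to each query and does not rely on the monotonicity property the paper's queue argument uses.
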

\begin{proof}
We only describe how to compute $a_r(j)$ for all $j\in
[1,m]$, and the algorithm for $a_l(j)$ is similar.

We sweep the plane with a vertical line $l$ from left to right, and
an event happens if $l$ encounters a point of $P$ or a disk center. For
this, we first sort all points of $P$ and all disk centers, in
$O((n+m)\log(n+m))$ time.
During the sweeping, we maintain a list $Q$ of disks $s_i$ whose
centers have been
swept and whose indices $a_r(i)$ have not been computed yet.
$Q$ is just a first-in-first-out queue storing the disks ordered by
their centers from left to right.
Initially, $Q=\emptyset$.

During the sweeping, if $l$ encounters the center of a disk $s_j$, we
add $s_j$ to the rear of $Q$. If $l$ encounters a point $p_i$, then we
process it as follows.  Starting from the front disk $s_j$ of $Q$, we
check whether $s_j$ covers $p_i$.
If yes, then one can verify that every disk
in $Q$ covers $p_i$, and thus in this case we finish
processing $p_i$. Otherwise, we remove $s_j$ from $Q$ and set $a_r(j)=i$,
after which we proceed on
the next disk in $Q$ (if $Q$ becomes $\emptyset$, then we finish
processing $p_i$). If $Q$ is not empty after $p_n$ is processed,
then we set $a_r(j)=n+1$ for all $s_j\in Q$.

The running time of the sweeping algorithm after sorting is $O(n+m)$.
The lemma thus follows. \qed
\end{proof}

With the preceding lemma, we have the following theorem.
\begin{theorem}
The line-constrained disk coverage problem for unit disks is solvable in $O((n+m)\log(n+m))$
time.
\end{theorem}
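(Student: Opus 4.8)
The plan is to assemble the pieces already in place: the reduction to the 1D problem described just before Lemma~\ref{lem:unita}, together with Theorem~\ref{theo:1d}. Concretely, I would first apply Lemma~\ref{lem:unita} to compute the indices $a_l(i)$ and $a_r(i)$ for all $i\in[1,m]$ in $O((n+m)\log(n+m))$ time, which also identifies the useful disks (those with $a_l(i)+1<a_r(i)$). I would then form the 1D instance $(P',S')$ exactly as described: $P'$ is the set of vertical projections of the points of $P$ onto $L$, and for each useful disk $s_i$, $S'$ contains one weighted segment with left endpoint at $x(p_{a_l(i)+1})$, right endpoint at $x(p_{a_r(i)-1})$, and weight $w_i$. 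Note $|P'|=n$ and $|S'|\le m$, and $S'$ can be built in $O(n+m)$ time once the $a_l,a_r$ values and the sorted order of $P$ are available.

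The key step is to verify that this reduction is value-preserving, i.e.\ the optimal objective value of the 1D instance on $(P',S')$ equals that of the unit-disk instance on $(P,S)$. For one direction, the segment of $S'$ created for a useful disk $s_i$ covers, after identifying each point with its projection, exactly $P_m(s_i)=P[a_l(i)+1,a_r(i)-1]\subseteq P(s_i)$; hence any subset of $S'$ that covers $P'$ corresponds to a subset of $S$ of the same total weight that covers $P$, so the 1D optimum is no smaller than the original optimum. For the reverse direction, take an optimal solution subset $S_{opt}$ of $S$. By the first part of Lemma~\ref{lem:unitopt}, every $s_i\in S_{opt}$ is useful and hence has a corresponding segment in $S'$; by the second and third parts, the points of $P(s_i)$ lying outside $P_m(s_i)$ are all covered by the other disks of $S_{opt}$. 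Since this holds simultaneously for every disk of $S_{opt}$, the set of segments corresponding to $S_{opt}$ covers all of $P'$, so the original optimum is no smaller than the 1D optimum. The two optima are therefore equal, and an optimal segment subset translates directly into an optimal disk subset of the same weight.

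Finally, I would invoke Theorem~\ref{theo:1d} on $(P',S')$; since $|P'|+|S'|\le n+m$, this runs in $O((n+m)\log(n+m))$ time and returns an optimal segment subset, which we map back to disks. Adding the $O((n+m)\log(n+m))$ cost of Lemma~\ref{lem:unita} and the $O(n+m)$ construction time gives a total of $O((n+m)\log(n+m))$, as claimed.

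I expect the only genuinely subtle point to be the reverse direction of the value-preservation argument: when every disk of $S_{opt}$ is simultaneously restricted to its middle portion $P_m$, one must be sure that no point of $P$ becomes uncovered. This is precisely what Lemma~\ref{lem:unitopt}, which rests on the same-radius property through Lemma~\ref{lem:unitcover}, is designed to guarantee; with those lemmas and Lemma~\ref{lem:unita} available, the theorem itself follows without further difficulty.
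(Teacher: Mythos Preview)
Your proposal is correct and follows essentially the same route as the paper: compute the indices $a_l(i),a_r(i)$ via Lemma~\ref{lem:unita}, build the 1D instance $(P',S')$ with $|P'|=n$ and $|S'|\le m$, invoke Theorem~\ref{theo:1d}, and add up the running times. You in fact spell out the value-preservation argument in both directions more explicitly than the paper does, and you correctly identify the one delicate point (that restricting every disk of $S_{opt}$ to its middle portion $P_m$ simultaneously still covers all of $P$), which is exactly what Lemma~\ref{lem:unitopt} together with Lemma~\ref{lem:unitcover} is meant to ensure.
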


\section{The $L_1$ case}
\label{sec:l1}

In this case, each disk of $S$ is a diamond, whose boundary is comprised of four edges of slopes $1$ and $-1$, but the diamonds of $S$ may have different radii. We show that the problem
can be solved in $O((n+m)\log(n+m))$ time by similar techniques to the unit-disk case in
Section~\ref{sec:unit}.

For each diamond $s_i\in S$, we still define the two indices $a_l(i)$ and
$a_r(i)$ as well as the three subsets $P_l(s_i)$, $P_r(s_i)$, and
$P_m(s_i)$ in exactly the same way as in Section~\ref{sec:unit}.
We still call $s_i$ a {\em useful} disk if $a_l(i)+1<a_r(i)$.

Although the disks may have different radii, the
geometric properties of the $L_1$ metric guarantee that
Lemma~\ref{lem:unitcover} still applies. The proof is literally the same
as before (indeed, one can verify that the region $D$ must be
contained in the diamond $s_j$; e.g., see Fig.~\ref{fig:containl1} as a
counterpart of Fig.~\ref{fig:contain}), so we omit it.
As Lemma~\ref{lem:unitopt} mainly relies on Lemma~\ref{lem:unitcover},
it also applies here. Consequently, once the indices $a_r(j)$ and
$a_l(j)$ for all $j\in [1,m]$ are known, we can use the same algorithm
as before to find an optimal solution in $O((n+m)\log(n+m))$ time.
The algorithm for computing the indices $a_r(j)$ and $a_l(j)$,
however, is not the same as before in Lemma~\ref{lem:unita}. We
provide a new algorithm in the following lemma.

\begin{figure}[t]
\begin{minipage}[t]{\textwidth}
\begin{center}
\includegraphics[height=1.0in]{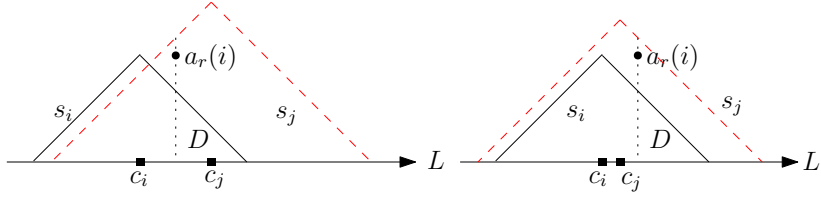}
\caption{\footnotesize Illustrating the proof of Lemma~\ref{lem:unitcover} for the $L_1$ case, as a counterpart of Fig.~\ref{fig:contain}. Now both $s_i$ and $s_j$ are diamonds (only the upper halves are shown). Left: $c_j$ is to the right of point $a_r(i)$. Right: $c_j$ is to the left of point $a_r(i)$. In both cases, $s_j$ contains the region $D$. }
\label{fig:containl1}
\end{center}
\end{minipage}
\vspace{-0.15in}
\end{figure}

\begin{lemma}\label{lem:l1a}
Computing $a_l(j)$ and $a_r(j)$ for all $j\in [1,m]$ can be done in
$O((n+m)\log(n+m))$ time.
\end{lemma}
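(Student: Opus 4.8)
The plan is to follow the same reduction-to-1D strategy as in Section~\ref{sec:unit}, but to replace the FIFO-queue subroutine of Lemma~\ref{lem:unita} by one based on a priority queue, since the monotonicity exploited there breaks once the diamonds may have different radii. The first step is to translate the defining condition of $a_r(i)$ into a pure threshold test. Let $\rho_i$ denote the radius of the diamond $s_i$, so $x(r_i)=x(c_i)+\rho_i$ and $x(l_i)=x(c_i)-\rho_i$. Because every point of $P$ lies on or above $L$, a point $p$ with $x(p)\ge x(c_i)$ lies outside $s_i$ precisely when $(x(p)-x(c_i))+y(p)>\rho_i$, i.e., when $x(p)+y(p)>x(r_i)$. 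Thus, writing $u(p)=x(p)+y(p)$, among the points to the right of $c_i$ the point $p$ is outside $s_i$ iff $u(p)>x(r_i)$, a comparison that no longer involves $x(c_i)$ at all. Symmetrically, writing $w(p)=x(p)-y(p)$, a point $p$ with $x(p)\le x(c_i)$ lies outside $s_i$ iff $w(p)<x(l_i)$.

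Next I would compute all $a_r(j)$ by a left-to-right sweep. Sort the disk centers together with the points of $P$ (this is the $O((n+m)\log(n+m))$ step), breaking ties so that a disk center is processed before any point at the same $x$-coordinate. Sweep a vertical line over this sorted list and maintain a min-heap $Q$ whose elements are the disks $s_j$ whose center has already been passed but whose index $a_r(j)$ is not yet set, keyed by $x(r_j)$. When the sweep reaches a center $c_j$, insert $s_j$ into $Q$ with key $x(r_j)$. When the sweep reaches a point $p_i$, repeatedly examine the minimum-key disk $s_j$ of $Q$: if $x(r_j)<u(p_i)$, set $a_r(j)=i$, delete $s_j$ from $Q$, and continue; otherwise stop. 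After $p_n$ has been processed, set $a_r(j)=n+1$ for every disk still in $Q$. The indices $a_l(j)$ are obtained by the mirror-image sweep from right to left, using a max-heap keyed by $x(l_j)$ and the threshold $w(p_i)$ (equivalently, by reflecting the instance across a vertical line and reusing the $a_r$ procedure), and any disk surviving to the end gets $a_l(j)=0$.

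For correctness, observe that a disk $s_j$ stays in $Q$ from the moment $c_j$ is passed until it is deleted, so when a point $p_i$ is processed every disk in $Q$ has its center to the left of $p_i$; hence the threshold characterization applies, $p_i$ lies outside $s_j$ iff $x(r_j)<u(p_i)$, and the heap operations remove exactly the disks of $Q$ satisfying this. Since points are processed in increasing $x$-order and a disk is removed the first time such a point is met, the index assigned is indeed that of the leftmost point of $P$ to the right of $c_j$ lying outside $s_j$; if no such point exists, $s_j$ survives and $a_r(j)=n+1$, consistent with the role of $p_{n+1}$ in the definition (and symmetrically for $p_0$ and $a_l$). Each disk is inserted into and extracted from $Q$ exactly once, so the sweeps cost $O(m\log m)$ beyond the sorting, giving the claimed $O((n+m)\log(n+m))$ bound.

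The only real obstacle is the very first step: recognizing that the FIFO queue of Lemma~\ref{lem:unita} cannot be reused, because with unequal radii the disk whose center is leftmost need not be the hardest one to cover $p_i$ — a disk deeper in the queue can have a much larger radius and still contain $p_i$. Once one sees that the correct order in which to keep the pending disks is by their right extent $x(r_j)$ (resp.\ left extent $x(l_j)$) rather than by center, and that the $L_1$ coverage test collapses to the single comparison $u(p_i)>x(r_j)$ (resp.\ $w(p_i)<x(l_j)$), the algorithm and its analysis are routine, which establishes the lemma.
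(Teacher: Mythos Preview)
Your proposal is correct and takes essentially the same approach as the paper: both sweep left to right over the sorted union of centers and points, maintain the pending diamonds in a priority structure keyed by the $x$-coordinate of their rightmost point $x(r_j)$, and at each point $p_i$ repeatedly extract the minimum-key diamond that fails to cover $p_i$. Your explicit derivation of the threshold test $u(p_i)>x(r_j)$ makes precise what the paper leaves as ``one can verify that every diamond in $Q$ covers $p_i$'', and your use of a heap in place of the paper's balanced BST is an immaterial implementation choice.
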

\begin{proof}
We only describe how to compute $a_r(j)$ for all $i\in
[1,m]$, and the algorithm for $a_l(i)$ is similar.

We sweep the plane with a vertical line $l$ from left to right, and
an event happens if $l$ encounters a point of $P$ or the center of a diamond $s_j$.
For this, we first sort all points of $P$ and the centers of all
diamonds in $O((n+m)\log(n+m))$ time.
During the sweeping, we maintain a list $Q$ of diamonds $s_i$ whose
centers have been swept and whose indices $a_r(i)$ have not been computed yet.
We store the diamonds of $Q$ by a balanced binary search tree with the $x$-coordinates of the
{\em rightmost points} of the diamonds as the keys.
Initially, $Q=\emptyset$.

\begin{figure}[t]
\begin{minipage}[t]{\textwidth}
\begin{center}
\includegraphics[height=1.0in]{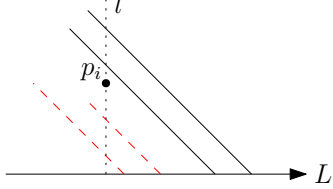}
\caption{\footnotesize Illustrating an event when $l$ encounters a point $p_i\in P$. Four diamonds (only their upper right edges are shown) are in $Q$. To process the event, the two red dashed diamonds will be removed from $Q$, and their indices $a_r(j)$ will be set to $i$.}
\label{fig:sweep1d}
\end{center}
\end{minipage}
\vspace{-0.15in}
\end{figure}

During to the sweeping, if $l$ encounters the center of a diamond
$s_j$, then we insert $s_j$ into $Q$. If $l$ encounters a point $p_i$, then we process it as follows.
Find the diamond $s_j$ in $Q$ with the smallest key (i.e., the diamond of $Q$ whose rightmost point is the leftmost). If $s_j$ covers $p_i$, then one can verify that every diamond
in $Q$ covers $p_i$, and thus in this case we finish
processing $p_i$. Otherwise (e.g., see Fig.~\ref{fig:sweep1d}),
we delete $s_j$ from $Q$ and set $a_r(j)=i$,
after which we proceed on the next diamond in $Q$ with the smallest key
(if $Q$ becomes $\emptyset$, then we finish processing $p_i$).
If $Q$ is not empty after $p_n$ is processed,
then we set $a_r(j)=n+1$ for all $s_j\in Q$.

The running time of the sweeping algorithm after sorting is $O(n+m)$.
The lemma thus follows.
\qed
\end{proof}

\begin{theorem}
The line-constrained disk coverage problem in the $L_1$ metric is solvable in $O((n+m)\log(n+m))$
time.
\end{theorem}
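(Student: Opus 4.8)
The plan is to mirror the reduction used in the unit-disk case (Section~\ref{sec:unit}) almost verbatim, since the two structural facts it relies on, Lemma~\ref{lem:unitcover} and Lemma~\ref{lem:unitopt}, have already been argued to carry over to diamonds. Concretely, I would first invoke Lemma~\ref{lem:l1a} to compute the indices $a_l(j)$ and $a_r(j)$ for every $j\in[1,m]$ in $O((n+m)\log(n+m))$ time; this is the only step that genuinely differs from the unit-disk construction, and it is the one place where the $L_1$ geometry is exploited (via the rightmost-point keys in the sweep tree).

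Next I would assemble the 1D instance exactly as before: project each $p_i$ vertically onto $L$ to form $P'$ (so $|P'|=n$), and for each useful diamond $s_i$ (those with $a_l(i)+1<a_r(i)$) create a weighted segment on $L$ whose endpoints have $x$-coordinates $x(p_{a_l(i)+1})$ and $x(p_{a_r(i)-1})$ and whose weight is $w_i$; let $S'$ be the resulting set, so $|S'|\le m$. Correctness of this reduction is immediate from Lemma~\ref{lem:unitopt}: in any optimal solution subset every chosen diamond $s_i$ is useful, contributes a point of $P_m(s_i)=P[a_l(i)+1,a_r(i)-1]$ not covered by the others, and has all of $P_l(s_i)\cup P_r(s_i)$ covered by the others, so it is harmless to pretend that $s_i$ covers precisely $P_m(s_i)$, which is exactly what the segment built for $s_i$ encodes. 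Hence an optimal solution to the 1D problem on $(P',S')$ yields an optimal solution on $(P,S)$ and vice versa.

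Finally I would apply Theorem~\ref{theo:1d} to $(P',S')$, which runs in $O((|P'|+|S'|)\log(|P'|+|S'|))=O((n+m)\log(n+m))$ time. Adding the $O((n+m)\log(n+m))$ cost of Lemma~\ref{lem:l1a} and the $O(n)$ cost of forming $P'$ and $S'$, the total is $O((n+m)\log(n+m))$. There is no real obstacle left at the level of this theorem: all the genuine work has been front-loaded into verifying that Lemma~\ref{lem:unitcover} (hence Lemma~\ref{lem:unitopt}) survives for diamonds — the region $D$ to the right of the vertical line through $a_r(i)$ is still contained in any diamond $s_j$ that covers $a_r(i)$, regardless of whether $c_j$ lies left or right of that line — and into the new sweep of Lemma~\ref{lem:l1a}; the present statement is just the bookkeeping that combines these ingredients with Theorem~\ref{theo:1d}.
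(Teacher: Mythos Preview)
Your proposal is correct and follows essentially the same approach as the paper: the theorem in the paper is stated without a separate proof, relying on the preceding discussion that Lemmas~\ref{lem:unitcover} and~\ref{lem:unitopt} carry over to diamonds, so that the reduction to the 1D problem via $P'$ and $S'$ from Section~\ref{sec:unit} applies verbatim once the indices $a_l(j),a_r(j)$ are computed by Lemma~\ref{lem:l1a}. Your write-up makes this chain of reasoning explicit and is faithful to the paper's argument.
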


\section{The $L_{\infty}$ and $L_2$ cases}
\label{sec:general}

In this section, we give our algorithms for the $L_{\infty}$ and $L_2$ cases.
The algorithms are similar in the high level. However, the nature of the $L_2$ metric makes the $L_2$ case more involved in the low level computations. In Section~\ref{sec:scheme}, we present a high-level algorithmic scheme that works for both metrics. Then, we complete the algorithms for $L_{\infty}$ and $L_2$ cases in Sections~\ref{sec:linfty} and \ref{sec:l2}, respectively.

\subsection{An algorithmic scheme for $L_{\infty}$ and $L_2$ metrics}
\label{sec:scheme}

In this subsection, unless otherwise stated, all statements are applicable to both metrics.
Note that  a disk in the $L_{\infty}$ metric  is a square.

For a disk $s_k\in S$, we say that a subsequence $P[i,j]$ of $P$ with $1\leq i\leq j\leq n$ is a {\em maximal subsequence covered} by $s_k$ if all points of $P[i,j]$ are covered by $s_k$ but neither $p_{i-1}$ nor $p_{j+1}$ is covered by $s_k$ (it is well defined due to $p_0$ and $p_{n+1}$).
Let $F(s_k)$ be the set of all maximal subsequences covered by $s_k$. Note that the subsequences of $F(s_k)$ are pairwise disjoint.


\begin{lemma}\label{lem:optgeneral}
Suppose $S_{opt}$ is an optimal solution subset and $s_k$ is a disk
of $S_{opt}$. Then, there is a subsequence $P[i,j]$ in $F(s_k)$ such that the following hold.
\begin{enumerate}
\item
$P[i,j]$ has a point that is not covered by any disk in $S_{opt}\setminus\{s_k\}$.
\item
For any point $p\in P$ that is covered by $s_k$ but is not in $P[i,j]$, $p$ is covered by a disk in $S_{opt}\setminus\{s_k\}$.
\end{enumerate}
\end{lemma}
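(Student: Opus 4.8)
The plan is to mimic the structure of the proof of Lemma~\ref{lem:unitopt}, but to replace the global partition $P_l(s_k) \cup P_m(s_k) \cup P_r(s_k)$ — which relied on all disks having equal radius — with the finer partition of $P(s_k)$ into the maximal runs recorded in $F(s_k)$. First I would fix the disk $s_k \in S_{opt}$ and, using $w_k > 0$, pick a point $p^* \in P$ that is covered by $s_k$ but by no other disk of $S_{opt}$ (such a point must exist, for otherwise $S_{opt}\setminus\{s_k\}$ would be a feasible solution of strictly smaller weight, contradicting optimality). Since $p^*$ is covered by $s_k$, it lies in exactly one maximal subsequence $P[i,j] \in F(s_k)$; I claim this $P[i,j]$ is the one the lemma asserts. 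Property~1 is then immediate, since $p^* \in P[i,j]$ is not covered by any disk of $S_{opt}\setminus\{s_k\}$.

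The substance is Property~2: every point $p \in P(s_k)$ that lies outside $P[i,j]$ must be covered by some disk of $S_{opt}\setminus\{s_k\}$. Suppose, for contradiction, that some such $p$ is covered only by $s_k$ among the disks of $S_{opt}$. Without loss of generality assume $p$ lies to the right of $P[i,j]$ (the left case is symmetric), so $p \in P[i',j']$ for some other maximal run of $F(s_k)$ with $i' > j$; in particular $x(p_{j+1}) \le x(p) $ and, by maximality of $P[i,j]$, the point $p_{j+1}$ is \emph{not} covered by $s_k$, while both $p_j$ and $p_{j+1}$ lie to the right of the center $c_k$ or straddle it. The key geometric step — and the place where the $L_\infty$ and $L_2$ arguments will differ in the low-level detail but agree in spirit — is to show that any single disk $s_\ell \in S$ (of arbitrary radius, centered on $L$) that covers both $p_{j+1}$ and $p$ must in fact cover everything of $P(s_k)$ lying weakly between them, including $p_j$; equivalently, the region $s_k \cap \{x \ge x(p_{j+1})\}$ is ``monotone'' enough that a disk escaping $s_k$ through the gap at $p_{j+1}$ and recapturing $p$ further right is forced to swallow the intervening portion of $s_k$. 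I would isolate this as the analogue of Lemma~\ref{lem:unitcover}; for squares it is an easy interval/vertical-slab argument, and for Euclidean disks it follows from the fact that the right boundary arc of $s_k$ to the right of the vertical line through $c_k$ is concave toward $L$, so a larger or equal-radius disk pinned down at two of its points contains the chord region between them.

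Granting that geometric fact, the contradiction closes quickly: the point $p_{j+1}$ is covered by \emph{some} disk of $S_{opt}$ (feasibility), call it $s_\ell$, and $s_\ell \ne s_k$ since $s_k$ does not cover $p_{j+1}$; but $s_\ell$ also covers $p$ (we assumed $p$ is covered only by $s_k$ in $S_{opt}$ — contradiction unless $p$ is to the \emph{left} of $p_{j+1}$), so I instead run the argument with the witness that $p$ is covered only by $s_k$: take the disk of $S_{opt}$ covering $p_{j+1}$; it cannot be $s_k$, and if it also covered $p$ we are done, and if it does not, we push one index further and repeat, using that the union of $S_{opt}$ covers the whole stretch $P[j+1,j']$ and that each covering disk, once it covers a point to the right of $p$, must by the monotonicity fact have already covered $p$. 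Making this ``sweep from $p_{j+1}$ toward $p$'' argument airtight — in particular handling the case where $c_k$ lies between $P[i,j]$ and $p$, so that one must argue separately on the left and right halves of $s_k$ — is the main obstacle; everything else is bookkeeping parallel to Lemma~\ref{lem:unitopt}. I would present the clean version as: let $s_\ell \in S_{opt}$ cover $p$; then $x(l_\ell) > x(p_{j+1})$ would force $s_\ell$ to miss $p_{j+1}$ yet $s_\ell$ covers $p$ to its right, which by the monotonicity lemma applied to $s_\ell$ (not $s_k$) means $p_{j+1}$ — between $l_\ell$'s slab and $p$ — is actually outside $s_\ell$, consistent, but then $s_\ell \ne s_k$ and $s_\ell$ covers $p$, contradicting that only $s_k$ covers $p$; hence no such $p$ exists and Property~2 holds.
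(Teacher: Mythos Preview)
Your setup (pick $p^*$, let $P[i,j]$ be its maximal run, Property~1 is immediate) matches the paper exactly. The trouble is entirely in your argument for Property~2, which never converges and is missing the one idea that makes the paper's proof work.

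The paper does \emph{not} argue by contradiction and does not try to track a chain of covering disks from $p_{j+1}$ toward $p$. Instead it proves directly that the single disk $s_t\in S_{opt}\setminus\{s_k\}$ covering the gap point $p_{i-1}$ (respectively $p_{j+1}$) already covers \emph{every} point of $P(s_k)$ on that side of $P[i,j]$. The key step you are missing is the use of $p^*$ to pin down the \emph{center} of $s_t$: since $s_t$ does not cover $p^*$ (only $s_k$ does) while $s_k$ does, and $p^*$ lies to the right of $p_{i-1}$, one gets $x(c_t)\le x(c_k)$. With that center inequality in hand, together with the fact that $s_t$ contains $p_{i-1}$ while $s_k$ does not, the entire region $D=s_k\cap\{x\le x(p_{i-1})\}$ is contained in $s_t$; hence any $p_h$ with $h\le i-1$ that lies in $s_k$ automatically lies in $s_t$. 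That is the whole proof.

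Your ``monotonicity'' claim (a disk covering both $p_{j+1}$ and $p$ must cover all of $P(s_k)$ between them) is never what you actually need, and your attempts to use it collapse: in the ``clean version'' you take $s_\ell\in S_{opt}$ covering $p$, but by your contradiction hypothesis the only such disk is $s_k$ itself, so the chain of implications that follows is vacuous. The sweep-and-repeat idea before that is also not well-founded, since $p$ need not be anywhere near $p_{j+1}$. The fix is not to patch the sweep but to abandon it: constrain $x(c_t)$ via $p^*$ as above, and the region containment $D\subseteq s_t$ (valid in both $L_\infty$ and $L_2$ for disks centered on $L$) gives Property~2 in one stroke.
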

\begin{proof}
First of all, $s_k$ must cover a point $p^*$ that is not covered by any disk in $S_{opt}\setminus\{s_k\}$.
Since the subsequences of $F(s_k)$ are pairwise disjoint, $p^*$ is in
a unique subsequence $P[i,j]$ of $F(s_k)$. In the following, we show
that $P[i,j]$ has the property as stated in the lemma.

Consider any point $p_h\in P$ that is covered by $s_k$ but is not in
$P[i,j]$. By the definition of maximal sequences,
either $h\leq i-1$ or $h\geq j+1$. We only discuss the
case $h\leq i-1$ since the other case is similar. In the following, we
show that $p_h$ must be covered by a disk in
$S_{opt}\setminus\{s_k\}$, which will prove the lemma.

\begin{figure}[h]
\begin{minipage}[t]{\textwidth}
\begin{center}
\includegraphics[height=1.0in]{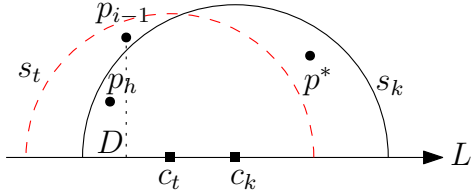}
\caption{\footnotesize Illustrating the proof of Lemma~\ref{lem:optgeneral}. The red dashed half-circle shows disk $s_t$, which covers $p_{i-1}$, and $x(c_t)\leq x(c_k))$. The disk $s_t$ must also cover the point $p_h$.}
\label{fig:cover}
\end{center}
\end{minipage}
\vspace{-0.15in}
\end{figure}

By the definition of maximal sequences, neither $p_{i-1}$ nor $p_{j+1}$ is covered by $s_k$.
Since $S_{opt}$ is an optimal solution,
$S_{opt}\setminus\{s_k\}$ must have a disk $s_t$ that covers
$p_{i-1}$. According to the above discussion, $s_t$ does not cover
$p^*$. Since $p^*$ is to the right of $p_{i-1}$, the center $c_t$ of
$s_t$ cannot be to the right of the center $c_k$ of $s_k$, since
otherwise $s_t$ would cover $p^*$ as well because $s_k$ covers $p^*$.
Let $D$ be the region of $s_k$
to the left of the vertical line through $p_{i-1}$. It is easy to see
that $p_h$ is in $D$ (e.g., see Fig.~\ref{fig:cover}). Since
$x(c_t)\leq x(c_k)$ and $p_{i-1}$ is in $s_t$ but not in $s_k$, one
can verify that $D$ is contained in $s_t$. Thus, $p_h$ must be covered
by $s_t$.
\qed
\end{proof}

In light of Lemma~\ref{lem:optgeneral}, we reduce the problem to an instance of the 1D problem with a point set $P'$ and a line segment set $S'$, as follows.

For each point of $P$, we vertically project it on $L$, and
the set $P'$ is comprised of all such projected points. Thus $P'$ has exactly $n$ points.
For any $1\leq i\leq j\leq n$, we use $P'[i,j]$ to denote the
projections of the points of $P[i,j]$. For each point $p_i\in P$, we use $p_i'$ to denote its projection point in $P'$.

The set $S'$ is defined as follows. For each disk $s_k\in S$ and each
subsequence $P[i,j]\in F(s_k)$, we create a segment for $S'$, denoted
by $s[i,j]$, with left endpoint at $p_i'$ and right endpoint at $p'_j$. Thus,
$s[i,j]$ covers exactly the points of $P'[i,j]$.
We set the weight of $s[i,j]$ to $w_k$. Note that if $s[i,j]$ is already in $S'$, which
is defined by another disk $s_h$, then we only need to
update its weight to $w_k$ in case $w_k<w_h$ (so each segment appears only
once in $S'$). We say that $s[i,j]$ is defined by $s_k$ (resp., $s_h$)
if its weight is equal to $w_k$ (resp., $w_h$).


According to Lemma~\ref{lem:optgeneral}, we intend to say that an
optimal solution $OPT'$ to the 1D problem on $P'$ and $S'$
corresponds to an optimal solution $OPT$ to the original problem on
$P$ and $S$ in the following sense: if a segment $s[i,j]\in S'$ is
included in $OPT'$, then we include the disk that defines $s[i,j]$ in
$OPT$. However, since a disk of $S$ may define multiple
segments of $S'$, to guarantee the correctness of the above correspondence,
we need to show that $OPT'$ is a {\em valid solution}: no
two segments in $OPT'$ are defined by the same disk of $S$. For this,
we have the following lemma.

\begin{lemma}\label{lem:valid}
Any optimal solution on $P'$ and $S'$ is a valid solution.
\end{lemma}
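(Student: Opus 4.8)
The plan is to argue by contradiction. Suppose $OPT'$ is an optimal solution on $P'$ and $S'$ that is not valid, so there exist two distinct segments $s[i,j]$ and $s[i',j']$ in $OPT'$ that are defined by the same disk $s_k \in S$. Both segments arise from maximal subsequences in $F(s_k)$, and since the subsequences of $F(s_k)$ are pairwise disjoint, $P[i,j]$ and $P[i',j']$ are disjoint; without loss of generality assume $P[i,j]$ lies to the left of $P[i',j']$, i.e. $j < i'$. The weight of each of these two segments equals $w_k$ (since they are defined by $s_k$), so together they contribute $2w_k$ to the objective value of $OPT'$.

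The key idea is that $s_k$, being a single disk in the $L_\infty$ or $L_2$ metric centered on $L$, covers a \emph{contiguous} range of $P$ — more precisely, the union of all maximal subsequences in $F(s_k)$ is exactly the set $P(s_k)$ of points of $P$ covered by $s_k$, and this set is $P[a,b]$ for some $a \le b$ (the points covered by a disk centered on $L$ form a consecutive block in the $x$-sorted order, a fact that also underlies Lemma~\ref{lem:optgeneral} via the region-containment argument). Hence there is a segment in $S'$ with left endpoint $p_i'$ and right endpoint $p_j'$... but more usefully, $s_k$ also defines a segment covering $P'[i, j']$? Not necessarily, since $P[j+1, i'-1]$ may contain points not covered by $s_k$. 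Instead I would reason about what $OPT'$ covers. I would consider replacing the two segments $s[i,j]$ and $s[i',j']$ in $OPT'$ by a \emph{single} cheaper configuration, or show directly that one of them is redundant: every point of $P'[i,j]$ and of $P'[i',j']$ is covered by $s_k$, and I want to show that the remaining segments of $OPT'$ already cover one of these blocks, contradicting optimality (by positivity of $w_k$, dropping a redundant segment strictly decreases the total weight).

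The main obstacle, and the heart of the argument, will be showing this redundancy. Here I would invoke Lemma~\ref{lem:optgeneral}'s style of reasoning adapted to the 1D instance: because $P[j+1, i'-1]$ contains at least one point $p_h$ not covered by $s_k$ (that is precisely why $P[i,j]$ and $P[i',j']$ are \emph{separate} maximal subsequences rather than one), the solution $OPT'$ must contain some segment $s[i'',j''] \in OPT'$ covering $p_h'$; this segment is defined by some disk $s_t$ with $s_t$ covering $p_h$ but not covering $p_{j+1}$ or $p_{i'-1}$ appropriately, and by the same region-containment property used in Lemma~\ref{lem:optgeneral} — comparing the centers $c_t$ and $c_k$ — $s_t$ must cover all of $P[i,j]$ or all of $P[i',j']$, depending on which side of $c_k$ the center $c_t$ falls. (If $x(c_t) \le x(c_k)$, the portion of $s_k$ to the left of the vertical line through $p_{j+1}$ is contained in $s_t$, so $s_t$ covers $P[i,j]$; symmetrically if $x(c_t) \ge x(c_k)$, then $s_t$ covers $P[i',j']$.) Wait — $p_h$ sits strictly between $p_j$ and $p_{i'}$ and is not covered by $s_k$, but $s_t$ covers $p_h$, so $c_t$ could be on either side; in either case the region argument forces $s_t$ to swallow one of the two blocks. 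Consequently the corresponding segment $s[i,j]$ or $s[i',j']$ is covered in $P'$ by the segment that $s_t$ defines in $OPT'$ (using that $s_t$ also defines in $S'$ a segment whose interval contains $[i,j]$ or $[i',j']$, so the one actually chosen in $OPT'$ together with neighbors covers it) — actually, more carefully, since $s_t$'s maximal subsequence in $F(s_t)$ containing $p_h$ also contains the whole block $P[i,j]$ (resp. $P[i',j']$), and $OPT'$ chose $s[i'',j'']$ from that maximal subsequence, we need $[i'',j''] \supseteq [i,j]$; if the chosen maximal subsequence of $s_t$ in $OPT'$ is the one through $p_h$ this holds, and that is indeed the one covering $p_h'$. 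Thus $s[i,j]$ (say) is redundant in $OPT'$: removing it still covers all of $P'$, contradicting that $OPT'$ is optimal since $w_k > 0$. This contradiction shows $OPT'$ must be valid, completing the proof.
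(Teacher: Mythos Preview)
Your overall strategy---assume two segments of $OPT'$ come from the same disk $s_k$ and derive a contradiction by showing one of them is redundant---is the same as the paper's. However, there is a genuine gap in the key step.

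You correctly deduce, via the region-containment argument, that the \emph{disk} $s_t$ covers the block $P[i,j]$ (in the case $x(c_t)\le x(c_k)$). You then assert that ``$s_t$'s maximal subsequence in $F(s_t)$ containing $p_h$ also contains the whole block $P[i,j]$.'' This does not follow. The disk $s_t$ covers $P[i,j]$ and covers $p_h$, but it need not cover the intermediate points $p_{j+1},\ldots,p_{h-1}$: any of them could lie above $s_t$, which breaks the maximal subsequence of $s_t$ at that index. In that case $[i'',j'']$ does not contain $[i,j]$, so the segment $s[i'',j'']\in OPT'$ does not by itself make $s[i,j]$ redundant. (Concretely in $L_\infty$: with $h=4$, take $y(p_3)$ above the top edge of $s_t$ and $y(p_4)$ below it; then the maximal subsequence of $s_t$ through $p_4$ starts at index $4$.) Choosing $p_h=p_{j+1}$ repairs this case but breaks the symmetric case $x(c_t)>x(c_k)$, and vice versa for $p_h=p_{i'-1}$; no single choice of $p_h$ suffices.

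The paper closes this gap with an \emph{iterative} argument. It starts at the point immediately adjacent to one block (in the paper's indexing, $p_{i-1}$), uses a private point $p^*$ of $s[i,j]$ in $OPT'$ to force the center of the covering disk strictly to one side of $c_k$, and then---if that disk fails to cover the whole gap between the two blocks---recurses on the furthest uncovered point in the gap. Each step moves the center further to that side and extends the contiguous coverage toward the other block; since the gap has finitely many indices, the process terminates with a disk whose segment in $OPT'$ already covers the second block, yielding the desired redundancy. Your sketch is essentially the first step of this walk; what is missing is the inductive continuation. (As an aside, your early claim that a disk centered on $L$ covers a consecutive block of $P$ is false in the $L_\infty$ and $L_2$ metrics---this is exactly why $F(s_k)$ can have several maximal subsequences---and the same misconception resurfaces as the gap above.)
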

\begin{proof}
Let $OPT'$ be any optimal solution. Let $s[i,j]$ be a segment in $OPT'$.
So $s[i,j]$ is defined by a disk $s_k$ for the maximal subsequence $P[i,j]$.
In the following we show that no other segments defined by $s_k$ are in $OPT'$, which will prove the lemma.

Assume to the contrary that $OPT'$ has another segment $s[i',j']$
defined by $s_k$. Then, since the maximal subsequences covered by $s_k$ are pairwise disjoint,
either $j'<i$ or $j<i'$ holds. In the following, we only discuss the case $j'<i$ since the other case is similar.

By the definition of maximal subsequences, neither $p_{j'+1}$ nor $p_{i-1}$ is covered by $s_k$. Note that $j'+1=i-1$ is possible. Hence, $OPT'$ must have a segment $s'$ defined by
another disk $s_h$ covering $p_{i-1}$ such that $s'$ covers the projection point $p'_{i-1}$ of $p_{i-1}$. Since $s[i,j]$ is in $OPT'$,
$P'[i,j]$ has at least one point $p^*$ that is not covered by any
segment in $OPT'$ other than $s[i,j]$. Thus, $p^*$ is not covered by $s'$.

We claim that the center $c_h$ of $s_h$ is strictly to the left of the center of $c_k$ of $s_k$. Indeed, assume to the contrary that $x(c_h)\geq x(c_k)$. Then, let $D$ be the region of $s_k$ to the right of the vertical line through $p_{i-1}$. Notice that all points of $P[i,j]$ are in $D$. Also, since $s_h$ covers $p_{i-1}$ while $s_k$ does not and $x(c_h)\geq x(c_k)$, $D$ is contained in $s_h$. This means that all points of $P[i,j]$ are covered by $s_h$, and thus all points of $P[i-1,j]$ are covered by $s_h$ since $s_h$ covers $p_{i-1}$. Hence, the segment $s'$ covers all points of $P'[i-1,j]$, and thus, $s'$ covers the points $p^*$, which contradicts with the fact that $s'$ does not cover $p^*$. This proves the claim that $x(c_h)<x(c_k)$.

Depending on whether $s_h$ covers all points of $P[j'+1,i-1]$, there are two cases.

\begin{itemize}
\item
If $s_h$ covers all points of $P[j'+1,i-1]$, then since $x(c_h)<x(c_k)$ and $s_k$ does not cover $p_{j'+1}$ (but covers all points of $P[i',j']$), by the similar analysis as above, we can show that $s_h$ also covers all points of $P[i',j']$ and thus all points of $P[i',i-1]$. This implies that the segment $s'$ covers all projection points of $P'[i',i-1]$. Therefore, if we remove
$s[i',j']$ from $OPT'$, the remaining segments of $OPT'$ still cover all
points of $P'$, which contradicts with that $OPT'$ is an optimal solution.

\item
If $s_h$ does not cover all points of $P[j'+1,i-1]$, then let $h_1$ be the largest index in $[j'+1,i-2]$ such that $p_{h_1}$ is not covered by $s_h$. Then, $p_{h_1}'$ is not covered by the segment $s'$. Hence, $OPT'$ must have a segment defined by another disk $s_{j_1}$ covering $p_{h_1}$ such that the segment covers $p_{h_1}'$. By the same analysis as above, we can show that $x(c_{j_1})< x(c_h)$, and thus $x(c_{j_1})<x(c_k)$.

If $s_{j_1}$ covers all points of $P[j'+1,h_1-1]$, then we can use the same analysis as the above case to show that $s[i',j']$ is a redundant segment of $OPT'$, which incurs contradiction. Otherwise, we let $h_2$ be the largest index in $[j'+1,h_1-1]$ such that $p_{h_2}$ is not covered by $s_{j_1}$. Then, we can follow the same analysis above to either obtain contradiction or consider the next index in $[j'+1,h_2-1]$. Note that this procedure is finite as the number of indices of $[j'+1,h_1-1]$ is finite. Therefore, eventually we will obtain contradiction.
\end{itemize}
The lemma thus follows. \qed
\end{proof}

With the above lemma, combining with our algorithm for the 1D problem,
we have the following result.

\begin{lemma}\label{lem:bothmetrics}
If the set $S'$ is computed, then an
optimal solution can be found in $O((n+|S'|)\log(n+|S'|))$ time.
\end{lemma}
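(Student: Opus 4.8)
The plan is to combine Lemma~\ref{lem:optgeneral}, which guarantees that an optimal solution on $P$ and $S$ induces an optimal-or-better solution on $P'$ and $S'$, with Lemma~\ref{lem:valid}, which guarantees that any optimal solution on $P'$ and $S'$ translates back to a feasible (and thus optimal) solution on $P$ and $S$; once these two directions of the correspondence are established, solving the 1D instance via Theorem~\ref{theo:1d} immediately yields the bound, since that theorem runs in $O((n'+|S'|)\log(n'+|S'|))$ time on $n'$ points and $|S'|$ segments, and here $n'=|P'|=n$.

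First I would argue the ``easy'' direction: the optimal objective value of the 1D instance on $(P',S')$ is at most that of the original instance on $(P,S)$. Take an optimal solution subset $S_{opt}\subseteq S$. For each disk $s_k\in S_{opt}$, Lemma~\ref{lem:optgeneral} supplies a maximal subsequence $P[i,j]\in F(s_k)$ such that $P[i,j]$ contains a point uncovered by $S_{opt}\setminus\{s_k\}$ and every point covered by $s_k$ outside $P[i,j]$ is already covered by $S_{opt}\setminus\{s_k\}$. Select the segment $s[i,j]\in S'$ for each such $s_k$; its weight is at most $w_k$ by construction (since when a segment is defined by multiple disks we keep the minimum weight). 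Repeating this over all $s_k\in S_{opt}$ gives a set of segments whose total weight is at most the weight of $S_{opt}$. I would then check this set covers all of $P'$: any $p_i'\in P'$ has $p_i$ covered by some $s_k\in S_{opt}$; if $p_i$ lies in the chosen subsequence $P[i',j']$ of $s_k$ then $s[i',j']$ covers $p_i'$, and otherwise Lemma~\ref{lem:optgeneral}(2) says $p_i$ is covered by some other disk of $S_{opt}$, and one descends until reaching a disk whose chosen subsequence contains $p_i$ — this is the one place requiring a small inductive/finite-descent argument, but it is exactly parallel to the descent already used in the proof of Lemma~\ref{lem:valid}.

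Next I would handle the converse direction, which is where Lemma~\ref{lem:valid} does the work. Let $OPT'$ be an optimal solution on $(P',S')$. By Lemma~\ref{lem:valid}, $OPT'$ is a valid solution, i.e.\ no two of its segments are defined by the same disk of $S$; hence mapping each segment $s[i,j]\in OPT'$ to the disk $s_k\in S$ that defines it (the one realizing the weight of $s[i,j]$) produces a subset $S'' \subseteq S$ with $|S''|=|OPT'|$ and total weight equal to that of $OPT'$. Since each segment $s[i,j]$ covers exactly $P'[i,j]$ and the points of $P[i,j]$ are all covered by the defining disk $s_k$, the subset $S''$ covers all of $P$. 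Therefore the optimal objective value of the original instance is at most that of the 1D instance. Combined with the easy direction, the two optimal objective values are equal, and an optimal solution of one yields an optimal solution of the other.

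Finally, putting the pieces together: given $S'$, build $P'$ trivially in $O(n)$ time (vertical projections), then invoke Theorem~\ref{theo:1d} on the 1D instance $(P',S')$ with $|P'|=n$ points and $|S'|$ segments, which takes $O((n+|S'|)\log(n+|S'|))$ time and returns an optimal solution subset of $S'$; translating it back via the disk-defining map (computed alongside $S'$) costs $O(|S'|)$ additional time. This gives the claimed $O((n+|S'|)\log(n+|S'|))$ bound. I expect the main obstacle to be the finite-descent verification that the segments selected in the easy direction actually cover all of $P'$ — making sure the descent terminates and the indices strictly decrease — but this mirrors the argument already carried out in Lemma~\ref{lem:valid}, so it should go through with only minor rephrasing; everything else is bookkeeping about weights and the exact coverage $P'[i,j]$ of each segment.
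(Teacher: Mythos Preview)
Your proposal is correct and follows essentially the same approach as the paper. The paper does not give an explicit proof of this lemma at all; it simply states ``With the above lemma, combining with our algorithm for the 1D problem, we have the following result,'' treating it as an immediate consequence of Lemma~\ref{lem:optgeneral}, Lemma~\ref{lem:valid}, and Theorem~\ref{theo:1d}. You spell out both directions of the correspondence between the 2D and 1D optima more carefully than the paper does, and you correctly flag the one non-trivial step (that the chosen segments in the forward direction actually cover all of $P'$), but the underlying reasoning is identical to what the paper has in mind.
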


It remains to determine the size of $S'$ and compute $S'$.
An obvious answer is that $|S'|$ is bounded by
$m\cdot \lceil n/2\rceil$ because each disk can have at most $\lceil n/2\rceil$ maximal sequences of
$P$,
and a trivial algorithm can compute $S'$ in $O(nm\log (m+n))$
time by scanning the sorted list $P$ for each disk. Therefore, by
Lemma~\ref{lem:bothmetrics}, we
can solve the problem in both $L_{\infty}$ and $L_2$ metrics  in
$O(nm\log (m+n))$ time.

With more geometric observations, the following two subsections will
prove the two following lemmas, respectively.

\begin{lemma}\label{lem:linftyset}
In the $L_{\infty}$ metric, $|S'|\leq 2(n+m)$ and $S'$ can be computed
in $O((n+m)\log (n+m))$ time.
\end{lemma}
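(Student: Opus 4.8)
The plan is to bound the number of maximal subsequences across all disks by a charging argument, and then to compute $S'$ directly by a left-to-right plane sweep. Recall that in the $L_\infty$ metric each disk $s_k$ is an axis-parallel square centered on $L$; thus the set of points of $P$ covered by $s_k$ is exactly $P\cap R_k$, where $R_k=[x(c_k)-\rho_k,\,x(c_k)+\rho_k]\times[-\rho_k,\rho_k]$ and $\rho_k$ is the radius (half side length) of $s_k$. The crucial structural fact I would isolate first is that a point $p_i$ covered by $s_k$ starts a new maximal subsequence (i.e.\ $p_{i-1}$ is not covered by $s_k$) exactly when $p_{i-1}\notin R_k$ while $p_i\in R_k$; I will charge this ``start event'' either to the left edge of the square $s_k$ (the vertical line $x=x(c_k)-\rho_k$) or to the point $p_{i-1}$. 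The intended bound is that each disk contributes at most one start that is charged to its left edge, and each point $p_{i-1}$ is charged by at most one disk — giving $|S'|\le m+n$ starts, hence $|S'|\le m+n\le 2(n+m)$; I keep the looser $2(n+m)$ since it is all that is needed and leaves room for the boundary sentinels $p_0,p_{n+1}$.

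The key lemma behind the charging is a monotonicity property of $L_\infty$ squares centered on $L$: if $p_{i-1}\notin s_k$, $p_i\in s_k$, and $x(c_k)\le x(p_{i-1})$ — i.e.\ the ``drop-out'' point $p_{i-1}$ lies to the right of the center — then $p_{i-1}$ fails to be covered only because $x(p_{i-1})>x(c_k)+\rho_k$, but then $p_i$, being to the right of $p_{i-1}$, also has $x$-coordinate exceeding $x(c_k)+\rho_k$, a contradiction; so whenever a maximal subsequence of $s_k$ begins at $p_i$ with $i\ge 2$, the preceding point $p_{i-1}$ lies strictly to the left of the center $c_k$, and moreover $p_{i-1}$ lies to the left of the left edge of $s_k$ or below/above it. I would then argue that for a fixed point $p_{i-1}$, among all squares $s_k$ for which $p_{i-1}$ is the ``just-dropped'' point of some maximal subsequence and $p_i$ is the first point back inside, there is essentially one canonical such square per side, and by a slightly more careful version of the argument (using that any two such squares are nested on the portion to the right of $p_{i-1}$, exactly as in the region-containment arguments of Lemmas~\ref{lem:unitcover} and~\ref{lem:optgeneral}) only a bounded number of \emph{distinct} segments $s[i,j]$ can be created with left endpoint $p_i'$. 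Summing over the $n$ possible left endpoints and adding the one maximal subsequence per disk that does not begin at an interior point of $P$ yields $|S'|\le 2(n+m)$.

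For the computation, I would run a vertical sweep line $\ell$ from left to right over the sorted union of the points of $P$ and the two vertical edges $x=x(c_k)\pm\rho_k$ of every square, maintaining the set $A$ of squares currently ``active'' (whose $x$-interval contains $\ell$) indexed so that, among active squares, I can (i) test whether the current point $p_i$ lies in a given active square — which, for $L_\infty$, reduces to comparing $y(p_i)$ with $\rho_k$ — and (ii) retrieve squares in order of their radius or of their right edge $x(c_k)+\rho_k$. When $\ell$ hits the left edge of $s_k$ I insert $s_k$ into $A$ and record that its current maximal subsequence starts at the next point of $P$ to the right; when $\ell$ hits a point $p_i$, I need to detect, for each active square, whether $p_i$ toggles coverage (enters or leaves), which happens only for squares whose $\rho_k$ straddles $y(p_i)$ — by keeping the active squares sorted by $\rho_k$ and walking from the boundary I can process all such toggles in time proportional to the number of maximal subsequences they close or open, i.e.\ $O(|S'|)$ total; when $\ell$ hits the right edge of $s_k$ I close its final maximal subsequence and delete it from $A$. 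Each event does $O(\log(n+m))$ work for the search-tree operations, and the total number of maximal-subsequence openings/closings is $O(|S'|)=O(n+m)$ by the first part, so the whole sweep runs in $O((n+m)\log(n+m))$ time after the initial sort.

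The main obstacle I anticipate is the combinatorial counting in the second paragraph: making precise that each point $p_i$ can be the \emph{left endpoint} of only $O(1)$ distinct segments in $S'$ — even though many different squares may have a maximal subsequence starting at $p_i$ — and that this, together with a symmetric statement for right endpoints and the $O(m)$ ``boundary'' subsequences, actually gives a linear (not merely near-linear) bound. The region-nesting lemma for $L_\infty$ squares centered on $L$ (the analogue of Figs.~\ref{fig:contain} and~\ref{fig:cover}) is the tool I would lean on, and I expect the square geometry to make it cleaner here than in the $L_2$ case; the bookkeeping needed to turn ``$O(1)$ segments per endpoint'' into the exact constant giving $2(n+m)$ is the part that will require the most care, but it is routine once the nesting property is in hand.
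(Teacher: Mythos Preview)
Your charging argument has a concrete error. The claim that each point $p_i$ can be the left endpoint of only $O(1)$ distinct segments in $S'$ is false. Take points $p_1=(0,H)$ with $H$ huge, $p_{2k}=(\epsilon\cdot 2k,\,0.5)$, and $p_{2k+1}=(\epsilon\cdot(2k{+}1),\,k{+}1)$ for $k\ge1$ with $\epsilon$ tiny; for each $j$, a square of radius $j+1.5$ centered near the origin covers exactly $p_2,\ldots,p_{2j+2}$ and misses $p_1$ and $p_{2j+3}$. This produces the distinct segments $s[2,4],s[2,6],\ldots$, so $\Theta(n)$ segments share the left endpoint $p_2'$. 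The ``monotonicity'' lemma you state is also wrong as written: $p_{i-1}\notin s_k$, $p_i\in s_k$, and $x(c_k)\le x(p_{i-1})$ do \emph{not} force $x(p_{i-1})>x(c_k)+\rho_k$, because $p_{i-1}$ may simply be vertically above the square (i.e.\ $y(p_{i-1})>\rho_k$ while $x(p_{i-1})$ lies in the $x$-range). The paper's counting argument instead introduces \emph{bounding couples} $(i,j)$ (both $p_i,p_j$ vertically above the square, with no such point in between), splits them into left/right couples (at most $2m$) and middle couples, and charges each middle couple to the \emph{lower} of $p_i,p_j$; the key observation is that for a fixed point, at most one point to its northwest and one to its northeast can pair with it as a middle couple, giving at most $2n$ middle couples.

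Your sweep also fails the time bound. You equate ``the number of maximal subsequences they close or open'' with $O(|S'|)$, but these are different quantities: $|S'|$ counts \emph{distinct} segments $s[i,j]$, whereas the toggles you walk over count $\sum_k|F(s_k)|$, the total number of maximal subsequences over all disks. With $n$ points alternating high/low and $m$ squares each covering exactly the low points, every square has $n/2$ maximal subsequences, so the toggle count is $\Theta(nm)$ even though $|S'|=n/2$. The paper avoids this by never enumerating per-disk subsequences: it maintains a partition of the active squares into groups $H(i_j)$ keyed by the rightmost uncovered point $p_{i_j}$ that sits above them, stores each group in a balanced BST with subtree-minimum weights, and when a point $p_h$ arrives, reports each distinct middle couple $(i_j,h)$ once together with the minimum weight over all disks in $H(i_j)$ below $p_h$ via a single $O(\log m)$ query, then merges/splits the group trees. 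The amortized cost is $O(\log m)$ per distinct couple and per disk insertion/deletion, which is what yields $O((n+m)\log(n+m))$.
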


\begin{lemma}\label{lem:l2set}
In the $L_2$ metric, $|S'|\leq 2(n+m)+\kappa$ and $S'$ can be computed
in $O((n+m)\log (n+m) + \kappa\log m)$ time.
\end{lemma}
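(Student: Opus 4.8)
\medskip
\noindent The plan is to follow the template of Lemma~\ref{lem:linftyset}: using Lemma~\ref{lem:bothmetrics}, it suffices to bound $|S'|$ and to compute $S'$, and $|S'|\le\sum_k|F(s_k)|$. Write $\alpha_k(x)=\sqrt{\rho_k^2-(x-c_k)^2}$ for the height of the upper boundary of a disk $s_k$ (with center $(c_k,0)$ and radius $\rho_k$) at abscissa $x$ in its $x$-slab $[x(l_k),x(r_k)]$, so that a point $p$ with $y(p)\ge 0$ lies in $s_k$ iff $x(p)$ lies in that $x$-slab and $y(p)\le\alpha_k(x(p))$. Two elementary facts drive everything. \textbf{(i)} A disk centered on $L$ that contains a point $p$ with $y(p)\ge 0$ also contains the foot of $p$ on $L$ (the distance from $(x(p),0)$ to $(c_k,0)$ is at most the distance from $p$ to $(c_k,0)$); hence any two disks of $S$ covering a common point of $P$ intersect. \textbf{(ii)} On the overlap $I$ of the $x$-slabs of two disks $s_k,s_j$, the function $\alpha_k(x)^2-\alpha_j(x)^2=(\rho_k^2-c_k^2)-(\rho_j^2-c_j^2)+2x(c_k-c_j)$ is \emph{affine} in $x$; hence the two upper boundaries cross at most once inside the open upper half-plane, and whenever $\alpha_j>\alpha_k$ at two abscissas of $I$ we have $\alpha_j>\alpha_k$ on the whole interval between them (an affine function positive at both endpoints of an interval is positive throughout).

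\medskip
\noindent\emph{Bounding $|S'|$.} I would charge the maximal subsequences. For each disk $s_k$, its leftmost and its rightmost maximal subsequence are charged to $s_k$ itself, accounting for at most $2m$ of them. Any remaining (``interior'') maximal subsequence $P[a,b]\in F(s_k)$ is immediately preceded by a nonempty \emph{gap} of points lying in the $x$-slab of $s_k$ but above $\alpha_k$; let $p_h$ be the leftmost point of that gap. Since $S$ covers $P$, some disk $s_j\neq s_k$ covers $p_h$, and by (i) $s_j\cap s_k\neq\emptyset$; charge $P[a,b]$ to the pair $\{s_k,s_j\}$ (and to the point $p_h$), where $s_j$ is selected among the coverers of $p_h$ by a fixed rule. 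The target is that each intersecting pair, and each point $p_h$, receives only $O(1)$ charges, yielding $|S'|\le 2m+O(n)+O(\kappa)$ and, after tightening constants, $|S'|\le 2(n+m)+\kappa$. The heart of the matter is the $O(1)$ bound on charges to a pair: if $\{s_k,s_j\}$ were charged by two distinct interior maximal subsequences of $s_k$, separated (by definition of maximal subsequences) by at least one further maximal subsequence of $s_k$, then the two witness points $p_h,p_{h'}$ (say $x(p_h)<x(p_{h'})$) both lie below $\alpha_j$ and above $\alpha_k$, so by (ii) $\alpha_j>\alpha_k$ on all of $[x(p_h),x(p_{h'})]$, whence $s_j$ covers \emph{every} point of $P$ that $s_k$ covers there, in particular an entire maximal subsequence of $s_k$ lying between the two gaps. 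I expect to combine this with a sound pruning rule — delete $s[a,b]$ from $S'$ whenever some other disk of no larger weight covers all of $P[a,b]$ (a segment never needed in an optimal $1$D solution, by a short exchange argument on top of Lemmas~\ref{lem:optgeneral} and \ref{lem:valid}) — so that the ``inner'' maximal subsequence is already pruned, a contradiction.

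\medskip
\noindent\emph{Computing $S'$.} I would sweep a vertical line $\ell$ from left to right, maintaining the currently active disks (those whose $x$-slab contains $\ell$) in a balanced binary search tree ordered by $\alpha_k$ evaluated at $\ell$'s abscissa; at any time this tree holds at most $m$ disks. There are three kinds of events: left/right endpoints of $x$-slabs (an insertion at height $0$, or a deletion; $O(m)$ of them), crossings of two arcs (a swap of two adjacent tree elements; by (ii) at most one per pair of disks, and only pairs that ever become adjacent can matter, so $O(\kappa)$ of them, scheduled Bentley--Ottmann-style in a separate priority queue of size $\le m$), and the points of $P$. When $\ell$ reaches $p_i$, the disks covering $p_i$ are exactly a contiguous top-suffix of the current order (those with $\alpha_k(x(p_i))\ge y(p_i)$), so a careful comparison with the state at $p_{i-1}$ lets me close the maximal subsequences ending at $p_{i-1}$ (emitting the corresponding segments, updating weights of duplicates) and open those starting at $p_i$. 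Each event costs $O(\log m)$, and the total number of segment emissions is $O(|S'|)=O(n+m+\kappa)$ by the size bound; together with $O((n+m)\log(n+m))$ for the initial sort of points and slab endpoints, the running time is $O((n+m)\log(n+m)+\kappa\log m)$.

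\medskip
\noindent The step I expect to be the real obstacle is the $O(1)$-charge-per-pair bound above: making the argument airtight requires choosing the witness disk $s_j$ so that the weight comparison in the pruning rule always goes the right way, handling the boundary cases where $P[a,b]$ is a disk's first or last maximal subsequence, and checking that the pruning does not interfere with validity (Lemma~\ref{lem:valid}); this is precisely where the $L_2$ analysis becomes ``more involved and less efficient'' than the $L_\infty$ analysis, the culprit being the single-crossover phenomenon of fact (ii) rather than the non-crossing arcs of the square case.
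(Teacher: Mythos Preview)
Your plan has a genuine gap in the size bound. The charging to pairs only yields a contradiction via the weight-dependent pruning rule: you need $w_j\le w_k$ so that the inner maximal subsequence $P[a,b]$ of $s_k$ is pruned by $s_j$, but there is no way to choose the witness $s_j$ (among the coverers of $p_h$) so that this inequality is guaranteed for every $s_k$ that charges the pair. Without pruning, the fact that $s_j$ also covers $P[a,b]$ does not stop $s_k$ from contributing the segment $s[a,b]$ to $S'$, so no contradiction arises and a pair can absorb many charges. The auxiliary charge to the point $p_h$ does not save you either: unlike the $L_\infty$ case (where the upper edges are horizontal and the ``northwest/northeast'' argument gives at most two charges per point), in $L_2$ arbitrarily many disks can have the same $p_h$ as the leftmost point of a gap, so no $O(1)$ bound holds there.

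The paper's argument sidesteps weights entirely. It views each middle maximal subsequence $P[a,b]$ as the index interval $[a-1,b+1]$ (a ``bounding couple'') and calls two such intervals \emph{conflicting} if they properly overlap. Two observations do the work: (a) any two disks produce at most one conflicting pair of intervals (this is exactly where your single-crossing fact~(ii) is used, via the claim that a disk on $L$ containing two ``high'' points must also contain any ``low'' point between them), and (b) conflicting intervals can only come from intersecting disks. Hence there are at most $\kappa$ conflicting pairs; deleting one interval from each leaves a laminar family on $\{1,\ldots,n\}$, which has at most $2n$ members. Together with the $\le 2m$ left/right couples this gives $|S'|\le 2(n+m)+\kappa$. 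No pruning, no weight comparison.

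Your computation sketch is also too thin at the crucial step. ``A careful comparison with the state at $p_{i-1}$'' hides the real difficulty: between $p_{i-1}$ and $p_i$ the arcs cross, enter, and leave, and although the covering set at each point is a top suffix \emph{at that moment}, the two suffixes are not related by any single contiguous-range operation in your tree. The paper therefore does not maintain one tree of active arcs but a \emph{partition} $H(i_0),H(i_1),\ldots,H(i_t)$ indexed by a dynamic list $P(l)$ of witness points to the left of $l$, with the invariant that $H(i_j)$ consists of the active arcs for which $p_{i_j}$ is the most recent point lying above them; a separate tree $H^*$ of the lowest arcs of these groups is what lets one, at a new point $p_h$, peel off exactly the groups whose lowest arc lies below $p_h$ and report the couples $(i_j,h)$ in amortized $O(\log m)$ each, while bounding the extra work caused by out-of-order merges (``order-violation pairs'') by $\kappa$.
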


With Lemma~\ref{lem:bothmetrics}, we have the following results.

\begin{theorem}
The line-constrained disk coverage problem in the $L_{\infty}$ metric
is solvable in $O((n+m)\log(n+m))$ time.
\end{theorem}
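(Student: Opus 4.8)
The plan is to obtain the theorem as an immediate corollary of the two reduction lemmas already stated. By Lemma~\ref{lem:linftyset}, in the $L_\infty$ metric the segment set $S'$ has size $|S'|\le 2(n+m)$ and can be constructed in $O((n+m)\log(n+m))$ time; Lemma~\ref{lem:bothmetrics} then converts $(P',S')$ into an optimal solution of the original instance in $O((n+|S'|)\log(n+|S'|))=O((n+m)\log(n+m))$ time. These two costs sum to $O((n+m)\log(n+m))$, and correctness of the whole pipeline is guaranteed by Lemma~\ref{lem:bothmetrics} (which in turn rests on Lemmas~\ref{lem:optgeneral} and~\ref{lem:valid} and on Theorem~\ref{theo:1d}). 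So, granting Lemma~\ref{lem:linftyset}, there is nothing left to do.

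Consequently all the substance lives in Lemma~\ref{lem:linftyset}, and that is the statement I would actually prove, using the special shape of $L_\infty$ disks. A square $s_k$ of radius $\rho_k$ centered at $(x(c_k),0)$ covers a point $p$ with $y(p)\ge 0$ exactly when $y(p)\le\rho_k$ and $x(c_k)-\rho_k\le x(p)\le x(c_k)+\rho_k$; the vertical test $y(p)\le\rho_k$ does not depend on the center. Hence the maximal subsequences covered by $s_k$ are precisely the maximal runs of points whose $x$-coordinates fall in $[x(c_k)-\rho_k,\,x(c_k)+\rho_k]$ and whose $y$-coordinates are at most $\rho_k$, consecutive runs being separated either by the boundary of that interval or by a ``tall'' point with $y$-coordinate larger than $\rho_k$. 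To bound $|S'|$ I would charge each segment $s[i,j]\in S'$ (which comes from such a run of some $s_k$) to witnesses in $P\cup S$: a run abutting the left (resp.\ right) boundary of $s_k$ is charged to the left (resp.\ right) side of $s_k$, which happens at most once per square; a run whose left end is forced by a tall point $p_{i-1}$ is charged to $p_{i-1}$, and one argues --- exploiting the center-independence of the vertical test --- that each point of $P$ is charged only $O(1)$ times from each side. Summing over both endpoints gives $|S'|=O(n+m)$, and a sharper count gives the claimed $2(n+m)$.

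To compute $S'$ within the time bound I would run a left-to-right plane sweep with a vertical line, the events being the points of $P$ and the vertical edges $x=x(c_k)\pm\rho_k$ of the squares. The sweep maintains the \emph{active} squares (those whose $x$-interval contains the line) in a balanced binary search tree keyed by radius $\rho_k$; when the line reaches a point $p_i$, the active squares with $\rho_k\ge y(p_i)$ are exactly the ones covering $p_i$, so in $O(\log m)$ time per event one can detect the squares that just started or just stopped covering points, emit the corresponding segments $s[i,j]$, and collapse duplicates keeping the minimum weight (say via a dictionary on endpoint-index pairs). Because the size bound implies each square triggers only $O(1)$ amortized structural operations, the sweep runs in $O((n+m)\log(n+m))$ time, establishing Lemma~\ref{lem:linftyset} and hence the theorem.

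I expect the linear size bound on $S'$ to be the main obstacle: one must rule out a single square spawning many short covered runs, each demanding a fresh witness, and this is exactly where the decoupling of the horizontal and vertical covering conditions for squares must be used --- a decoupling that fails for $L_2$, which is why that case only yields $|S'|\le 2(n+m)+\kappa$. Once the bound is in hand, the sweep and the concluding application of Lemma~\ref{lem:bothmetrics} are routine.
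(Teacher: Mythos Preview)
Your top-level reduction is exactly the paper's: the theorem is an immediate corollary of Lemma~\ref{lem:bothmetrics} together with Lemma~\ref{lem:linftyset}, and all the work sits inside the latter. The difficulties lie in your sketch of Lemma~\ref{lem:linftyset}, where both the counting argument and the sweep contain genuine gaps.

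\emph{The size bound.} Charging a middle run to the tall point $p_{i-1}$ that forces its left end does not yield $O(1)$ charges per point. Take $p_1=(0,100)$ and $p_{k+1}=(k,2k)$ for $k\ge 1$, and squares $s_j$ centered at the origin with radius $2j{+}1$. Each $s_j$ covers exactly $P[2,j{+}1]$, so the middle bounding couples $(1,j{+}2)$ are pairwise distinct and \emph{all} get charged to $p_1$; that is $\Theta(m)$ charges to one point. The paper's Lemma~\ref{lem:bound} avoids this by charging each middle couple to the \emph{lower} of its two endpoints and then arguing that among all points to the northwest (resp.\ northeast) of a fixed point $p$, at most one can pair with $p$; this is precisely where the center-independence of the vertical test is used, but it is used to compare the two endpoints of a couple, not to bound charges to a fixed left separator.

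\emph{The sweep.} Your amortization claim ``the size bound implies each square triggers only $O(1)$ amortized structural operations'' conflates $|S'|$ (distinct segments) with $\sum_k|F(s_k)|$ (segments counted with multiplicity), and the latter can be quadratic. With points at $x=1,\dots,n$, heights alternating between $1$ and $n$, and $m=n$ squares all of radius $n/4$ centered at $c_j=j$, every square has $\Theta(n)$ singleton runs, so $\sum_k|F(s_k)|=\Theta(n^2)$ while $|S'|=\Theta(n)$. A sweep that reacts to every start/stop-covering transition per square therefore does $\Theta(n^2)$ work. Relatedly, at a single point event many squares can stop covering yet emit \emph{different} segments (because their last tall points differ), so ``$O(\log m)$ per event'' is not attainable with just a BST keyed by radius. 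The paper handles this by maintaining, in addition to the active set, a staircase $P(l)$ of candidate left endpoints and a partition $\calH=\{H(i_j)\}$ of the active squares by their most recent tall point; at a point $p_h$ one then emits one couple per affected group and merges/splits groups, with total work bounded by $|\calC|$ rather than by $\sum_k|F(s_k)|$.
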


\begin{theorem}\label{theo:L2general}
The line-constrained disk coverage problem in the $L_{2}$
metric is solvable in $O(nm\log(m+n))$ time or in $O((n+m)\log(n+m)+\kappa\log m)$ time,
where $\kappa$ is the number of pairs of disks of $S$ that intersect each
other.
\end{theorem}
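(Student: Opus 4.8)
The plan is to obtain both running-time bounds by combining Lemma~\ref{lem:bothmetrics} with the two ways of producing the auxiliary 1D instance $(P',S')$: the trivial per-disk scan discussed above (which yields $|S'|\le m\lceil n/2\rceil$ in $O(nm\log(m+n))$ time), and the refined construction of Lemma~\ref{lem:l2set} (which yields $|S'|\le 2(n+m)+\kappa$ in $O((n+m)\log(n+m)+\kappa\log m)$ time). Correctness of the reduction is already guaranteed by Lemmas~\ref{lem:optgeneral} and \ref{lem:valid}, so only the time accounting remains.

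For the first bound, I would build $S'$ by the trivial scan and then apply Lemma~\ref{lem:bothmetrics}. Since $|S'|=O(nm)$ and $\log(nm)=O(\log(m+n))$, the 1D solve runs in $O((n+nm)\log(n+nm))=O(nm\log(m+n))$ time, which dominates the $O(nm\log(m+n))$ construction cost; hence the total is $O(nm\log(m+n))$.

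For the second bound, I would build $S'$ via Lemma~\ref{lem:l2set} and again invoke Lemma~\ref{lem:bothmetrics}, which then costs $O((n+m+\kappa)\log(n+m+\kappa))$. The one slightly delicate point---and the only nonmechanical step in this theorem---is that this expression must be shown to be $O((n+m)\log(n+m)+\kappa\log m)$; I would do this by a short case distinction on $\kappa$. If $\kappa\le n+m$, then $|S'|=O(n+m)$, so both the construction of $S'$ and the 1D solve take $O((n+m)\log(n+m))$ time (using $\kappa\log m\le(n+m)\log m=O((n+m)\log(n+m))$ for the construction), which is within the claimed bound. If $\kappa>n+m$, then $n<\kappa$ and $|S'|=O(\kappa)$, so the 1D solve costs $O(\kappa\log\kappa)$; since $S$ consists of disks we have $\kappa\le m(m-1)/2<m^2$, hence $\log\kappa<2\log m$ and the 1D solve is $O(\kappa\log m)$, matching the $O((n+m)\log(n+m)+\kappa\log m)$ construction cost. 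Both cases give total time $O((n+m)\log(n+m)+\kappa\log m)$, proving the second bound.

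Thus the real work behind this theorem has already been packaged into Lemmas~\ref{lem:bothmetrics} and \ref{lem:l2set}, and the remaining argument is the bookkeeping above. Accordingly, the genuine obstacle lies not in this proof but in establishing Lemma~\ref{lem:l2set}---namely proving $|S'|\le 2(n+m)+\kappa$ and computing $S'$ within the stated time by a sweep that must cope with the intersections among the $L_2$ disks---which is carried out in Section~\ref{sec:l2}.
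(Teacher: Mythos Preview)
Your proposal is correct and follows exactly the paper's approach: the theorem is stated as an immediate consequence of Lemma~\ref{lem:bothmetrics} together with the trivial $O(nm)$ bound on $|S'|$ (for the first running time) and Lemma~\ref{lem:l2set} (for the second). The paper does not spell out the case analysis on $\kappa$ that you give to absorb the $O((n+m+\kappa)\log(n+m+\kappa))$ cost into $O((n+m)\log(n+m)+\kappa\log m)$, but your argument for this is sound and is the only missing bookkeeping.
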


\paragraph{Bounding couples.} Before moving on, we
introduce a new concept {\em bounding couples}, which will be used to prove Lemmas~\ref{lem:linftyset} and \ref{lem:l2set} in
Sections~\ref{sec:linfty} and \ref{sec:l2}.

Consider a disk $s_k\in S$. Let $p_l(s_k)$ denote the rightmost point of $P\cup \{p_0,p_{n+1}\}$
strictly to the left of $l_k$; similarly,  let
$p_r(s_k)$ denote the leftmost point of $P\cup \{p_0,p_{n+1}\}$
strictly to the right of $r_k$. Let $P(s_k)$ denote the subset of
points of $P$ between $p_l(s_k)$ and $p_r(s_k)$ inclusively that are
outside $s_k$. We sort the points of $P(s_k)$ by their
$x$-coordinates, and we call each adjacent pair of points (or their
indices) in the sorted list a {\em bounding couple} (e.g., see Fig.~\ref{fig:defcouple}).
Let $C(s_k)$ denote the set of all bounding couples of
$s_k$, and for each bounding couple of $C(s_k)$, we assign $w_k$ to it
as the weight. Let $\calC=\bigcup_{1\leq k\leq m}C(s_k)$, and if the
same bounding couple is defined by multiple disks, then we only keep the
copy in $\calC$ with the minimum weight. Also, we consider a bounding couple
$(i,j)$ as an ordered pair such that $i<j$, and $i$ is considered as the left
end of the couple while $j$ is the right end.

\begin{figure}[t]
\begin{minipage}[t]{\textwidth}
\begin{center}
\includegraphics[height=1.0in]{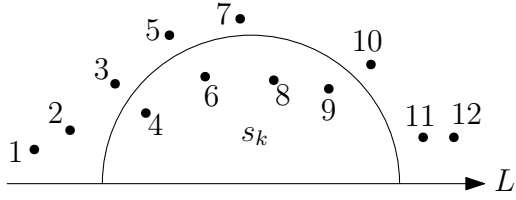}
\caption{\footnotesize Illustrating the definition of bounding couples: the numbers are the indices of the points of $P$. In this example, $p_l(s_k)$ is point $2$ and $p_r(s_k)$ is point $11$, and the bounding couples are: $(2,3)$, $(3,5)$, $(5,7)$, $(7,10)$, $(10, 11)$.}
\label{fig:defcouple}
\end{center}
\end{minipage}
\vspace{-0.15in}
\end{figure}

The reason why we define bounding couples is that if $P[i,j]$ is a maximal
subsequence of $P$ covered by $s_k$ then $(i-1,j+1)$ is a bounding
couple. On the other hand, if $(i,j)$ is a bounding couple of $C(s_k)$,
then $P[i+1,j-1]$ is a maximal subsequence of $P$ covered by $s_k$
unless $j=i+1$. Hence, each bounding couple $(i,j)$ of $\calC$ with $j\neq i+1$
corresponds to a segment in the set $S'$, and $|S'|\leq |\calC|$.
Observe that $\calC$ has at most $n-1$ couples $(i,j)$ with $j=i+1$,
and given $\calC$, we can obtain $S'$ in additional $O(|\calC|)$ time.

According to our above discussion, to prove Lemmas~\ref{lem:linftyset} and \ref{lem:l2set}, it suffices to prove the following two lemmas.

\begin{lemma}\label{lem:linftyc}
In the $L_{\infty}$ metric, $|\calC|\leq 2(n+m)$ and $\calC$ can be computed
in $O((n+m)\log (n+m))$ time.
\end{lemma}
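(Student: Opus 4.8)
The plan is to prove Lemma~\ref{lem:linftyc} via a sweep of the points of $P$ with a vertical line, during which we charge the bounding couples of $\calC$ to geometric events so that the total count is $O(n+m)$; the same sweep produces $\calC$ in $O((n+m)\log(n+m))$ time. The starting point is a structural observation about squares centered on $L$: a point $p_h$ lies outside a square $s_k$ iff $p_h$ lies outside the horizontal slab $[l_k,r_k]\times[-\rho_k,\rho_k]$ where $\rho_k$ is the half-side length, which (since the relevant points are those between $p_l(s_k)$ and $p_r(s_k)$, hence within the $x$-range of the slab) reduces to: $p_h\in P(s_k)$ iff $x(l_k)<x(p_h)<x(r_k)$ and $y(p_h)>\rho_k$. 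So $P(s_k)$ is exactly the set of points of $P$ that fall in the $x$-interval of $s_k$ and are \emph{too high} for $s_k$.

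First I would fix a disk $s_k$ and understand its bounding couples. Order the points of $P$ in the open interval $(x(l_k),x(r_k))$ by $x$-coordinate; $P(s_k)$ is the subsequence of those that are high (exceed $\rho_k$), together with the two guard points $p_l(s_k)$ and $p_r(s_k)$ which are outside the $x$-interval. A bounding couple of $s_k$ is a consecutive pair in this high-plus-guards list. The key monotonicity fact is that for two disks $s_k,s_{k'}$ with nested $x$-intervals, or more usefully for a single point $p_h$ as we vary $k$, ``$p_h$ is high for $s_k$'' depends only on the threshold $\rho_k$; if we sort disks whose $x$-intervals contain $x(p_h)$ by radius, the set that deems $p_h$ high is a prefix/suffix structure. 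I would exploit this to bound, for each point $p_h\in P$, the number of bounding couples having $p_h$ as their \emph{left} end. The aim is to show this number is at most a constant plus one, amortized, by the following charging: when $p_h$ is the left end of a couple $(h,j)$ of $C(s_k)$, either $j$ is the immediate $P$-successor of $h$ (there are at most $n-1$ such couples over all of $\calC$ since duplicates are merged), or there is at least one point strictly between $p_h$ and $p_j$ that is \emph{low} for $s_k$ (i.e.\ lies inside $s_k$); the leftmost such point $p_g$ is covered by $s_k$, and I charge $(h,j)$ to the pair $(s_k,p_g)$ where $p_g$ is the leftmost $P$-point covered by $s_k$ to the right of $p_h$. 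The plan is to argue each pair $(s_k,p_g)$ of this form is charged $O(1)$ times and that the number of such pairs is $O(n+m)$, because for a fixed $s_k$ the point $p_g$ takes only as many values as $s_k$ has maximal covered subsequences on one side, and summing the ``number of maximal subsequences'' is where the $+m$ versus $+nm$ distinction lives — in $L_\infty$ I would show each square has $O(1)$ maximal subsequences of the relevant type after the projection, giving $|\calC|\le 2(n+m)$ with explicit small constants.

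The computation of $\calC$ I would do by a left-to-right sweep of a vertical line over the sorted points $p_1,\dots,p_n$, maintaining the set of squares whose $x$-interval currently contains the sweep line, stored in a balanced BST keyed by $\rho_k$ (equivalently by the $y$-threshold). Inserting/deleting a square at its left/right endpoint is $O(\log m)$; when the sweep reaches $p_i$, the squares for which $p_i$ is low are exactly those with $\rho_k>y(p_i)$, a suffix of the BST. To generate the bounding couples of a square $s_k$ I would, at the moment $s_k$ is deleted (its right endpoint passed), have recorded along the way the sorted list of high points encountered inside its interval; but doing this naively is $O(nm)$, so instead I would turn it around: process couples by their left end. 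At $p_i$, for each square $s_k$ with $x(l_k)<x(p_i)<x(r_k)$ and $y(p_i)>\rho_k$ (so $p_i\in P(s_k)$), the partner of $p_i$ in $C(s_k)$ is the next point of $P(s_k)$, i.e.\ the next $p_j$ (or guard $p_r(s_k)$) with $y(p_j)>\rho_k$; I would locate this with a fractional-cascading / segment-tree-over-$x$ structure keyed by thresholds, or more simply observe that in $L_\infty$ the merged set $\calC$ has linear size so a careful amortized scan suffices. The main obstacle I anticipate is exactly this efficient generation without an explicit per-disk scan: proving the $O(1)$-maximal-subsequences-per-square bound for $L_\infty$ (which makes both the size bound and the running time work, and which will fail in $L_2$, explaining why Lemma~\ref{lem:l2set} pays the extra $\kappa$) and then designing the sweep so that each of the $O(n+m)$ couples is emitted in $O(\log m)$ time, charging the logs to the $O(n+m)$ insert/delete/query operations.
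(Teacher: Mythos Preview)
Your proposal contains a genuine gap: the claim that in $L_\infty$ ``each square has $O(1)$ maximal subsequences of the relevant type'' is false. Take a single square $s_k$ and place points $p_1,\dots,p_n$ inside its $x$-interval with $y$-coordinates alternating above and below $\rho_k$; then $s_k$ has $\Theta(n)$ maximal covered subsequences and $\Theta(n)$ bounding couples. Your charging scheme charges the couple $(h,j)$ to the pair $(s_k,p_g)$ where $p_g$ is the first covered point after $p_h$, but the number of such pairs over all $k$ is exactly $\sum_k |F(s_k)|$, which is essentially $|\calC|$ itself; without the false $O(1)$-per-square claim, the argument is circular and gives no bound better than $O(nm)$. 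Your diagnosis of why $L_2$ pays the extra $\kappa$ (``squares have $O(1)$ maximal subsequences but $L_2$ disks do not'') is correspondingly wrong.

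The paper's argument is entirely different and does not bound anything per square. It separates left/right bounding couples (at most $2m$ of those trivially) from middle bounding couples, and for the latter charges each couple $(i,j)$ to whichever of $p_i,p_j$ is \emph{lower}. The key geometric fact, specific to axis-parallel squares, is that for a fixed point $p_j$ there is at most one point northwest of $p_j$ (and at most one northeast) that can be $p_j$'s partner in a middle bounding couple of $\calC$: if $p_h$ and $p_i$ are both northwest of $p_j$ with $h<i$, and some square $s_k$ has both $p_h$ and $p_j$ vertically above its upper edge, then $p_i$ (lying between them in $x$ and above $p_j$ in $y$, hence above the horizontal upper edge of $s_k$) is also vertically above $s_k$, so $(h,j)$ cannot be a bounding couple of $s_k$. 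Thus each point is charged at most twice, giving $2n$ middle couples. This ``staircase'' observation is the missing idea.

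The algorithm in the paper is likewise built around this structure: it sweeps left to right maintaining a staircase $P(l)=\{p_{i_1},\dots,p_{i_t}\}$ of candidate left ends ordered northwest to southeast, together with a partition $H(i_0),\dots,H(i_t)$ of the active squares (each $H(i_j)$ a BST keyed by upper-edge height). When the sweep reaches $p_h$, all staircase points below $p_h$ pair with it and are removed; split/merge on the BSTs reconstitutes $H(h)$. Each point enters and leaves $P(l)$ once, each square moves out of $H(i_0)$ once, and each merge/split is $O(\log m)$, giving $O((n+m)\log(n+m))$. Your proposed sweep (``fractional cascading / segment tree keyed by thresholds, or a careful amortized scan'') does not have a correct amortization to lean on, since your size bound failed; you would need to discover the staircase invariant to make it work.
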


\begin{lemma}\label{lem:l2c}
In the $L_2$ metric, $|\calC|\leq 2(n+m)+\kappa$ and $\calC$ can be computed
in $O((n+m)\log (n+m) + \kappa\log m)$ time.
\end{lemma}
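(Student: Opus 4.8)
The plan is to prove Lemma~\ref{lem:l2c} by analyzing the structure of the bounding couples in $\calC$ and classifying them into a small number of types. Recall that for a disk $s_k$, the set $C(s_k)$ consists of consecutive pairs in the sorted list of points $P(s_k)$, i.e., points of $P$ lying between $p_l(s_k)$ and $p_r(s_k)$ but outside $s_k$. The first step is to classify every bounding couple $(i,j)$ of some $C(s_k)$ according to the positions of $p_i$ and $p_j$ relative to $s_k$: either a point is one of the two ``boundary'' points $p_l(s_k)$ or $p_r(s_k)$ (the extremes of the couple list), or it lies strictly between $l_k$ and $r_k$ in $x$-coordinate yet is outside the disk (an ``interior'' outside point, lying above the upper arc of $s_k$). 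This gives essentially three categories: (a) couples with at least one end being $p_l(s_k)$ or $p_r(s_k)$; (b) couples where exactly one end is an interior outside point; (c) couples where both ends are interior outside points. For category (a), each disk contributes at most a constant number (the leftmost couple uses $p_l(s_k)$, the rightmost uses $p_r(s_k)$), so these contribute $O(m)$ couples total.

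The heart of the argument is to bound categories (b) and (c). For category (c), I would argue that if $(i,j)$ is a bounding couple of $s_k$ with both $p_i,p_j$ strictly above the upper arc of $s_k$ and consecutive in $P(s_k)$, then every point of $P$ with index strictly between $i$ and $j$ is \emph{covered} by $s_k$ (that is what ``consecutive in $P(s_k)$'' means), so $P[i+1,j-1]$ is a nonempty maximal subsequence inside $s_k$, and moreover $p_i$ and $p_j$ are ``witnesses'' sitting just outside on either side. I would then show that two different disks $s_k$ and $s_{k'}$ can share such an interior-interior couple $(i,j)$ only if they intersect in the region above $L$ between $p_i$ and $p_j$ — because both arcs pass below all of $p_{i+1},\dots,p_{j-1}$ but above nothing forcing them apart — and more importantly, that a single disk $s_k$ can have only a limited number of such couples unless many point-pairs are involved. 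The cleanest route: charge each interior-interior couple $(i,j)$ of $\calC$ either to the point $p_i$ (or $p_j$) in a way that each point is charged $O(1)$ times across all disks, \emph{except} when two disks ``compete'' for the same point, in which case the two disks must intersect, and charge that couple to the intersecting pair $(s_k,s_{k'})$. Since there are at most $\kappa$ intersecting pairs, this yields the $+\kappa$ term, and the remaining couples are charged $O(n+m)$ times total, giving $|\calC| \le 2(n+m)+\kappa$.

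For the computation of $\calC$ within $O((n+m)\log(n+m)+\kappa\log m)$ time, I would use a plane sweep with a vertical line moving left to right, as foreshadowed in the ``Our approach'' section. The sweep maintains the set of disks currently ``active'' (whose $x$-span contains the sweep line), organized so that for an encountered point $p_i$ I can report all active disks whose upper arc lies below $p_i$ (i.e., $p_i$ is outside them) versus those covering $p_i$; the transitions of each disk between ``covering'' and ``not covering'' as the sweep advances correspond exactly to the bounding couples, and maintaining the active disks in a balanced structure keyed by the ``height'' of their arc at the current $x$ (or by rightmost point, as in Lemma~\ref{lem:l1a}) lets each event be processed in $O(\log m)$ amortized time plus $O(\log m)$ per couple output. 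The extra $\kappa\log m$ accounts for handling arc crossings: when two disk arcs swap order under the sweep line, which happens at most once per intersecting pair, we spend $O(\log m)$ to update the structure. I would present this sweep carefully, noting that unlike the $L_\infty$ case the circular arcs are not monotone in a way that keeps their order fixed, so reorderings must be tracked, and that is precisely why the $L_2$ algorithm is less efficient than the $L_\infty$ one.

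The main obstacle I anticipate is the interior-interior couple analysis — proving that couples of category (c) are charged to at most $2(n+m)+\kappa$ total slots. The subtlety is that a single disk $s_k$ can have many such couples (its outside-point list $P(s_k)$ can be long), so one cannot simply charge $O(1)$ per disk; and a single point $p_i$ can be the left (or right) end of interior-interior couples for many disks, so one cannot simply charge $O(1)$ per point either. The resolution has to exploit that if $p_i$ is the left end of an interior-interior couple for two disks $s_k$ and $s_{k'}$ with the same right end $p_j$, those disks intersect above $L$, and if the right ends differ, say $p_j \ne p_{j'}$ with $j<j'$, then $s_k$ covers $p_j$ is false while $s_{k'}$ covers it, which (combined with both covering $p_{i+1},\dots,p_{j-1}$ and both missing $p_i$) again forces $s_k$ and $s_{k'}$ to cross between $x(p_{j-1})$ and $x(p_j)$ or the arcs to be nested in a contradictory way — so each ``extra'' interior couple beyond a linear budget is injectively chargeable to a distinct intersecting pair. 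Getting this charging scheme exactly right, so the constants come out to $2(n+m)$ and the $\kappa$ term is not double-counted, is the delicate part; everything else (category (a), category (b), and the sweep) is routine in comparison.
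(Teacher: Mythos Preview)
Your overall decomposition into left/right couples (at most $2m$) and middle couples is correct and matches the paper. The sweep sketch for computing $\calC$ is also in the right spirit. However, your bound on the number of middle couples has a genuine gap.

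The charging scheme you propose --- charge each middle couple $(i,j)$ to the point $p_i$ (or $p_j$), and when several disks share the same left end $p_i$, charge the ``extra'' couples to intersecting disk pairs --- does not give an injective map into intersecting pairs. You correctly observe that if $s_k$ has middle couple $(i,j)$ and $s_{k'}$ has middle couple $(i,j')$ then $s_k$ and $s_{k'}$ must intersect. But the same pair $(s_k,s_{k'})$ can be charged repeatedly: take two disks with $\kappa=1$ where disk~1 has middle couples $(1,3),(3,5),(5,7)$ and disk~2 has middle couples $(1,4),(4,5),(5,8)$. At $p_1$ both disks contribute a couple, and at $p_5$ both disks again contribute a couple; your scheme charges the single intersecting pair twice. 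Being clever about choosing left versus right endpoint does not obviously rescue this, and you do not specify a rule that would.

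The paper avoids this by a different structural idea. Call two middle bounding intervals $[a,b]$ and $[a',b']$ \emph{conflicting} if they properly cross ($a<a'<b<b'$ or the symmetric case). The paper proves two facts: (i) any two disks give rise to at most \emph{one} pair of conflicting intervals, and (ii) if two intervals conflict then their defining disks intersect. Together these bound the number of conflicting pairs by $\kappa$. Deleting one interval from each conflicting pair leaves a family of pairwise non-crossing intervals --- a laminar family --- which can have at most $2n$ members (they form a forest over the $n-1$ unit intervals $[i,i+1]$). This yields $|\calC_m|\le 2n+\kappa$ with no injectivity headache, because the charge is to conflicting \emph{pairs of intervals}, not to shared endpoints. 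Note that in the example above, the intervals $[1,3]\subset[1,4]$ and $[5,7]\subset[5,8]$ are nested, not conflicting, so nothing is charged to the disk pair at all; the laminar bound absorbs them.

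Your computational sketch is also too thin to count as a proof. Saying ``maintain active disks keyed by arc height and pay $O(\log m)$ per crossing'' glosses over the real difficulty: when the sweep line hits a point $p_h$, you must identify \emph{all} subsets of active disks for which $p_h$ closes a middle couple, and these are not simply the disks whose arc is currently below $p_h$ at $l$. The paper maintains, for each point $p_{i_j}$ in a carefully chosen list $P(l)$, the set $H(i_j)$ of disks that went below $p_{i_j}$ but not below any later $p_{i_k}$; it further keeps the lowest arc of each $H(i_j)$ in an auxiliary structure $H^*$ so that the relevant subsets can be found and split in $O(\log m)$ each. The $\kappa\log m$ term arises not only from arc crossings in the Bentley--Ottmann sense but also from \emph{order-violation pairs} when merging the split-off pieces into the new set $H(h)$, and bounding those by $\kappa$ requires its own argument. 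Your proposal does not anticipate this layer of bookkeeping.
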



Consider a bounding couple $(i,j)$ of $\calC$, defined by a disk
$s_k$. We call it a {\em left bounding couple} if $p_i=p_l(s_k)$,
a {\em right bounding couple} if $p_j=p_r(s_k)$, and a {\em middle
bounding couple} otherwise (e.g., in Fig.~\ref{fig:defcouple}, $(2,3)$ is the left bounding couple, $(10,11)$ is the right bounding couple, and the rest are middle bounding couples).
It is easy to see that a disk can define at
most one left bounding couple and at most one right bounding couple.
Therefore, the number of left and right bounding couples in $\calC$ is
at most $2m$. It remains to bound the number of middle bounding
couples of $\calC$.

In the following, we will prove Lemmas~\ref{lem:linftyc} and \ref{lem:l2c} in Sections~\ref{sec:linfty} and \ref{sec:l2}, respectively.

\subsection{The $L_{\infty}$ metric}
\label{sec:linfty}

In this section, our goal is to prove Lemma~\ref{lem:linftyc}.

In the $L_{\infty}$ metric, every disk is a square that has four
axis-parallel edges.  We use
$l_k$ and $r_k$ to particularly refer to the left and right endpoints of the upper
edge of $s_k$, respectively.

For a point $p_i$ and a square $s_k$, we say that $p_i$ is {\em
vertically above (resp., below)} the upper edge of $s_k$ if $p_i$ is above (resp., below) the
upper edge of $s_k$ and $x(l_k)\leq x(p_i)\leq x(r_k)$.
Due to our general position assumption, $p_i$ is not on the boundary of $s_k$, and thus
$p_i$ above/below the upper edge of $s_k$ implies that $p_i$ is
strictly above/below the edge. Also, since no point of $P$ is below $L$,
a point $p_i\in P$ is in $s_k$ if and only if $p_i$ is vertically
below the upper edge of $s_k$.
If $p_i$ is vertically above the upper edge of $s_k$, we also say that $p_i$ is vertically above $s_k$ or $s_k$ is vertically below $p_i$.

The following lemma proves an upper bound for $|\calC|$.

\begin{lemma}\label{lem:bound}
$|\calC|\leq 2(n+m)$.
\end{lemma}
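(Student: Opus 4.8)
The plan is to bound the number of middle bounding couples in $\calC$ by $2n$, since we already know the left and right bounding couples contribute at most $2m$, giving $|\calC|\le 2n+2m=2(n+m)$. To this end, I would charge each middle bounding couple to one of its two endpoints (a point of $P$) in such a way that each point of $P$ is charged at most twice; summing over the $n$ points then yields the bound $2n$ on the middle couples.

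First I would set up the charging scheme. Consider a middle bounding couple $(i,j)$ defined by a square $s_k$. By definition of $P(s_k)$, both $p_i$ and $p_j$ lie outside $s_k$, i.e.\ each is vertically above the upper edge of $s_k$ or lies horizontally outside the $x$-range $[x(l_k),x(r_k)]$; since $(i,j)$ is a \emph{middle} couple, $p_i\ne p_l(s_k)$ and $p_j\ne p_r(s_k)$, and I expect this forces both $p_i$ and $p_j$ to actually be vertically \emph{above} the upper edge of $s_k$ — the crucial structural fact coming from the $L_\infty$ geometry (a square centered on $L$ extends as far horizontally at height $0$ as anywhere in the upper half, so a point that is horizontally inside the $x$-span but outside the square must be above the top edge, and a middle couple's endpoints lie strictly between $p_l$ and $p_r$, hence strictly inside the $x$-span). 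Moreover, between $p_i$ and $p_j$ (exclusively) all points of $P$ lie \emph{inside} $s_k$, i.e.\ vertically below its upper edge. So a middle couple $(i,j)$ witnesses a ``consecutive dip'' of the point set below the top edge of $s_k$, bracketed on the left by $p_i$ above and on the right by $p_j$ above.

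The key step is then to argue that for a fixed point $p_i$, it can serve as the \emph{left} endpoint of at most one middle bounding couple, and as the \emph{right} endpoint of at most one middle bounding couple, over all squares $s_k\in S$ (after the dedup that keeps only the min-weight copy — but actually I want the stronger statement that even without dedup there are few, or at least that the distinct couples are few). Suppose $p_i$ is the left endpoint of middle couples $(i,j)$ and $(i,j')$ with $j<j'$, defined by squares $s_k$ and $s_{k'}$. Both $s_k$ and $s_{k'}$ have $p_i$ vertically above their upper edges and $p_{i+1},\dots$ up to their respective right partners vertically below. I would compare the heights of the upper edges of $s_k$ and $s_{k'}$: the one with the lower top edge, say $s_k$, has $p_j$ above it (as $j<j'$, $p_j$ is among the points the other square covers, so $p_j$ is below the top edge of $s_{k'}$, hence the top edge of $s_{k'}$ is higher than $y(p_j)$; meanwhile $p_j$ is the right partner for $s_k$ so $p_j$ is above $s_k$'s edge). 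Then I claim $p_{i+1},\dots,p_{j-1}$ being below $s_{k'}$'s (higher) edge plus $p_i$ above it and $p_j$ above it would already make $(i,j)$ a bounding couple of $s_{k'}$, contradicting that $(i,j')$ with $j'>j$ is $s_{k'}$'s couple containing $p_i$ as left end — i.e.\ the points between $p_i$ and $p_j$ outside $s_{k'}$ would have to be consecutive-in-$P(s_{k'})$ differently. I need to be careful here about whether $p_i$ really lies in $P(s_{k'})$, which requires $p_i$ to be between $p_l(s_{k'})$ and $p_r(s_{k'})$; this is where I expect the main technical friction, and I would handle it by using the horizontal extent argument above together with the fact that $p_i$ being strictly above $s_{k'}$ and within its $x$-span puts $p_i$ strictly between $p_l(s_{k'})$ and $p_r(s_{k'})$.

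The main obstacle will be making the ``at most one middle couple per endpoint per side'' claim fully rigorous: I must pin down precisely the $L_\infty$-specific geometric fact that ordering squares by the height of their top edge is compatible with containment of their ``dips'' below that edge, and rule out degenerate interleavings. Once that is in hand, summing gives at most $2n$ middle couples, hence $|\calC|\le 2(n+m)$, completing the proof of Lemma~\ref{lem:bound}. (The algorithmic half of Lemma~\ref{lem:linftyc} — computing $\calC$ in $O((n+m)\log(n+m))$ time — is a separate sweep and is not needed for this statement.)
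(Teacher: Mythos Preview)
Your overall charging strategy is right in spirit, but the specific claim you rely on---that each point $p_i$ can be the \emph{left} endpoint of at most one middle bounding couple (and symmetrically for right endpoints)---is false. Here is a small counterexample. Take $p_1=(1,10)$, $p_2=(2,3)$, $p_3=(3,5)$, $p_4=(4,7)$, and two squares $s_1,s_2$ whose $x$-ranges contain all four points, with top-edge heights $4$ and $6$ respectively. Then $s_1$ defines the middle couple $(1,3)$ (since $p_1,p_3$ are above height $4$ and $p_2$ is below), while $s_2$ defines the middle couple $(1,4)$ (since $p_1,p_4$ are above height $6$ and $p_2,p_3$ are below). So $p_1$ is the left endpoint of two distinct middle couples. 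More generally, a single very high point can be the left endpoint of $\Theta(n)$ middle couples, one for each ``level'' of top-edge height. Your own argument reveals the inconsistency: you first establish that $p_j$ is \emph{below} $s_{k'}$'s top edge (since $j<j'$ and $s_{k'}$ covers $p_j$), but then a few lines later claim $p_j$ is \emph{above} it to force $(i,j)$ to be a couple of $s_{k'}$; these cannot both hold.

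The fix the paper uses is to charge each middle couple not to an arbitrary endpoint but to the \emph{lower} of its two endpoints, and then argue a directional uniqueness: for any point $p_j$, among all points of $P$ lying to the \emph{northwest} of $p_j$ there is at most one that forms a middle couple with $p_j$, and likewise for the northeast. The proof is short: if $p_h$ and $p_i$ with $h<i$ are both northwest of $p_j$ and $(h,j)$ is a middle couple defined by some square $s_k$, then $p_h,p_j$ are vertically above $s_k$; but $p_i$, being northwest of $p_j$ and hence higher than $p_j$, is also vertically above $s_k$, and since $h<i<j$ this blocks $(h,j)$ from being a couple of $s_k$. Charging to the lower endpoint and invoking this observation gives at most two charges per point, hence at most $2n$ middle couples. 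The height comparison (northwest/northeast) is exactly the missing ingredient in your attempt.
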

\begin{proof}
Recall that the total number of left and right bounding couples of
$\calC$ is at most $2m$.  In the following, we show that the number of middle
bounding couples of $\calC$ is at most $2n$.

We first prove an {\em observation}: For each point $p_j$ of
$P$, among all points of $P$ to the northwest of $p_j$, there is at
most one point that can form a middle bounding couple with $p_j$; similarly,
among all points of $P$ to the northeast of $p_j$, there is at
most one point that can form a middle bounding couple with $p_j$.

We only prove the northwest case since the other
case is analogous. Suppose there is a point $p_i\in P$ to the northwest of $p_j$ and
$(p_i,p_j)$ is a middle bounding couple.
Assume to the contrary that there is another point $p_h\in P$ to the northwest of $p_j$ and
$(p_h,p_j)$ is a middle
bounding couple defined by a disk $s_k$. Without loss of generality, we assume $h<i$.

Since $(p_h,p_j)$ is a middle bounding couple, both $p_h$ and $p_j$
are vertically above $s_k$.
Since $p_i$ is to the northwest of $p_j$ and $h<i<j$,
$p_i$ is also vertically above $s_k$.
But then $p_i$ would prevent $(h,j)$ from being a middle bounding couple
defined by $s_k$, incurring contradiction.
This proves the observation.

We proceed to show that the number
of middle bounding couples is at most $2n$. Indeed, for any middle bounding couple
$(i,j)$ of $\calC$, we charge it to the lower point of $p_i$ and $p_j$. In
light of the observation, each point of $P$ will be charged at most
twice. As such, the total number of middle bounding couples is at most $2n$.
The lemma thus follows. \qed
\end{proof}

We proceed to compute the set $\calC$. The following lemma gives an
algorithm to compute all left and right bounding couples of $\calC$.

\begin{lemma}\label{lem:leftrightcouples}
All left and right bounding couples of $\calC$ can be computed in
$O((n+m)\log(n+m))$ time.
\end{lemma}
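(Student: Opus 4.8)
The plan is to compute all left bounding couples by one left-to-right plane sweep and all right bounding couples by the mirror-image right-to-left sweep; I describe only the former. For a disk $s_k$, let $t_k$ denote the $y$-coordinate of its upper edge (so $x(l_k)$ and $x(r_k)$ are the left and right sides of the square). The left bounding couple of $s_k$ is $(p_l(s_k),p_j)$, where $p_l(s_k)$ is the rightmost point of $P\cup\{p_0\}$ strictly left of $l_k$ and $p_j$ is the next point of $P(s_k)$ to its right. Since a point of $P$ with $x$-coordinate in the strip $(x(l_k),x(r_k))$ lies inside $s_k$ exactly when it is below the upper edge of $s_k$, the point $p_j$ is precisely the leftmost point of $P$ in that strip that lies vertically above $s_k$; and if no such point exists, then $p_j=p_r(s_k)$, the leftmost point of $P\cup\{p_{n+1}\}$ strictly right of $r_k$. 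So the task reduces to: for each $s_k$, find the leftmost point of $P$ whose $x$-coordinate lies in $(x(l_k),x(r_k))$ and whose $y$-coordinate exceeds $t_k$, falling back to $p_r(s_k)$ otherwise.

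I would sweep a vertical line $\ell$ from left to right over the $x$-coordinates of the points of $P\cup\{p_0,p_{n+1}\}$ and of the left and right edges of all squares, obtained by sorting in $O((n+m)\log(n+m))$ time (ties broken so that, at a common $x$-coordinate, a left-edge event precedes a point event, which precedes a right-edge event). During the sweep I maintain: a set $A$ of \emph{active} squares, namely those whose left edge has been crossed, whose right edge has not, and whose left bounding couple is not yet fixed, stored in a balanced binary search tree keyed by $t_k$; the last point of $P\cup\{p_0\}$ crossed; and a list $Z$ of squares that reached their right edge while still active. The events are handled as follows. (i) At the left edge of $s_k$: record $p_l(s_k)$ as the last crossed point and insert $s_k$ into $A$. (ii) At a point $p_j\in P\cup\{p_0,p_{n+1}\}$: first assign every square $s_h\in Z$ the left bounding couple $(p_l(s_h),p_j)$ and empty $Z$; then repeatedly extract from $A$ the square $s_k$ of minimum $t_k$, and while $t_k<y(p_j)$ assign it the couple $(p_l(s_k),p_j)$ and delete it, stopping once the minimum key is at least $y(p_j)$ or $A$ is empty; finally update the last crossed point to $p_j$. (iii) At the right edge of $s_k$: if $s_k\in A$, delete it from $A$ and append it to $Z$.

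Correctness rests on a short invariant argument: if $s_k$ is still in $A$ when $p_j$ is processed, then every point $p_h\in P$ with $x(l_k)<x(p_h)<x(p_j)$ was also in $A$ when it was processed and, not having triggered the extraction of $s_k$, satisfies $y(p_h)<t_k$, i.e. $p_h\in s_k$; hence $p_j$ (if $y(p_j)>t_k$) is indeed the leftmost point of $P$ in the strip above $s_k$, and a square surviving to its right edge has its whole strip inside it, so its couple is $(p_l(s_k),p_r(s_k))$, correctly supplied at the next point event via $Z$ (with $p_{n+1}$ serving as the final such event). For the time bound, each square is inserted into $A$ once and deleted once — either by an extract-min in rule (ii) or by rule (iii) — each operation costing $O(\log m)$, and the total number of extract-min calls is $O(m)$; together with the $O(n+m)$ events and the initial sort, the sweep runs in $O((n+m)\log(n+m))$ time, and the symmetric sweep yields the right bounding couples in the same bound. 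There is no deep obstacle here; the only point requiring care is the case split between an interior point and $p_r(s_k)$ as the right end of the couple, which the auxiliary list $Z$ resolves, and making sure the active-set invariant above is maintained exactly across the three event types.
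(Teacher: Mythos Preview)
Your proposal is correct and takes essentially the same approach as the paper: a left-to-right sweep maintaining a balanced BST of active squares keyed by upper-edge height, extracting squares below each encountered point. The only differences are cosmetic implementation details --- the paper precomputes $p_l(s_k)$ and $p_r(s_k)$ by binary search and assigns $p(s_k)=p_r(s_k)$ immediately at the right-edge event, whereas you recover $p_l(s_k)$ as the last swept point and defer the $p_r(s_k)$ assignment via your auxiliary list $Z$ to the next point event; both are equivalent and achieve the same $O((n+m)\log(n+m))$ bound.
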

\begin{proof}
We only describe how to compute all left bounding couples, and the
algorithm for computing the right bounding couples is similar.

First of all, we compute the points $p_l(s_k)$ and  $p_r(s_k)$ for all $k=1,2,\ldots,
m$. Each such point can be computed in $O(\log n)$ time by binary
search on the sorted sequence of $P$. Hence, computing all such points
takes $O(m\log n)$ time. To compute all left bounding couples, it is
sufficient to compute the points $p(s_k)$ for all disks $s_k\in S$, where $p(s_k)$ is the
leftmost point of $P$ outside $s_k$ and between $l_k$ and $r_k$ if it
exists and $p(s_k)$ is $p_r(s_k)$ otherwise, because
$(p_l(s_k),p(s_k))$ is the left bounding couple defined by $s_k$.
To this end, we propose the following algorithm.

We sweep a vertical line $l$ from left to
right, and an event happens if $l$ encounters a point of
$P\cup\{l_k,r_k|\ 1\leq k\leq m\}$. For this, we first sort all points of
$P\cup\{l_k,r_k|\ 1\leq k\leq m\}$. During the sweeping, we
use a balanced binary search tree $T$ to maintain those disks $s_k$
intersecting $l$ whose points $p(s_k)$ have not been computed yet. The
disks in $T$ are ordered by the $y$-coordinates of their upper edges.

During the sweeping, if $l$ encounters the left endpoint $l_k$ of
a disk $s_k$, we insert $s_k$ into $T$. If $l$
encounters the right endpoint $r_k$ of $s_k$, we remove
$s_k$ from $T$ and set $p(s_k)=p_r(s_k)$. If $l$
encounters a point $p_i$ of $P$, then for each disk $s_k$ of $T$ whose
upper edge is below $p_i$, we set $p(s_k)=p_i$ and remove $s_k$ from $T$.

It is not difficult to see that the algorithm correctly computes all
points $p(s_k)$ for all $s_k\in S$ in $O((n+m)\log (m+n))$ time. The lemma
thus follows. \qed
\end{proof}

In the following, we focus on computing all middle bounding couples of
$\calC$.



\subsubsection{Computing the middle bounding couples}


We sweep a vertical line $l$ from left to right, and an event happens
if $l$ encounters a point in $P\cup\{l_k,r_k|\ 1\leq k\leq m\}$.
Let $H$ be the set of disks that intersect $l$. During the sweeping, we maintain the following information and
invariants (e.g., see Fig.~\ref{fig:invariants}).

\begin{figure}[t]
\begin{minipage}[t]{\textwidth}
\begin{center}
\includegraphics[height=1.5in]{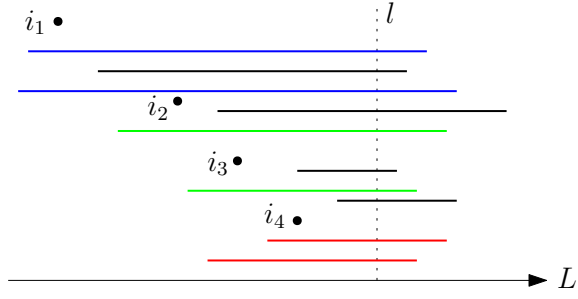}
\caption{\footnotesize Illustrating the information maintained by our sweeping algorithm. $P(l)=\{p_{i_1},p_{i_2},p_{i_3},p_{i_4}\}$. Each horizontal segment represents the upper edge of a disk. $H(i_1)$ consists of two blue disks and $H(i_4)$ consists of two red disks. $H_0$ consists of three black disks.}
\label{fig:invariants}
\end{center}
\end{minipage}
\vspace{-0.15in}
\end{figure}

\begin{enumerate}
\item
A sequence $P(l)=\{p_{i_1},p_{i_2},\ldots,p_{i_t}\}$ of $t$ points of $P$, which are to the left of $l$ and ordered from northwest to southeast.
$P(l)$ is stored in a balanced binary search tree $T(P(l))$.

\item
A collection $\calH$ of $t+1$ subsets of $H$: $H(i_j)$ for $j=0,1,\ldots,t$, which form a partition of $H$, defined as follows.

$H(i_t)$ is the subset of disks of $H$ that are vertically below $p_{i_t}$.
For each $j=t-1, t-2, \ldots, 1$, $H(i_j)$ is the subset of disks of
$H\setminus\bigcup^t_{k=j+1}H(i_k)$ that are vertically below $p_{i_j}$.
$H(i_0)=H\setminus\bigcup^t_{j=1}H(i_j)$. While $H(i_0)$ may be empty, none of
$H(i_j)$ for $1\leq j\leq t$ is empty.

Each set $H(i_j)$ is
maintained by a balanced binary search tree $T(H(i_j))$ ordered by the
$y$-coordinates of the upper edges of the disks.
We have all disks stored in leaves of $T(H(i_j))$, and each
internal node $v$ of the tree also stores a weight equal to the minimum weight
of all disks in the leaves of the subtree rooted at $v$.

\item
For each point $p_{i_j}\in P(l)$, among all points of $P$ strictly between
$p_{i_j}$ and $l$, no point is vertically above any disk of $H(i_j)$.

\item
Among all points of $P$ strictly to the left of $l$, no point is vertically
above any disk of $H(i_0)$.
\end{enumerate}

In summary, our algorithm maintains the following trees: $T(P(l))$, $T(H(i_j))$ for all $j\in [0,t]$.

Initially when $l$ is to the left of all disks and points of $P$, we
have $H=\emptyset$ and $P(l)=\emptyset$.
We next describe how to process events.

If $l$ encounters the left endpoint $l_k$ of a disk $s_k$, we insert
$s_k$ to $H(i_0)$.
The time for processing this event is $O(\log m)$ since $|H(i_0)|\leq m$.

If $l$ encounters the right endpoint $r_k$ of a disk $s_k$, we need to determine which set $H(i_j)$ of $\calH$ contains $s_k$. For this, we associate each right endpoint with its disk in the preprocessing so that it can keep track of which set of $\calH$ contains the disk. Using this mechanism, we can determine the set $H(i_j)$ that contains $s_k$ in constant time.
We then remove $s_k$ from $T(H(i_j))$. If $H(i_j)$ becomes empty and $j\neq 0$, then we remove $p_{i_j}$ from
$P(l)$. One can verify that all algorithm invariants still hold.
The time for processing this event is $O(\log (m+n))$.

\begin{figure}[t]
\begin{minipage}[t]{\textwidth}
\begin{center}
\includegraphics[height=1.5in]{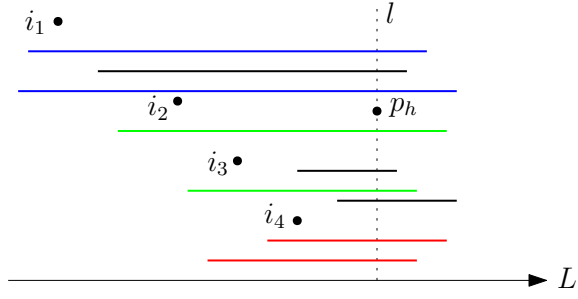}
\caption{\footnotesize Illustrating the processing of an event at $p_h\in P$. In this example, $i_2$, $i_3$, and $i_4$ will be removed from $P(l)$ and $p_h$ will be inserted to $P(l)$, so after the event $P(l)=\{p_{i_1},p_h\}$. Also, $(i_2,h)$, $(i_3,h)$, $(i_4,h)$ will be reported as middle bounding couples.}
\label{fig:algo}
\end{center}
\end{minipage}
\vspace{-0.15in}
\end{figure}

If $l$ encounters a point $p_h$ of $P$, which is a major event we need to handle, we process it as follows.
We search $T(P(l))$ to find the first point $p_{i_j}$ of $P(l)$ below
$p_h$ (e.g., $j=3$ in Fig.~\ref{fig:algo}). We remove the points $p_{i_k}$ for all $k\in [j,t]$ from $P(l)$.
We have the following lemma.

\begin{lemma}\label{lem:pair}
For each point $p_{i_k}$ with $k\in [j,t]$, $(i_k,h)$ is a middle bounding
couple defined by and only by the disks of $H(i_k)$ (i.e., $H(i_k)$ consists of all disks of $S$ that define $(i_k,h)$ as a middle bounding couple).
\end{lemma}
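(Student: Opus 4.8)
The plan is to fix a point $p_{i_k}$ with $k\in[j,t]$ and prove the two directions of the claimed equivalence: (a) every disk in $H(i_k)$ defines $(i_k,h)$ as a middle bounding couple, and (b) no disk outside $H(i_k)$ does so. Throughout I would lean on the sweep-line invariants~(1)--(4) maintained just before the event at $p_h$, together with the containment-of-a-vertical-slab argument already used in the proofs of Lemmas~\ref{lem:unitcover} and~\ref{lem:optgeneral}: if a point lies vertically below the upper edge of a square $s$, and $s$'s upper edge sits below $p_h$ while $x(l_s)\le x(p_h)\le x(r_s)$, then the whole part of $s$'s slab between two points of $P$ is covered.

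First I would establish~(a). Take any disk $s_q\in H(i_k)$. By the definition of $H(i_k)$ in invariant~(2), $s_q$ intersects $l$ and is vertically below $p_{i_k}$, hence $x(l_q)\le x(p_{i_k})< x(l)$ and $x(r_q)\ge x(l)> x(p_h)$ are not quite what I want---rather, $s_q$ intersects $l$ so $x(l_q)\le x(l)\le x(r_q)$, and since $p_h$ is the point $l$ currently meets, $x(l_q)\le x(p_h)\le x(r_q)$; likewise $x(l_q)\le x(p_{i_k})\le x(r_q)$ because $p_{i_k}\in P(l)$ is to the left of $l$ and $s_q$ is vertically below it. So both $p_{i_k}$ and $p_h$ lie in the $x$-range of the upper edge of $s_q$. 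Next, $p_{i_k}$ is vertically above $s_q$ by the definition of $H(i_k)$, and $p_h$ is vertically above $s_q$ because $p_{i_j}$ (and hence, by the northwest-to-southeast order in $P(l)$, every $p_{i_k}$ with $k\ge j$) is below $p_h$, so the upper edge of $s_q$, being below $p_{i_k}$, is below $p_h$ too. Thus $p_{i_k}$ and $p_h$ are both outside $s_q$. It remains to show no point of $P$ strictly between indices $i_k$ and $h$ lies outside $s_q$, so that $(i_k,h)$ is a \emph{consecutive} pair in the sorted list defining $C(s_q)$, and that $p_{i_k}\ne p_l(s_q)$ and $p_h\ne p_r(s_q)$ (making it a \emph{middle} couple). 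For the first part: a point $p_a$ with $i_k<a<h$ either lies to the left of $l$ or equals a point already swept; if $p_a$ were outside $s_q$ and vertically above it, invariant~(3) applied to $p_{i_k}$ would be violated (since $p_a$ is strictly between $p_{i_k}$ and $l$). If $p_a$ is not vertically above $s_q$ but still outside it, then $p_a$ is outside the $x$-slab of $s_q$'s upper edge, i.e. $x(p_a)<x(l_q)$ or $x(p_a)>x(r_q)$; but $x(l_q)\le x(p_{i_k})\le x(p_a)\le x(p_h)\le x(r_q)$, a contradiction. Hence every such $p_a$ is covered by $s_q$, so $P[i_k+1,h-1]$ is exactly a maximal subsequence covered by $s_q$ and $(i_k,h)$ is one of its bounding couples. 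That it is a \emph{middle} couple follows because $p_l(s_q)$ lies strictly left of $l_q$ hence strictly left of $p_{i_k}$ (as $x(l_q)\le x(p_{i_k})$ and general position rules out equality on the relevant coordinate; more carefully, $p_{i_k}$ is vertically above $s_q$ so $x(l_q)\le x(p_{i_k})$, and $p_l(s_q)$ is strictly left of $l_q$, so $p_l(s_q)\ne p_{i_k}$), and symmetrically $p_r(s_q)$ is strictly right of $r_q$ hence $\ne p_h$.

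For~(b) I would argue that any disk $s_q\in S$ that defines $(i_k,h)$ as a middle bounding couple must lie in $H(i_k)$. If $(i_k,h)$ is a middle couple of $C(s_q)$, then $p_{i_k}$ and $p_h$ are both outside $s_q$, both strictly inside the $x$-range $[x(l_q),x(r_q)]$ (this is what ``middle'' buys us, ruling out the left/right couples where an endpoint is $p_l(s_q)$ or $p_r(s_q)$), and every point of $P$ strictly between them is inside $s_q$. In particular $x(l_q)<x(p_{i_k})$ and $x(p_h)<x(r_q)$, so $s_q$ still intersects $l$ at the moment of the event, i.e. $s_q\in H$; and $p_{i_k},p_h$ being outside $s_q$ while inside its $x$-slab forces both to be vertically above the upper edge of $s_q$ (no point of $P$ is below $L$). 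Now I show $s_q$ lands in $H(i_k)$ and not in some $H(i_{k'})$ with $k'\neq k$ or in $H(i_0)$. Since $p_{i_k}$ is vertically above $s_q$, the partition rule in invariant~(2) assigns $s_q$ to $H(i_{k'})$ for some $k'$ with $p_{i_{k'}}$ vertically above $s_q$ and $k'$ as large as possible (equivalently $p_{i_{k'}}$ the southeast-most such point of $P(l)$). Suppose $k'>k$, i.e. $p_{i_{k'}}$ is strictly southeast of $p_{i_k}$ in the $P(l)$ order and is vertically above $s_q$. Because $p_{i_{k'}}\in P(l)$ it lies to the left of $l$, hence its index satisfies $i_{k'}<h$; and because $p_{i_{k'}}$ is vertically above $s_q$, we get $x(l_q)\le x(p_{i_{k'}})\le x(r_q)$. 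If also $i_k<i_{k'}$, then $p_{i_{k'}}$ is a point of $P$ with index strictly between $i_k$ and $h$ that is outside $s_q$, contradicting that $(i_k,h)$ is a bounding couple of $C(s_q)$. The remaining possibility is $i_{k'}<i_k$ with $p_{i_{k'}}$ vertically above $s_q$: then invariant~(3) for $p_{i_{k'}}$ (no point of $P$ strictly between $p_{i_{k'}}$ and $l$ is vertically above any disk of $H(i_{k'})$) together with $p_{i_k}$ being vertically above $s_q$ and strictly between $p_{i_{k'}}$ and $l$ shows $s_q\notin H(i_{k'})$, so $s_q$ must have been placed further down, i.e. in some $H(i_{k''})$ with $i_{k''}\ge i_k$; re-running the previous paragraph's argument with $k''$ in place of $k'$ again forces $i_{k''}=i_k$, i.e. $k''=k$. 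The case $s_q\in H(i_0)$ is excluded by invariant~(4) since $p_{i_k}$ is a point of $P$ strictly left of $l$ vertically above $s_q$. This exhausts the cases, so $s_q\in H(i_k)$, completing~(b) and the lemma.

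I expect the main obstacle to be part~(b): carefully ruling out that a disk defining $(i_k,h)$ as a middle couple could have been ``absorbed'' by a point $p_{i_{k'}}$ of $P(l)$ that is northwest of $p_{i_k}$ in the maintained order but has a smaller or larger index than $i_k$ in $P$. The subtlety is that the order within $P(l)$ is by northwest-to-southeast geometric position, not by $x$-index, so one must translate between ``vertically above'' (which controls the partition in invariant~(2)) and ``index strictly between $i_k$ and $h$'' (which controls being a bounding couple), and then invoke invariant~(3) at exactly the right point $p_{i_{k'}}$ to get the contradiction. Once the dictionary between these two orders is set up cleanly---and once one notes that ``middle'' is precisely the hypothesis that forbids the boundary cases $x(p_{i_k})\le x(l_q)$ or $x(p_h)\ge x(r_q)$---the rest is the by-now routine vertical-slab containment argument.
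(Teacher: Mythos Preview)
Your proof is correct and follows essentially the same two-direction strategy as the paper: show every disk in $H(i_k)$ defines $(i_k,h)$ as a middle bounding couple (via invariants~(2) and~(3)), and conversely that any disk defining this couple must lie in $H$, cannot lie in $H(i_0)$ by invariant~(4), and cannot lie in $H(i_b)$ for $b>k$ because $p_{i_b}$ would then be a point of $P$ strictly between $p_{i_k}$ and $p_h$ that is vertically above the disk. You are also more careful than the paper in explicitly checking that the couple is a \emph{middle} one (i.e., that $p_{i_k}\neq p_l(s_q)$ and $p_h\neq p_r(s_q)$), which the paper leaves implicit.

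The one place where you overcomplicate matters is your worry in part~(b) about the case $k'>k$ but $i_{k'}<i_k$. This case is vacuous: invariant~(1) says the points of $P(l)$ are ordered from northwest to southeast, which in particular means their $x$-coordinates are strictly increasing, so $i_1<i_2<\cdots<i_t$. Hence $k'>k$ automatically forces $i_{k'}>i_k$, and your first subcase already yields the contradiction. The ``subtlety'' you anticipate---that the $P(l)$ order might disagree with the $P$-index order---simply does not arise in the $L_\infty$ setting (it is a genuine issue only in the $L_2$ case, where the paper accordingly drops the northwest-to-southeast invariant). With this observation, your part~(b) collapses to exactly the paper's three-line argument.
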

\begin{proof}
By the definition of $H(i_k)$, $p_{i_k}$ is vertically above each disk of $H(i_k)$. By the definition of $j$ and also because all disks of $H(i_k)$ intersect $l$,
$p_h$ is vertically above each disk of $H(i_k)$. With the third
algorithm invariant, $(i_k,h)$ is a middle bounding couple defined by every disk of $H(i_k)$.

On the other hand, suppose a disk $s$ defines $(i_k,h)$ as a middle
bounding couple. Then, both $p_{i_k}$ and
$p_h$ must be vertically above $s$.
This implies that $s$ intersects $l$, and thus $s$ is in $H$.
By algorithm invariant (4), $s$ cannot be in $H(i_0)$.
Because $p_{i_k}$ is vertically above $s$, $s$ must be in $\bigcup_{b=k}^tH(i_b)$.
Further, since $(i_k,h)$ is a middle bounding couple, among all points of $P$ strictly between $p_{i_k}$ and $p_h$, no point is vertically above $s$. This implies that $s$ cannot be in $H(i_b)$ for any $b>k$.
Therefore, $s$ must be in $H(i_k)$. The lemma thus follows. \qed
\end{proof}

In light of Lemma~\ref{lem:pair}, for each $k\in [j,t]$, we report $(i_k,h)$ as a
middle bounding couple with weight equal to the minimum weight of all disks of
$H(i_k)$, which is stored at the root of $T(H(i_k))$.

Next, we process the point $p_{i_{j-1}}$, for which we have the following lemma. The proof technique is similar to that for Lemma~\ref{lem:pair}, so we omit it.

\begin{lemma}
If $p_h$ is vertically below the lowest disk of $H(i_{j-1})$,
then $(i_{j-1},h)$ is not a middle bounding couple; otherwise, $(i_{j-1},h)$ is a middle bounding couple
defined by and only by disks of $H_{j-1}$ that are vertically below $p_h$.
\end{lemma}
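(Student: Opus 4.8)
The plan is to mirror, almost verbatim, the argument used for Lemma~\ref{lem:pair}, making only the adjustments forced by the fact that $H(i_{j-1})$ was not fully ``cleared'' by the sweep line reaching $p_h$ (unlike the sets $H(i_k)$ for $k\ge j$, which are entirely vertically below $p_h$). The two directions of the biconditional are handled separately, after first dispatching the degenerate alternative.

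\textbf{Step 1 (the ``otherwise'' direction, membership).} Suppose $p_h$ is \emph{not} vertically below the lowest disk of $H(i_{j-1})$, i.e.\ some disk of $H(i_{j-1})$ is vertically below $p_h$. Let $s$ be any disk of $H(i_{j-1})$ that is vertically below $p_h$. By definition of $H(i_{j-1})$, the point $p_{i_{j-1}}$ is vertically above $s$; by assumption $p_h$ is vertically above $s$; and by algorithm invariant~(3), among all points of $P$ strictly between $p_{i_{j-1}}$ and $l$ no point is vertically above $s$ — in particular no point of $P$ strictly between $p_{i_{j-1}}$ and $p_h$ is vertically above $s$ (note $p_h$ is on $l$ at the time of the event, and all of $H$ intersects $l$, so $s$ covers no point of $P$ to the right of $p_h$ that is relevant). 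Hence $(i_{j-1},h)$ is a middle bounding couple defined by $s$. This shows every disk of $H(i_{j-1})$ vertically below $p_h$ defines $(i_{j-1},h)$.

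\textbf{Step 2 (the ``otherwise'' direction, only these disks).} Conversely, suppose a disk $s$ defines $(i_{j-1},h)$ as a middle bounding couple. Then both $p_{i_{j-1}}$ and $p_h$ are vertically above $s$, so $s$ intersects $l$ and thus $s\in H$. By invariant~(4), $s\notin H(i_0)$. Since $p_{i_{j-1}}$ is vertically above $s$ and $H(i_0),H(i_1),\dots,H(i_t)$ partition $H$ with each $H(i_b)$ ($b\ge 1$) consisting of disks vertically below $p_{i_b}$ and not below any $p_{i_{b'}}$ with $b'>b$, we get $s\in\bigcup_{b=j-1}^{t}H(i_b)$; and because $(i_{j-1},h)$ is a middle bounding couple, no point of $P$ strictly between $p_{i_{j-1}}$ and $p_h$ is vertically above $s$, which rules out $s\in H(i_b)$ for any $b\ge j$ (such a disk is vertically below $p_{i_b}$, and $p_{i_b}$ lies strictly between $p_{i_{j-1}}$ and $p_h$). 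Hence $s\in H(i_{j-1})$, and since $s$ is vertically below $p_h$, $s$ lies in the claimed subset. Combined with Step~1, $(i_{j-1},h)$ is defined by and only by the disks of $H(i_{j-1})$ vertically below $p_h$.

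\textbf{Step 3 (the degenerate direction).} If instead $p_h$ \emph{is} vertically below the lowest disk of $H(i_{j-1})$, then $p_h$ is vertically below \emph{every} disk of $H(i_{j-1})$ (the tree $T(H(i_{j-1}))$ orders disks by the $y$-coordinate of their upper edges, so ``lowest'' is the minimum). By the argument of Step~2, any disk defining $(i_{j-1},h)$ as a middle bounding couple must lie in $H(i_{j-1})$; but no disk of $H(i_{j-1})$ is vertically below $p_h$, so no disk has both $p_{i_{j-1}}$ and $p_h$ vertically above it, hence $(i_{j-1},h)$ is not a middle bounding couple. This completes the proof. \qed

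The only genuine subtlety — the ``main obstacle'' — is making precise, in Step~2, why a disk defining $(i_{j-1},h)$ cannot lie in some $H(i_b)$ with $b\ge j$: this needs the defining property of the partition (each disk in $H(i_b)$ is vertically below $p_{i_b}$) together with the ordering $i_{j-1}<i_j<\dots<i_t<h$ of indices along $l$, so that $p_{i_b}$ is a point of $P$ strictly between $p_{i_{j-1}}$ and $p_h$ witnessing a violation of the middle-bounding-couple condition; everything else is a direct transcription of the proof of Lemma~\ref{lem:pair}.
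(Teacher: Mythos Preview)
Your proposal is correct and follows exactly the approach the paper itself indicates: the paper omits the proof, stating only that ``the proof technique is similar to that for Lemma~\ref{lem:pair}'', and your three-step argument is precisely that adaptation. The only part that goes beyond a verbatim transcription of Lemma~\ref{lem:pair}---namely, in Step~2 ruling out $s\in H(i_b)$ for $b\ge j$ by observing that $p_{i_b}$ lies strictly between $p_{i_{j-1}}$ and $p_h$ and is vertically above $s$, and in Step~3 using the total order of upper edges in $T(H(i_{j-1}))$ to pass from ``below the lowest'' to ``below all''---is handled cleanly.
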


By the above lemma, we first check whether $p_h$ is vertically below the lowest disk of
$H(i_{j-1})$. If yes, we do nothing. Otherwise, we
report $(i_{j-1},h)$ as a middle bounding couple with weight equal to
the minimum weight of all disks of $H(i_{j-1})$ vertically below $p_h$, which
can be computed in $O(\log m)$ time by using weights at the internal
nodes of $T(H(i_{j-1}))$. We further have the following lemma.

\begin{lemma}
If all disks of $H(i_{j-1})$ are vertically below $p_h$, then there
does not exist a middle bounding couple $({i_{j-1}},b)$ with $b>h$.
\end{lemma}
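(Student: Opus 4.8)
The claim is: if all disks of $H(i_{j-1})$ are vertically below $p_h$, then there is no middle bounding couple $(i_{j-1},b)$ with $b>h$. I will prove this by contradiction, arguing that the disk witnessing such a couple would have to lie in $H(i_{j-1})$ at the moment $l$ reaches $p_h$, yet every disk of $H(i_{j-1})$ is ruled out because $p_h$ itself sits vertically above it.

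\textbf{Step 1 (locate the witnessing disk).} Suppose for contradiction that $(i_{j-1},b)$ is a middle bounding couple for some $b>h$, defined by a disk $s\in S$. By definition of a middle bounding couple, both $p_{i_{j-1}}$ and $p_b$ are vertically above $s$, and moreover no point of $P$ strictly between $p_{i_{j-1}}$ and $p_b$ is vertically above $s$. Since $p_{i_{j-1}}$ is vertically above $s$ and $p_b$ (with $b>h$) is vertically above $s$, the sweep line $l$ at its current position (at $p_h$, which lies between $p_{i_{j-1}}$ and $p_b$ in $x$-order) intersects $s$, so $s\in H$.

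\textbf{Step 2 (argue $s\in H(i_{j-1})$).} This is the crux and mirrors the reasoning in Lemma~\ref{lem:pair}. By algorithm invariant (4), $s\notin H(i_0)$, since $p_{i_{j-1}}$ is a point of $P$ strictly to the left of $l$ that is vertically above $s$. Because $p_{i_{j-1}}$ is vertically above $s$ and the points of $P(l)$ are ordered from northwest to southeast, $s$ must lie in $\bigcup_{b\ge j-1} H(i_b)$ — a point strictly southeast of $p_{i_{j-1}}$ among $P(l)$ cannot be vertically above $s$ without contradicting the ``strictly between'' condition of the couple (such a point would be strictly between $p_{i_{j-1}}$ and $p_b$). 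Hence $s\in H(i_{j-1})$: it cannot be in $H(i_k)$ for $k\ge j$ since $p_{i_k}$ would then be a point of $P$ strictly between $p_{i_{j-1}}$ and $p_b$ vertically above $s$, violating the middle-bounding-couple condition (recall $x(p_{i_{j-1}})<x(p_{i_k})<x(l)\le x(p_h)<x(p_b)$).

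\textbf{Step 3 (derive the contradiction).} By hypothesis, every disk of $H(i_{j-1})$ is vertically below $p_h$, i.e., $p_h$ is vertically above it. In particular $p_h$ is vertically above $s$. But $x(p_{i_{j-1}})<x(p_h)<x(p_b)$, so $p_h$ is a point of $P$ strictly between $p_{i_{j-1}}$ and $p_b$ that is vertically above $s$, contradicting the assumption that $(i_{j-1},b)$ is a bounding couple of $s$. This completes the proof. The main obstacle is Step 2 — being careful about which points of $P(l)$ could or could not be vertically above $s$ — but it is exactly the invariant-based bookkeeping already used in Lemma~\ref{lem:pair}, so no genuinely new idea is needed.
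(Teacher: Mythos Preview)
Your proof is correct and follows essentially the same approach as the paper: argue by contradiction, show the witnessing disk $s$ lies in $H$ and (via the partition invariants) in $\bigcup_{k\ge j-1}H(i_k)$, and conclude $s$ is vertically below $p_h$, contradicting the bounding-couple condition. The only cosmetic difference is that you pin $s$ down to $H(i_{j-1})$ exactly (by ruling out $H(i_k)$ for $k\ge j$), whereas the paper leaves $s$ in $\bigcup_{k=j-1}^{t}H(i_k)$ and instead observes that every disk in that union is vertically below $p_h$; both routes reach the same contradiction.
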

\begin{proof}
Assume to the contrary that $(i_{j-1},b)$ is such a middle bounding couple
with $b>h$, say, defined by a disk $s$. Then, since
$x(p_{i_{j-1}})<x(p_h) = x(l) < x(p_b)$,
$s$ intersects $l$, and thus $s$ is in $H$. Also, since $s$ defines
the couple, $p_{i_{j-1}}$ is vertically above $s$.
Note that all disks of $H$ vertically below $p_{i_{j-1}}$
must be in $\bigcup_{k=j-1}^tH(i_k)$, and thus $s$ is in
$\bigcup_{k=j-1}^tH(i_k)$. Recall that all disks of
$\bigcup_{k=j}^tH(i_k)$ are vertically below $p_h$.
Since all disks of $H(i_{j-1})$ are vertically below $p_h$, all disks of
$\bigcup_{k=j-1}^tH(i_k)$ are vertically below $p_h$. Hence, $s$ is also
vertically below $p_h$. Because all three points $p_{i_{j-1}}$, $p_h$, and
$p_b$ are vertically above $s$, and $x(p_{i_{j-1}})<x(p_h)< x(p_b)$,
$(i_{j-1},b)$ cannot be a bounding couple defined by $s$.
The lemma thus follows. \qed
\end{proof}

We check whether $p_h$ is above the highest disk of $H(i_{j-1})$
using the tree $T(H(i_{j-1}))$. If yes, then the above lemma tells that there will be no more middle
bounding couples involving $i_{j-1}$ any more, and thus
we remove $p_{i_{j-1}}$ from $P(l)$.

The following lemma implies that all middle bounding couples with $p_h$
as the right end have been computed.

\begin{lemma}\label{lem:pair1}
For any middle bounding couple $(b,h)$, $b$ must be in $\{i_{j-1},i_j,\ldots,
i_t\}$.
\end{lemma}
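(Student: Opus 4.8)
The plan is to identify, for a disk $s_k$ that defines a given middle bounding couple $(b,h)$, which block $H(i_a)$ of the partition $\calH$ contains $s_k$, and then to show that membership in $H(i_a)$ together with the sweep-line invariants forces $b=i_a$ with $a\ge j-1$. Throughout I use that, while the event at $p_h$ is being processed, $x(l)=x(p_h)$ and every point of $P(l)$ lies strictly to the left of $l$, and that by general position no two points of $P$ share an $x$-coordinate (so the $x$-order of points agrees with their index order).

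First I would locate $s_k$ in $\calH$. Since $(b,h)$ is a middle bounding couple of $s_k$, both $p_b$ and $p_h$ are vertically above $s_k$, hence $x(l_k)\le x(p_b)<x(p_h)=x(l)\le x(r_k)$, so $s_k$ crosses $l$ and $s_k\in H$. Algorithm invariant (4) states that no point of $P$ strictly to the left of $l$ is vertically above any disk of $H(i_0)$; as $p_b$ is such a point and is vertically above $s_k$, we get $s_k\notin H(i_0)$. Since $\calH$ partitions $H$, there is an index $a\in[1,t]$ with $s_k\in H(i_a)$, and by the definition of $H(i_a)$ the point $p_{i_a}$ is vertically above $s_k$.

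Next I would pin down $b$. Because $p_b,p_h$ are vertically above $s_k$ and $p_l(s_k),p_r(s_k)$ are the points of $P\cup\{p_0,p_{n+1}\}$ flanking $l_k,r_k$, every point of $P$ whose $x$-coordinate lies strictly between $x(p_b)$ and $x(p_h)$ must lie inside $s_k$ (otherwise it would be a point of $P(s_k)$ strictly between $p_b$ and $p_h$, contradicting that $(b,h)$ is a bounding couple). The point $p_{i_a}$ is vertically above $s_k$, hence not inside it, and $x(p_{i_a})<x(l)=x(p_h)$; therefore $x(p_{i_a})\le x(p_b)$, i.e.\ $i_a\le b$. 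If $i_a<b$, then $p_b$ would be a point of $P$ strictly between $p_{i_a}$ and $l$ that is vertically above $s_k\in H(i_a)$, contradicting invariant (3). Hence $b=i_a$.

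Finally I would show $a\ge j-1$ using the choice of $j$. Suppose $a\le j-2$. Then $p_{i_{a+1}}$ lies in the current $P(l)$ and $x(p_b)=x(p_{i_a})<x(p_{i_{a+1}})<x(l)=x(p_h)$, so $p_{i_{a+1}}$ lies inside $s_k$ by the property above; consequently its $y$-coordinate is below the upper edge of $s_k$, which is below $y(p_h)$, so $p_{i_{a+1}}$ lies below $p_h$ — contradicting that $p_{i_j}$ is the \emph{first} point of $P(l)$ below $p_h$, since $a+1\le j-1<j$. Therefore $a\ge j-1$ and $b=i_a\in\{i_{j-1},i_j,\ldots,i_t\}$. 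I expect the third paragraph (deducing $b=i_a$) to be the main obstacle, since it is the one place where the ``nothing strictly in between lies outside the disk'' property of bounding couples must be combined carefully with invariant (3); the remaining steps are direct applications of the maintained invariants.
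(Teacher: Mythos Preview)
Your argument is correct. The first two paragraphs follow the paper's proof closely: you place $s_k$ in some block $H(i_a)$ via invariant~(4), then use the bounding-couple property (to rule out $i_a>b$) and invariant~(3) (to rule out $i_a<b$) to conclude $b=i_a$. This is exactly the paper's two-case analysis, just organized as a direct argument rather than a proof by contradiction.

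Your third paragraph, however, does more than the paper. The paper's proof assumes $b\notin\{i_{j-1},\ldots,i_t\}$ and then splits only into the cases $i_a>b$ and $i_a<b$; it never treats the possibility $i_a=b$ with $a\le j-2$, which is not excluded by the contradiction hypothesis. Your final step---observing that if $a\le j-2$ then $p_{i_{a+1}}$ lies strictly between $p_b$ and $p_h$, hence inside $s_k$, hence below the upper edge of $s_k$, hence below $p_h$, contradicting the minimality of $j$---is precisely what is needed to close that case, and it is absent from the paper's own proof. So the step you flagged as the ``main obstacle'' (deducing $b=i_a$) is in fact the routine part shared with the paper; the genuinely new content is your last paragraph, and it makes your proof more complete than the published one.
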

\begin{proof}
Assume to the contray that $(b,h)$ is a middle bounding couple with
$b$ not in the set $\{i_{j-1},i_j,\ldots, i_t\}$, say, defined by a
disk $s$. Then, $s$ must intersect $l$, and thus is in $H$. Also, $s$
is vertically below both $p_b$ and $p_h$.

First of all, since $p_b$ is strictly to the left of $l$ and $p_b$ is vertically above $s$,
by our algorithm invariant (4), $s$ cannot be in $H(i_0)$. Thus, $s$ is
in $H(i_j)$ for some $j\in [1,t]$. Depending on whether $i_j<b$, there are two cases.

If $i_j>b$, then since $s\in H(i_j)$, $p_{i_j}$ is vertically above $s$.
Because $x(p_b)<x(p_{i_j})<x(p_h)$ and all these three points are
vertically above $s$, $(b,h)$ cannot be a middle
bounding couple defined by $s$, incurring contradiction.

If $i_j<b$, then since $s\in H(i_j)$ and $p_b$ is vertically above $s$,
we obtain contradiction with our algorithm invariant (3) as $p_b$ is
strictly between $p_{i_j}$ and $l$.
\qed
\end{proof}

Next, we add $p_h$ to the end of the current sequence $P(l)$ (note
that the points $p_{i_k}$ for all $k\in [j,t]$ and possibly
$p_{i_{j-1}}$ have been removed from $P(l)$; e.g., see
Fig.~\ref{fig:algo}). Finally, we need to
compute the tree $T(H(h))$ for the set $H(h)$, which is comprised of
all disks of $H$ vertically below $p_h$ since $p_h$ is the lowest
point of $P(l)$. We compute $T(H(h))$ as follows.

First, starting from an
empty tree, for each $k=t, t-1,\dots, j$ in this order, we merge $T(H(h))$ with the
tree $T(H(i_k))$. Notice that the upper edge of each disk in $T(H(i_{k}))$
is higher than the upper edges of all
disks of $T(H(h))$. Therefore, each such merge operation can be done in
$O(\log m)$ time. Second, for the tree $T(H(i_{j-1}))$, we perform a
split operation to split the disks into those with upper edges above $p_h$ and
those below $p_h$, and then merge those below $p_h$ with
$T(H(h))$ while keeping those above $p_h$ in $T(H(i_{j-1}))$.
The above split and merge operations can be done in
$O(\log m)$ time. Third, we remove those disks below $p_h$ from $H(i_0)$
and insert them to $T(H(h))$. This is done by repeatedly removing the
lowest disk $s$ from $H(i_0)$ and inserting it to $T(H(h))$ until the upper
edge of $s$ is higher than $p_h$.
This completes our construction of the tree
$T(H(h))$.

The above describes our algorithm for processing the event at $p_h$. One can
verify that all algorithm invariants still hold. The running time of this step is
$O((1+k_1+k_2)\log m)$ time, where $k_1$ is the number of points
removed from $P(l)$ (the number of merge operations is at most $k_1$)
and $k_2$ is the number of disks of $H(i_0)$ got removed for
constructing $T(H(h))$. As we sweep the line $l$ from left to right, once
a point is removed from $P(l)$, it will not be inserted again, and
thus the total sum of $k_1$ in the entire algorithm is at most $n$.
Also, once a disk is removed from $H(i_0)$, it will never be inserted
again, and thus the total sum of $k_2$ in the entire algorithm is at
most $m$. Hence, the overall time of the algorithm is
$O((n+m)\log(n+m))$.
This proves Lemma~\ref{lem:linftyc}.

\subsection{The $L_2$ metric}
\label{sec:l2}

In this section, our goal is to prove Lemma~\ref{lem:l2c}.

Recall our general position assumption that no point of $P$ is on
the boundary of a disk of $S$. Also recall that all points of $P$ are
above $L$. In the $L_2$ metric, the two extreme points $l_k$ and $r_k$
of a disk $s_k$ are unique.
For a point $p_i\in P$ and a disk $s_k\in S$, we say that $p_i$ is {\em
vertically above} $s_k$ if $p_i$ is outside $s_k$ and $x(l_k)\leq
x(p_i)\leq x(r_k)$, and $p_i$ is {\em vertically below} $s_k$ if $p_i$ is inside $s_k$.
We also say that $s_k$ is vertically below $p_i$ if $p_i$ is vertically above $s_k$.

The following lemma gives an upper bound for $|\calC|$.

\begin{lemma}\label{lem:boundl2}
$|\calC|\leq 2(n+m)+\kappa$.
\end{lemma}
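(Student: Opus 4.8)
The plan is to bound the number of middle bounding couples by $2n + \kappa$, mirroring the $L_\infty$ argument in Lemma~\ref{lem:bound} but carefully accounting for the extra couples that arise because the boundary of an $L_2$ disk is curved rather than flat. Recall that left and right bounding couples number at most $2m$ in total, and $|\calC| \le 2n + \kappa + 2m$ would give exactly the claimed bound. So the whole work is in the middle bounding couples.

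First I would try to reprove the key observation from Lemma~\ref{lem:bound}: for a fixed point $p_j \in P$, among the points of $P$ to the northwest of $p_j$, at most one can form a middle bounding couple with $p_j$ (and symmetrically for northeast). In the $L_\infty$ case this followed because ``$p_i$ and $p_j$ both vertically above the square $s_k$, with $h < i < j$ and $p_i$ northwest of $p_j$'' forces $p_i$ also vertically above $s_k$, killing $(h,j)$. In the $L_2$ case the flat-top property is gone: a point northwest of $p_j$ and above the disk $s_k$ near $p_h$ need not be above $s_k$, because the circular arc bulges. So the observation fails exactly when two disks involved in the two couples $(p_h,p_j)$ and $(p_i,p_j)$ are \emph{different} disks $s_k \ne s_{k'}$ that both pass ``between'' the relevant points in a way only a curved boundary allows — and the natural thing is that such a configuration forces $s_k$ and $s_{k'}$ to intersect. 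The hard part will be making this precise: I would argue that if $p_j$ has two distinct middle-bounding-couple partners $p_i, p_h$ to its northwest, witnessed by disks $s_k, s_{k'}$, then (using $x(l_k) \le x(p_h) < x(p_i) < x(p_j) \le x(r_k)$ and the analogous containment of $[x(l_{k'}),x(r_{k'})]$, plus the fact that each point is outside its disk while the intermediate points are inside) the two disks must cross — so I can charge the ``extra'' partner beyond the first one to an intersecting pair.

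The charging scheme I would set up: order, for each point $p_j$, its northwest middle-couple partners $p_{i_1}, p_{i_2}, \ldots$ (say by $x$-coordinate); charge the first one to $p_j$ itself as in the $L_\infty$ proof, and charge each subsequent partner $p_{i_{a+1}}$ to an intersecting pair of disks witnessing the ``conflict'' between the couples $(p_{i_a}, p_j)$ and $(p_{i_{a+1}}, p_j)$. Do the symmetric thing on the northeast side, charging the lower endpoint of each middle couple as before. Then: each point of $P$ is charged at most twice (once for its first northwest partner as a ``$p_j$'', once for its first northeast partner), contributing at most $2n$; and I must check that each intersecting pair $\{s_k, s_{k'}\}$ is charged $O(1)$ times total — ideally at most once, or at most a small constant — across all points $p_j$ and all the ``subsequent partner'' charges. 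Establishing that no intersecting pair is overcharged is the real crux: I expect to need a geometric claim saying that a fixed ordered pair of intersecting disks $(s_k, s_{k'})$ can be the designated witness for the conflict at only one point $p_j$ (or boundedly many), using the structure of where two circular arcs can cross relative to the points sandwiched between them.

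I anticipate the main obstacle is precisely this last bookkeeping — ensuring the map from ``extra middle bounding couples'' to intersecting disk pairs is $O(1)$-to-one, since a sloppy charging could let the same pair absorb blame from many points and blow the bound up to $n\kappa$. To control it I would lean on extremal/monotonicity properties: sort the conflicts along $L$ and argue that consecutive conflicts witnessed by the same disk pair are impossible because an intervening point would already be inside or outside one of the disks in a way that contradicts the middle-couple definition. Once the counting is done, $|\calC| \le 2m + 2n + \kappa$ follows immediately; the computation of $\calC$ within the claimed time is handled separately (in the next part of the section, analogous to the $L_\infty$ sweep but with the $\kappa\log m$ term absorbing the disk intersections), so for this lemma it suffices to prove the size bound.
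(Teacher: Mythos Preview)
Your approach diverges from the paper's in a real way, and the part you flag as ``the crux'' is indeed a genuine gap.

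The paper does not try to rescue the northwest/northeast charging from the $L_\infty$ case at all. Instead it works purely with the index intervals. Call two middle bounding intervals $[a,b]$ and $[a',b']$ \emph{conflicting} if they properly overlap (i.e., $a<a'<b<b'$ or $a'<a<b'<b$). The paper proves two observations: (i) for any two disks $s,s'$, there is \emph{at most one} pair of conflicting bounding intervals with one interval from $s$ and one from $s'$; (ii) if two bounding intervals from different disks conflict, the disks intersect. Together these give at most $\kappa$ conflicting pairs in total. Now remove one interval from each conflicting pair; at most $\kappa$ intervals are deleted and the remaining family is laminar. A laminar family on $\{1,\dots,n\}$ with the unit intervals $[i,i+1]$ thrown in forms a tree whose leaves number $n-1$ and whose internal nodes each have at least two children, so the laminar family has at most $2n$ members. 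Hence $|\calC_m|\le 2n+\kappa$.

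Your scheme, by contrast, needs that the map ``extra NW (or NE) partner at $p_j$ $\mapsto$ the pair of witnessing disks for the two consecutive couples'' is injective into the set of intersecting disk pairs. You correctly identify this as the obstacle, but the monotonicity sketch you offer does not settle it: you would need to rule out the same disk pair $\{s,s'\}$ being charged once from the NW side at some $p_j$ and again from the NE side at some $p_{j'}$, or from two different lower points altogether. Establishing that no pair is double-charged is essentially equivalent to proving the paper's observation (i) --- that two disks admit at most one conflicting pair of bounding intervals --- so even if your route can be completed, it collapses onto the paper's key lemma. The interval/laminar formulation is cleaner because it removes the geometric $y$-coordinate bookkeeping entirely and replaces it with a combinatorial tree-counting step that needs no further case analysis.
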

\begin{proof}
Recall that the left and right bounding couples of $\calC$ is at most
$2m$. Let $\calC_m$ denote the set of all middle
bounding couples of $\calC$. In the following, we argue that $|\calC_m|\leq 2n+\kappa$.

For convenience, we consider a middle bounding couple
$(i,j)$ as a {\em bounding interval} $[i,j]$ defined on indices of $P$. We call the indices
larger than $i$ and smaller than $j$ as the {\em interior} of the
interval.
Those indices smaller than $i$ and larger than $j$ are considered {\em outside} the interval.

We say that two bounding intervals $[a,b]$ and
$[a',b']$ {\em  conflict} if either $a<a'<b<b'$ or
$a'<a<b'<b$. Hence, those two intervals do not conflict if either
they are interior-disjoint or one interval contains the
other. Since two bounding intervals defined by the same disk are
interior-disjoint, they never conflict.

We first prove an observation: {\em For any two disks, there is at most one
pair of conflicting bounding intervals defined by the two disks.}

Assume to the contrary there are two pairs of conflicting bounding
intervals defined by two disks $s$ and $s'$.
Let the first pair be $[a,b]$ and $[a',b']$ and the second
pair be $[c,d]$ and $[c',d']$. Without loss of generality, we assume
that $[a,b]$ and $[c,d]$ are defined by $s$, and $[a',b']$ and
$[c',d']$ are defined by $s'$. Note that $[a,b]$ and $[c,d]$
may be the same and $[a',b']$ and $[c',d']$ may also be the same.
However, as they are different pairs, either $[a,b]$ and $[c,d]$ are
distinct, or $[a',b']$ and $[c',d']$ are distinct. Without loss of
generality, we assume that $[a,b]$ and $[c,d]$ are distinct and $b\leq c$.
Depending on whether $[a',b']$ and $[c',d']$ are the same, there are two
cases.

\begin{itemize}

\item

If $[a',b']$ and $[c',d']$ are the same, then since $b\leq c$, we have
$a<a'<b\leq c<b'<d$ (see Fig.~\ref{fig:conflict}). By the definition of
bounding intervals, $p_b$ and $p_{c}$ are in the disk $s'$ while
$p_{a'}$ and $p_{b'}$ are vertically above $s'$,
and similarly, $p_{a'}$ and $p_{b'}$ are in the disk
$s$ while $p_a,p_b,p_{c},p_{d}$ are vertically above $s$.

\begin{figure}[t]
\begin{minipage}[t]{0.52\textwidth}
\begin{center}
\includegraphics[height=0.5in]{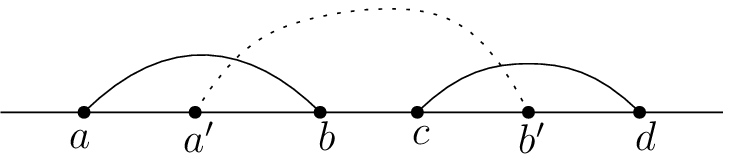}
\caption{\footnotesize Illustrating the conflicting intervals: Each arc represents an interval.}
\label{fig:conflict}
\end{center}
\end{minipage}
\hspace{0.05in}
\begin{minipage}[t]{0.47\textwidth}
\begin{center}
\includegraphics[height=0.7in]{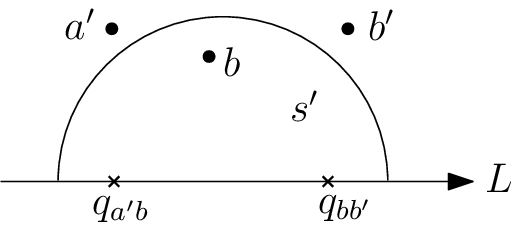}
\caption{\footnotesize Illustrating the disk $s'$ and points $a'$, $b'$, $b$, $q_{a'b}$, and $q_{bb'}$.}
\label{fig:diskcontain}
\end{center}
\end{minipage}
\vspace{-0.15in}
\end{figure}

Since $p_b$ is contained in $s'$ while $p_{a'}$ and $p_{b'}$ are vertically
above $s'$ (e.g., see Fig.~\ref{fig:diskcontain}),
we claim that any disk centered at $L$ and containing both $p_{a'}$ and
$p_{b'}$ must contain the point $p_b$. Indeed, let $q_{a'b}$ be the point
on $L$ that has the same distance with $p_{a'}$ and $p_b$, and
let $q_{bb'}$ be the point on $L$ that has the same distance with $p_b$
and $p_{b'}$ (e.g., see Fig.~\ref{fig:diskcontain}).  Since $x(p_{a'})<x(p_b)$ and $p_b$ is in $s'$ while $p_{a'}$ is not,
we can obtain that $x(q_{a'b})<x(c')$, where $c'$ is the center of $s'$.
For the same reason, $x(q_{bb'})>x(c')$. Therefore, $q_{a'b}$ is strictly to the left of $q_{bb'}$.
Now consider any disk $s''$ with center $c''$ at $L$ such that $s''$ contains both $p_{a'}$ and
$p_{b'}$. If $x(c'')\leq x(q_{a'b})$, then $x(c'')< x(q_{bb'})$ and thus
$c''$ is closer to $p_b$ than to $p_{b'}$. Since $s''$ contains $p_{b'}$, $s''$
also contains $p_b$. On the other hand, if $x(c'')> x(q_{a'b})$, then
$c''$ is closer to $p_b$ than to $p_{a'}$. Since $s''$ contains $p_{a'}$, $s''$
also contains $p_b$. This proves the claim.

Recall that the disk $s$ contains $p_{a'}$ and $p_{b'}$. By the above claim,
$s$ contains $p_b$, but this contradicts with that $p_b$ is strictly
above $s$.

\item
If $[a',b']$ and $[c',d']$ are not the same, then without loss of
generality, we assume that $b'\leq c'$. Since $[a,b]$ conflicts with
$[a',b']$,  either $a<a'<b<b'$ or $a'<a<b'<b$. Similarly, since $[c,d]$ conflicts with
$[c',d']$,  either $c<c'<d<d'$ or $c'<c<d'<d$. In the following,
we assume that $a<a'<b<b'$ and $c<c'<d<d'$ (e.g., see Fig.~\ref{fig:conflict2}), and the other cases can be
proved in a similar way.

\begin{figure}[t]
\begin{minipage}[t]{\textwidth}
\begin{center}
\includegraphics[height=0.5in]{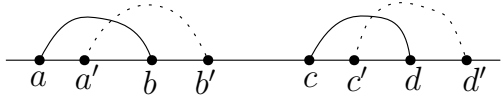}
\caption{\footnotesize Illustrating the conflicting intervals: Each arc represents an interval. The intervals of solid (resp., dotted) arcs are defined by $s$ (resp., $s'$).}
\label{fig:conflict2}
\end{center}
\end{minipage}
\vspace{-0.15in}
\end{figure}

Since $c<c'<d$ and $b'\leq c'$, we obtain that $a'<b<c'$. Since $[a',b']$
and $[c',d']$ are bounding intervals defined by the disk $s'$ while
$b$ is in the interior of $[a',b']$, $s'$
contains $p_b$ but is vertically below $p_{a'}$ and $p_{c'}$. Then, by the
claim proved in the first case, any disk centered at $L$ and
containing both $p_{a'}$ and $p_{c'}$ must contain $p_{b}$ as well.

On the other hand, since $[a,b]$ and $[c,d]$ are bounding intervals
defined by $s$ while $a'$ is in the interior of $[a,b]$ and $c'$ is in
the interior of $[c,d]$, $s$ contains both $p_{a'}$ and $p_{c'}$ but is
vertically below $p_b$. However, since $s$ contains both $p_{a'}$ and
$p_{c'}$ and $s$ is centered at $L$, according to the above claim, $s$
contains $p_b$. Therefore, we obtain contradiction.
\end{itemize}

This proves the observation.

\medskip

We then prove another observation: {\em If a bounding interval defined
by a disk conflicts with a bounding interval defined by another disk,
then the two disks must intersect.}

Indeed, suppose two bounding intervals $[a,b]$ and $[a',b']$ conflict. Let $s$ be the disk defining $[a,b]$ and $s'$ be the disk defining $[a',b']$. Without loss of generality, we assume that $a<a'<b<b'$. By the definition of bounding intervals, $s$ is vertically below $p_a$ and $p_b$, and $s'$ is vertically below $p_c$ and $p_d$. Therefore, both $s'$ and $s$ contain the $x$-interval $[x(p_{a'}),x(p_b)]$ on $L$, and thus they intersect.

\medskip

The above two observations imply that the total number of pairs of conflicting intervals of $\calC_m$ is at most $\kappa$. Now, for each pair of conflicting intervals, we remove one interval from $\calC_m$, so we remove at most $\kappa$ intervals from $\calC_m$. For differentiation, let $\calC_m'$ denote the new set of $\calC_m$ after the removal, and $\calC_m$ still refers to the original set. Observe that $|\calC_m|\leq |\calC'_m|+\kappa$ and no two intervals of $\calC'_m$ conflict. In the
following we show $|\calC'_m|\leq 2n$, which will lead to $|\calC_m|\leq \kappa+2n$.

Our proof mainly relies on the property that no two bounding intervals of $\calC'_m$ conflict. For any two intervals of $\calC'_m$, either they are interior-disjoint or one contains the other. We will form all intervals of $\calC'_m$ as a tree structure $T$. To this end,
for each $i$ with $1\leq i\leq n-1$, if $[i,i+1]$ is not in $\calC'_m$, then we add it to $\calC'_m$.
The tree $T$ is defined as follows. Each interval of $\calC'_m$ defines a node of $T$. The $n-1$ intervals $[i,i+1]$ for all $i=1,2, \ldots, n-1$ are the leaves of $T$. For every two intervals $I_1$ and $I_2$ of $\calC'_m$, $I_1$ is the parent of $I_2$ if and only if $I_1$ contains $I_2$ and there is no other interval $I$ in $\calC_m$ such that $I_2\subseteq I\subseteq I_1$.
Notice that every internal node of $T$ has at least two children. Since $T$ has $n-1$ leaves, the number of internal nodes is no more than $n-2$. Therefore, $T$ has no more than $2n$ nodes, implying that $|\calC'_m|\leq 2n$.
\qed
\end{proof}

We next describe our algorithm for computing the set $\calC$.
For each disk $s_k$, we refer to the half-circle of the boundary of $s_k$ above $L$ as the {\em arc} of $s_k$. Note that every two arcs of $S$ intersect at most once.
In the following, depending on the context, $s_k$ may also refer to its arc.

We begin with computing the left and right bounding couples.

\begin{lemma}\label{lem:leftrightpairs}
All left and right bounding couples of $\calC$ can be computed in
$O((n+m)\log(n+m)+\kappa\log m)$ time.
\end{lemma}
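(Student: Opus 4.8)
The plan is to adapt the sweeping strategy already used successfully for the $L_\infty$ left/right bounding couples (Lemma~\ref{lem:leftrightcouples}), but now with circular arcs in place of horizontal segments. First I would compute, for each disk $s_k$, the anchor points $p_l(s_k)$ and $p_r(s_k)$ by binary search on the sorted sequence of $P$, which costs $O(m\log n)$ in total. As in the $L_\infty$ case, the left bounding couple of $s_k$ is $(p_l(s_k), p(s_k))$, where $p(s_k)$ is the leftmost point of $P$ that lies strictly between the $x$-extent $[x(l_k),x(r_k)]$ of $s_k$ and is vertically above $s_k$ (and is $p_r(s_k)$ if no such point exists). So it suffices to compute $p(s_k)$ for all $k$; the right bounding couples are handled symmetrically by a right-to-left sweep.

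Next I would run a left-to-right sweep with a vertical line $l$, with events at the points of $P$ and at the $x$-coordinates of the leftmost points $l_k$ and rightmost points $r_k$ of the disks. The sweep maintains the set $H$ of disks currently crossing $l$ whose value $p(s_k)$ has not yet been determined. The crucial difference from $L_\infty$ is that arcs are not totally ordered along $l$ in a way that stays fixed: two arcs crossing $l$ may swap their vertical order as $l$ advances, precisely when the line passes the (unique) intersection point of the two arcs. Hence I would store $H$ in a balanced binary search tree keyed by the vertical order of the arcs along the current sweep line, and additionally schedule an event at each such arc-arc intersection point that lies inside the common $x$-range of the two arcs; there are at most $\kappa$ such intersection events, and processing one is a swap of two adjacent elements in the tree in $O(\log m)$ time. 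When $l$ reaches $l_k$ I insert $s_k$ into the tree; when it reaches $r_k$ I delete $s_k$ and, if $p(s_k)$ is still undetermined, set $p(s_k)=p_r(s_k)$; when it reaches a point $p_i\in P$, I repeatedly pull off from the tree every disk whose arc is currently below $p_i$ (i.e.\ $p_i$ is vertically above it), set its $p(\cdot)$ value to $i$, and delete it — these are exactly the disks forming a contiguous lowest block in the tree's order, so they can be found and removed in $O((1+c)\log m)$ time where $c$ counts the removals. Since each disk is removed from the tree at most once, the total removal cost over the whole sweep is $O(m\log m)$, the intersection events contribute $O(\kappa\log m)$, and the point/endpoint events contribute $O((n+m)\log(n+m))$ after the initial sort; combined with $O(m\log n)$ for the anchors, this gives the claimed $O((n+m)\log(n+m)+\kappa\log m)$ bound. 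Correctness of the $p(s_k)$ values follows from the standard sweep-invariant argument, essentially as in Lemma~\ref{lem:leftrightcouples}.

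The main obstacle I expect is maintaining the correct vertical order of the arcs along $l$ efficiently: unlike axis-parallel squares, arcs can cross, so the order is dynamic, and one must argue both that scheduling the $O(\kappa)$ arc-arc intersection events suffices to keep the tree ordered and that the disks whose arcs lie below a newly swept point $p_i$ really do form a contiguous low-order prefix — which is true because two arcs of equal radius nest only in a controlled way and, more generally, ``$p_i$ is vertically above $s_k$'' is equivalent to $s_k$ lying below $p_i$ in the current order, so once we stop removing we can stop scanning. A secondary point to be careful about is that a disk may have already ``passed'' some points of $P$ that are not in its $x$-range — but those points are irrelevant to $p(s_k)$ by definition, so the sweep only needs to react to points inside $[x(l_k),x(r_k)]$, which is automatically enforced because the disk is in the tree only while $l\in[x(l_k),x(r_k)]$. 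Once these points are nailed down, the lemma follows.
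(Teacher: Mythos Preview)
Your proposal is correct and follows essentially the same approach as the paper: compute the anchors $p_l(s_k),p_r(s_k)$ by binary search, then run a Bentley--Ottmann style sweep that maintains the active arcs in a balanced BST ordered by their height along the sweep line, handling the at most $\kappa$ arc-arc crossings as swap events, and peeling off the contiguous lowest block of arcs whenever a point $p_i$ is reached. The paper's proof is slightly terser (it simply cites the segment-intersection sweep for the crossing events rather than spelling out the scheduling), but the algorithm and analysis are the same.
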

\begin{proof}
We only describe how to compute all left bounding couples, because the
algorithm for computing the right bounding couples is similar.

First of all, we compute the points $p_l(s_k)$ and $p_r(s_k)$ for all $1\leq k\leq
m$. Each such point can be computed in $O(\log n)$ time by binary
search on the sorted sequence of $P$. Hence, computing all such points
takes $O(m\log n)$ time. To compute all left bounding couples, it is
sufficient to compute the points $p(s_k)$ for all disks $s_k\in S$, where $p(s_k)$ is the
leftmost point of $P$ outside $s_k$ and between $l_k$ and $r_k$ if it
exists, and $p(s_k)$ is $p_r(s_k)$ otherwise, because
$(p_l(s_k),p(s_k))$ is the left bounding couple defined by $s_k$.
To this end, we propose a sweeping algorithm similar to that for the $L_{\infty}$ case. The difference is that the arcs of $S$ may intersect each other and thus the sweeping needs to handle the events at intersections.

We sweep a vertical line $l$ from left to
right, and an event happens if $l$ encounters a point of
$P\cup\{l_k,r_k|\ 1\leq k\leq m\}$ or an intersection of two arcs of $S$.
For this, we first sort all points of $P\cup\{l_k,r_k|\ 1\leq k\leq m\}$. We determine the intersections and handle the intersection events in a similar way as the sweeping algorithm for computing line segment intersections~\cite{ref:BentleyAl79,ref:BrownCo81,ref:deBergCo08}; note that we are able to do so because every two arcs of $S$ intersect at most once. During the sweeping, we maintain
the arcs $s_k$ of $S$ intersecting $l$ whose points
$p(s_k)$ have not been computed yet. Those arcs are stored in a balanced binary search tree $T$, ordered by the $y$-coordinates of their intersections with $l$.

During the sweeping, if $l$ encounters the left endpoint $l_k$ of
an arc $s_k$, then we insert $s_k$ into $T$. If $l$
encounters the right endpoint $r_k$ of an arc $s_k$, then we remove
$s_k$ from $T$ and set $p(s_k)=p_r(s_k)$. If $l$
encounters a point $p_i$ of $P$, then for each arc $s_k$ of $T$ that is
below $p_i$, we set $p(s_k)=p_i$ and remove $s_k$ from $T$.
If $l$ encounters an intersection of two arcs, then we process it in the same way as the line segment intersection algorithm, and we omit the discussion here (we also need to detect intersections in other events above, which is similar to the line segment intersection algorithm and is omitted)

The running time of the algorithm is $O((n+m)\log(n+m)+\kappa\log m)$. In particular, the $O(\kappa\log m)$ factor in the time complexity is for handling the intersections of the arcs.
\qed
\end{proof}

It remains to compute the middle bounding pairs of $\calC$. The algorithm is similar in spirit to that for the $L_{\infty}$ case. However, it is more involved and requires new techniques due to the nature of the $L_2$ metric as well as the intersections of the disks of $S$.

We sweep a vertical line $l$ from left to right, and an event happens if $l$ encounters a point in $P\cup\{l_k,r_k|\ 1\leq k\leq m\}$ or an intersection of two disk arcs. Let $H$ be the set of arcs that intersect $l$. During the sweeping, we maintain the following information and invariants (e.g., see Fig.~\ref{fig:invariants2}).

\begin{figure}[t]
\begin{minipage}[t]{\textwidth}
\begin{center}
\includegraphics[height=1.5in]{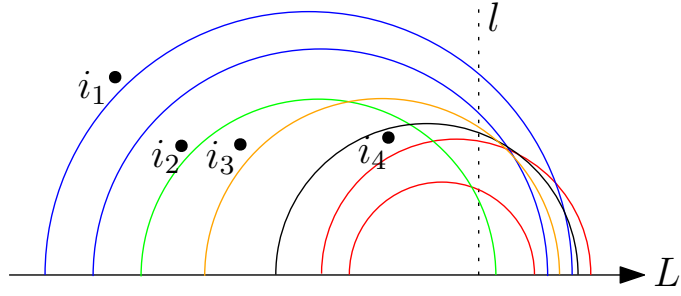}
\caption{\footnotesize Illustrating the information maintained by our sweeping algorithm. $P(l)=\{p_{i_1},p_{i_2},p_{i_3},p_{i_4}\}$. $H(i_1)$ consists of the two blue arcs and $H(i_4)$ consists of the two red arcs. $H(i_0)$ consists of the black arc.}
\label{fig:invariants2}
\end{center}
\end{minipage}
\vspace{-0.15in}
\end{figure}

\begin{enumerate}
\item
A sequence $P(l)=\{p_{i_1},p_{i_2},\ldots,p_{i_t}\}$ of $t$ points to the left of $l$ that are sorted from
left to right.  $P(l)$ is maintained by a balanced binary search tree $T(P(l))$.

\item
A collection $\calH$ of $t+1$ subsets of $H$: $H(i_j)$ for $j=0,1,\ldots,t$, which form a partition of $H$, defined as follows.

$H(i_t)$ is the set of disks of $H$ vertically below $p_{i_t}$. For
each $j=t-1, t-2, \ldots, 1$, $H(i_j)$ is the set of disks of
$H\setminus\bigcup^t_{k=j+1}H(i_k)$ vertically below $p_{i_j}$.
$H(i_0)=H\setminus\bigcup^t_{j=1}H(i_j)$. While $H(i_0)$ may be empty, none of
$H(i_j)$ for $1\leq j\leq t$ is empty.

Each set $H(i_j)$ for $j\in [0,t]$ is maintained by a balanced binary search tree $T(H(i_j))$ ordered by the
$y$-coordinates of the intersections of $l$ with the arcs of the disks.
We have all disks stored in the leaves of the tree, and each internal node $v$ of the tree stores a weight that is equal to the minimum weight of all disks in the leaves of the subtree rooted at $v$.


For each subset $H'\subseteq H$, the arc of $H'$ whose intersection with $l$ is the lowest is called {\em the lowest arc} of $H'$. We maintain a set $H^*$ consisting of the lowest arcs of all sets $H(i_k)$ for $1\leq k\leq t$. So $|H^*|=t$. We use a binary search tree $T(H^*)$ to store disks of $H^*$, ordered by the $y$-coordinates of their intersections with $l$.

\item
For each point $p_{i_j}\in P(l)$, among all points of $P$ strictly between
$p_{i_j}$ and $l$, no point is vertically above any disk of
$H(i_j)$.

\item
Among all points of $P$ strictly to the left of $l$, no point is vertically
above any disk of $H(i_0)$.
\end{enumerate}

\paragraph{Remark.} Our algorithm invariants are essentially the same as those in the $L_{\infty}$ case. One difference is that the points of $P(l)$ are not sorted simultaneously by $y$-coordinates, which is due to that the arcs of $S$ may cross each other (in contrast, in the $L_{\infty}$ case the upper edges of the squares are parallel). For the same reason, for two sets $H(i_k)$ and $H(i_j)$ with $1\leq k< j\leq t$, it may not be the case that all arcs of $H(i_k)$ are above all arcs of $H(i_j)$ at $l$. Therefore, we need an additional set $H^*$ to guide our algorithm, as will be clear later.
\medskip

In our sweeping algorithm, we use similar techniques as the line segment intersection algorithm~\cite{ref:BentleyAl79,ref:BrownCo81,ref:deBergCo08} to determine and handle arc intersections of $S$ (we are able to do so because every two arcs of $S$ intersect at most once), and the time on handling them is $O((m+\kappa)\log m)$.
Below we will not explicitly explain how to handle arc intersections.
Initially $H=\emptyset$ and $l$ is to the left of all arcs of $S$ and all points of $P$.


If $l$ encounters the left endpoint of an arc $s_k$, we insert $s_k$ to $H(i_0)$.

If $l$ encounters the right endpoint $r_k$ of an arc $s_k$, then we need to determine which set of $\calH$ contains $s_k$. For this, as in the $L_{\infty}$ case, we associate each right endpoint with the arc. Using this mechanism, we can find the set $H(i_j)$ of $\calH$ that contains $s_k$ in constant time. Then, we remove $s_k$ from $H(i_j)$. If $j=0$, we are done for this event. Otherwise, if $s_k$ was the lowest arc of $H(i_j)$ before the above remove operation, then $s_k$ is also in $H^*$ and we remove it from $H^*$. If the new set $H(i_j)$ becomes empty, then we remove $p_{i_j}$ from $P(l)$. Otherwise, we find the new lowest arc from $H(i_j)$ and insert it to $H^*$. Processing this event takes $O(\log (n+m))$ time using the trees $T(H^*)$, $T(P(l))$, and $T(H(i_j))$.

If $l$ encounters an intersection $q$ of two arcs $s_a$ and $s_b$, in addition to the processing work for computing the arc intersections, we do the following. Using the right endpoints, we find the two sets of $\calH$ that contain $s_a$ and $s_b$, respectively. If $s_a$ and $s_b$ are from the same set $H(i_j)\in \calH$, then we switch their order in the tree $T(H({i_j}))$. Otherwise, if $s_a$ is the lowest arc in its set and $s_b$ is also the lowest arc in its set, then both $s_a$ and $s_b$ are in $H^*$, so we switch their order in $T(H^*)$. The time for processing this event is $O(\log m)$.

If $l$ encounters a point $p_h$ of $P$, which is a major event we need to handle, we process it as follows. As in the $L_{\infty}$ case, our goal is to determine the middle bounding couples $(i,h)$ with $p_i\in P(l)$.

Using $T(H^*)$, we find the lowest arc $s_k$ of $H^*$.
Let $H(i_j)$ for some $j\in [1,t]$ be the set that contains $s_k$, i.e., $s_k$ is the lowest arc of $H(i_j)$. If $p_h$ is above $s_k$, then we can show that $({i_j},h)$ is a middle bounding couple defined by and only by the arcs of $H(i_j)$ below $p_h$ (e.g., see Fig.~\ref{fig:algo2}). The proof is similar to Lemma~\ref{lem:pair}, so we omit the details. Hence, we report $({i_j},h)$ as a middle bounding couple with weight equal to the minimum weight of all arcs of $H(i_j)$ below $p_h$, which can be found in $O(\log m)$ time using $T(H({i_j}))$. Then, we split $T(H({i_j}))$ into two trees by $p_h$ such that the arcs above $p_h$ are still in $T(H({i_j}))$ and those below $p_h$ are stored in another tree (we will discuss later how to use this tree).
Next we remove $s_k$ from $H^*$. If the new set $H(i_j)$ after the split operation is not empty, then we find its lowest arc and insert it into $H^*$; otherwise, we remove $p_{i_j}$ from $P(l)$. We then continue the same algorithm on the next lowest arc of $H^*$.

\begin{figure}[t]
\begin{minipage}[t]{\textwidth}
\begin{center}
\includegraphics[height=1.5in]{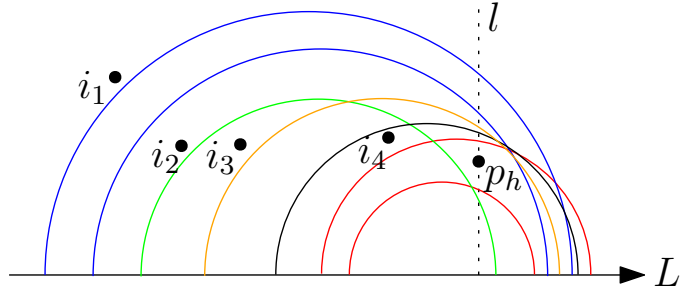}
\caption{\footnotesize Illustrating the processing of an event at $p_h\in P$: $(i_2,h)$ and $(i_4,h)$ will be reported as middle bounding couples, point $i_2$ will be removed from $P(l)$ and $p_h$ will be inserted to $P(l)$.}
\label{fig:algo2}
\end{center}
\end{minipage}
\vspace{-0.15in}
\end{figure}

The above discusses the case where $p_h$ is above $s_k$. If $p_h$ is not above $s_k$, then we are done with processing the arcs of $H^*$. We can show that all middle bounding couples $(b,h)$ with $h$ as the right end have been computed. The proof is similar to Lemma~\ref{lem:pair1}, and we omit the details.

Finally, we add $p_h$ to the rear of $P(l)$. As in the $L_{\infty}$ case, we need to compute the tree $T(H(h))$ for the set $H(h)$, which is comprised of all arcs of $H$ below $p_h$, as follows.

Initially we have an empty tree $T(H(h))$. Let $H'$ be the subset of the arcs of $H^*$ vertically below $p_h$; here $H^*$ refers to the original set at the beginning of the event for $p_h$. The set $H'$ has already been computed above. Let $\calH'$ be the subcollection of $\calH$ whose lowest arcs are in $H'$. We process the subsets $H(i_j)$ of $\calH'$ in the inverse order of their indices (for this, after identifying $\calH'$, we can sort the subsets $H(i_j)$ of $\calH'$ by their indices in $O(|H'|\log m)$ time; note that $|H'|=|\calH'|$), i.e., the subset of $\calH'$ with the largest index is processed first.

Suppose we are processing a subset $H(i_j)$ of $\calH'$. Let $s$ be the lowest arc of $H(i_j)$. 
Recall that we have performed a split operation on the tree $T(H(i_j))$ to obtain another tree consisting of all arcs of $H(i_j)$ below $p_h$, and we use $H'(i_j)$ to denote the set of those arcs and use $T(H'(i_j))$ to denote the tree. If $T(H(h))$ is empty, then we simply set $T(H(h))=T(H'(i_j))$. Otherwise, we find the highest arc $s'$ of $T(H(h))$ at $l$. If $s$ is above $s'$ at $l$, then every arc of $T(H'(i_j))$ is above all arcs of $T(H(h))$ at $l$ and thus we simply perform a merge operation to merge $T(H'(i_j))$ with $T(H(h))$ (and we use $T(H(h))$ to refer to the new merged tree). Otherwise, we call $(s,s')$ an {\em order-violation pair}. In this case, we do the following. We remove $s$ from $T(H'(i_j))$ and insert it to $T(H(h))$. If $T(H'(i_j))$ becomes empty, then we finish processing $H(i_j)$. Otherwise, we find the new lowest arc of $T(H'(i_j))$, still denoted by $s$, and then process $s$ in the same way as above.

The above describes our algorithm for processing a subset $H(i_j)$ of $\calH'$. Once all subsets of $\calH'$ are processed, the tree $T(H(h))$ for the set $H(h)$ is obtained.

After processing the arcs of $H^*$ as above, we also need to consider the arcs of $H(i_0)$. For this, we simply scan the arcs from low to high using the tree $T(H(i_0))$, and for each arc $s$, if $s$ is above $p_h$, then we stop the procedure; otherwise, we remove $s$ from $T(H(i_0))$ and insert it to $T(H(h))$.

This finishes our algorithm for processing the event at $p_h$. The runtime of this step is
$O((1+k_1+k_2+k_3)\cdot \log m)$ time, where $k_1$ is the number of middle bounding couples reported
(the number of merge and split operations is at most $k_1$; also, $|H'|=k_1$),
$k_2$ is the number of arcs of $H(i_0)$ got removed for
constructing $T(H(h))$, and $k_3$ is the number of order-violation pairs.
By Lemma~\ref{lem:boundl2}, the total sum of $k_1$ is at most $2(n+m)+\kappa$ in the entire algorithm.
As in the $L_{\infty}$ case, the total sum of $k_2$ is at most $m$ in the entire algorithm.
The following lemma proves that the total sum of $k_3$ is at most $\kappa$.
Therefore, the overall time of the algorithm is
$O((n+m)\log(n+m)+\kappa\log m)$.

\begin{lemma}\label{lem:exchangebound}
The total number of order-violation pairs in the entire algorithm is at
most $\kappa$.
\end{lemma}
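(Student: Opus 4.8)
The plan is to relate each order-violation pair to a distinct intersection point of two disk arcs of $S$, so that the total count is bounded by $\kappa$. Recall that an order-violation pair $(s,s')$ arises while we merge the tree $T(H'(i_j))$ into the partially-built tree $T(H(h))$: $s$ is the current lowest arc of $T(H'(i_j))$, $s'$ is the current highest arc of $T(H(h))$, and the violation is that $s$ is \emph{not} above $s'$ at the sweep line $l$, even though $H(i_j)$ has a larger index than all the subsets already merged into $T(H(h))$. The first step is to prove the key geometric fact: if $s\in H(i_j)$ and $s'\in H(i_{j'})$ with $j>j'$ (both below $p_h$, both crossing $l$), and $s$ is below $s'$ at $l$, then $s$ and $s'$ must have crossed \emph{strictly to the left of $l$}. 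Indeed, by algorithm invariant (3), every arc of $H(i_j)$ stays below $p_{i_j}$ from $x(p_{i_j})$ up to $x(l)$; in particular $s$ is below $p_{i_j}$ at $x=x(p_{i_j})$. Since $j'<j$ we have $x(p_{i_{j'}})<x(p_{i_j})$ and, by the partition construction, $s'$ — which was assigned to $H(i_{j'})$ — is \emph{above} $p_{i_{j'}}$'s ``assignment threshold'' in the sense that $s'$ was not captured by any $H(i_b)$ with $b>j'$; combined with invariant (3) for $H(i_{j'})$ one shows $s'$ lies above $s$ somewhere in $[x(p_{i_j}),x(l))$. Since at $l$ we now have $s$ below $s'$, but earlier $s'$ was below $s$ (or $s$ was below $p_{i_j}$ which $s'$ was above), the two arcs swapped vertical order in the open interval to the left of $l$, hence they intersect there; since two arcs of $S$ intersect at most once, this intersection is unique to the pair $\{s,s'\}$.

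The second step is to argue that each such intersection point is charged by \emph{at most one} order-violation pair over the entire run of the algorithm. Fix two disks $s$ and $s'$ with a single arc intersection at abscissa $x_0$. An order-violation pair $(s,s')$ can only be created at some event $p_h$ with $x(l)>x_0$, and at the moment of creation both $s$ and $s'$ are still ``active'' (crossing $l$, not yet assigned a final bounding couple). After the pair is resolved, $s$ is moved into $T(H(h))$ and thereafter $s$ and $s'$ are placed into the \emph{same} tree $T(H(h))$ (since $s'$, being the old top of $T(H(h))$, is already there); once two arcs live in a common subtree $T(H(i_\cdot))$ their relative order is maintained correctly by the intersection-handling events, and they are never again the boundary arcs of two \emph{different} trees being merged. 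So the pair $\{s,s'\}$ can generate an order-violation pair at most once. The third step just assembles the bound: the number of order-violation pairs is at most the number of unordered pairs of disks whose arcs intersect, which is exactly $\kappa$ (an arc intersection is a disk intersection on the side of $L$ containing $P$, and we only ever count arcs above $L$). Hence $\sum k_3 \le \kappa$.

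The main obstacle I anticipate is making the ``swapped vertical order to the left of $l$'' argument airtight, because the points of $P(l)$ are \emph{not} sorted by $y$-coordinate (this is exactly the complication the Remark flags), so one cannot simply say ``$H(i_j)$ is below $H(i_{j'})$.'' The careful version must track, for the specific arcs $s$ and $s'$, the abscissa $x(p_{i_j})$ at which $s$ was forced below $p_{i_j}$ and compare against what invariant (3) and the greedy partition rule force for $s'$ at that same abscissa; one has to rule out the degenerate possibility that the order inversion was ``always there'' without an actual crossing, which is where the uniqueness of arc intersections and the fact that both arcs are continuous and defined on overlapping $x$-intervals get used. A secondary subtlety is the bookkeeping in step two: one must check that after $s$ is merged into $T(H(h))$, it genuinely cannot later re-appear as the lowest arc of some $T(H'(i_{j''}))$ being merged in — this follows because each arc belongs to exactly one set of $\calH$ at any time and $T(H(h))$ becomes one such set (namely $H(h)$) at the end of the event, but it deserves an explicit sentence. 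Once these two points are nailed down, the counting is immediate.
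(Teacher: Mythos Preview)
Your overall strategy is exactly the paper's: charge each order-violation pair $(s,s')$ to the unique intersection of the two arcs, and argue that the same unordered pair can be charged at most once. The ``second step'' (once both arcs land in the same set $H(h)$ they cannot produce another order-violation pair later) is essentially the paper's argument.

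The gap is in your ``first step,'' and it stems from an index error. The algorithm processes the subsets of $\calH'$ in \emph{inverse} index order, i.e., the subset with the \emph{largest} index first. Hence when you are currently merging $H'(i_j)$, the arc $s'$ already sitting at the top of $T(H(h))$ came from some $H(i_k)$ with $k>j$, not $j'<j$ as you wrote. With the correct ordering the geometric fact falls out cleanly using the point $p_{i_k}$ (the anchor of the \emph{other} arc's set): since $s'\in H(i_k)$, the point $p_{i_k}$ is vertically above $s'$; since $s\in H(i_j)$ with $j<k$, the disk $s$ was \emph{not} placed into $H(i_k)$ even though $H(i_k)$ was formed first, so $p_{i_k}$ is not vertically above $s$; and because $s$ is vertically below $p_{i_j}$ and intersects $l$, its $x$-extent contains $x(p_{i_k})$, forcing $p_{i_k}$ to lie inside $s$. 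Thus at $x=x(p_{i_k})$ the arc $s$ is above $s'$, while at $x=x(l)$ the arc $s$ is below $s'$, giving the required crossing strictly between $p_{i_k}$ and $p_h$.

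With your reversed indices the argument does not go through: you end up asserting both that ``$s'$ lies above $s$ somewhere in $[x(p_{i_j}),x(l))$'' and that ``earlier $s'$ was below $s$,'' which is self-contradictory, and the attempt to pin things down at $x(p_{i_j})$ (the anchor of $s$'s own set) gives no information about the position of $s'$ there. The fix is simply to swap the roles and argue at $x(p_{i_k})$ as above; then the ``airtight'' concern you flagged disappears, and no appeal to invariant~(3) is needed for this step.
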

\begin{proof}
We follow the notation defined above. Consider an order-violation pair $(s,s')$, which appears when we process
a subset $H'(i_j)$ of $\calH'$ for constructing $T(H(h))$ during an event at a point $p_h\in P$, such that
$s\in H'(i_j)$ and $s'\in T(H(h))$.
Without loss of generality, we assume that this is the first time that $(s,s')$\footnote{We consider $(s,s')$ as an unordered pair, so $(s,s')$ is the same as $(s',s)$.} appears as an order-violation pair in our entire algorithm. As we process the subsets of $\calH'$ by their inverse index order, $s'$ is from $H(i_k)$ for some $k$ with $j<k\leq t$. Since $(s,s')$ is an order-violation pair, by definition, $s'$ is strictly above $s$ at $x(l)=x(p_h)$; e.g., see Fig.~\ref{fig:ordervio}. On the
other hand, since $s'\in H(i_k)$, we know that $p_{i_k}$ is vertically above $s'$.
Since $s\in H(i_j)$ with $j<k$, $p_{i_k}$ must be vertically below $s$.
Thus, $s$ is strictly above $s'$ at $x(p_{i_k})$. This implies that $s$ and $s'$ has an
intersection strictly between $p_{i_k}$ and $p_h$. We charge the
 pair $(s,s')$ to that intersection.
Because $s$ and $s'$ can have only one intersection,
in the following we show that $(s,s')$ will never appear as an order-violation pair again in the future algorithm.

\begin{figure}[t]
\begin{minipage}[t]{\textwidth}
\begin{center}
\includegraphics[height=1.0in]{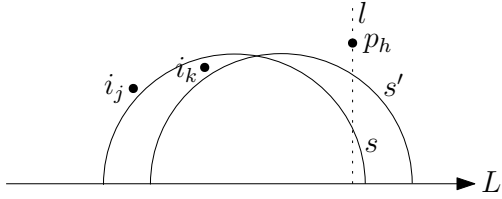}
\caption{\footnotesize Illustrating the proof of Lemma~\ref{lem:exchangebound}: the point $i_k$ is vertically below $s$ but vertically above $s'$.}
\label{fig:ordervio}
\end{center}
\end{minipage}
\vspace{-0.15in}
\end{figure}

First of all, according to our algorithm, $(s,s')$ will not appear as an order-violation pair again during processing the event at $p_h$. After the event, both
$s$ and $s'$ are in $H(h)$. Consider a future event for
processing another point $p_{h'}\in P$. By our algorithm invariant (2), we have
a collection $\calH$ of sets $H_{i'_j}$ with $j=0, 1, \ldots, t'$. Assume to
the contrary that $(s,s')$ appears as an order-violation pair again. Then, $s$ and $s'$ must be from two different sets of
$\calH$, e.g., $H_{i'_j}$ and $H_{i'_k}$.
Without loss of generality, let
$j<k$. By the same analysis as before, we can obtain that $s$ and $s'$
have an intersection $q$ strictly between $p_{i'_j}$ and $p_{h'}$.
Since both $s$ and $s'$ were in $H(h)$ right after the event at $p_h$, it must hold that
$x(p_h)\leq x(p_{i'_j})$. Hence, $x(p_h)<x(q)$.
But this incurs contradiction because we have shown before that the only intersection between $s$ and $s'$ is strictly to the left of $p_h$.

The above shows that  $(s,s')$ will appear as an order-violation pair exactly once in the entire algorithm,
which is charged to their only intersection.
Therefore, the total number of order-violation pairs in the entire algorithm is at most $\kappa$.
\qed
\end{proof}

In summary, all middle bounding couples of $\calC$ can be computed in
$O((n+m)\log(n+m)+\kappa\log m)$ time. Combining with
Lemmas~\ref{lem:boundl2} and \ref{lem:leftrightpairs}, Lemma~\ref{lem:l2c} is proved.

\section{The line-separable unit-disk coverage and the half-plane coverage}
\label{sec:line-separable}

In this section, we show that our techniques for the line-constrained disk coverage problems can also be used to solve other geometric coverage problems.

Recall that the line-separable unit-disk coverage problem refers to the case in which $P$ and centers of $S$ are separated by a line $\ell$ and all disks of $S$ have the same radius. Without loss of generality, we assume that $\ell$ is the $x$-axis and all points of $P$ are above $\ell$. Hence, for each disk $s_i$ of $S$, the portion of $s_i$ above $\ell$ is a subset of its upper half disk. Since disks of $S$ have the same radius, the boundaries of any two disks intersect at most once above $\ell$. We define $\kappa$ as the number of pairs of disks that intersect above $\ell$. Due to the above properties, to solve the problem, we can simply use the same algorithm in Section~\ref{sec:general} for the line-constrained $L_2$ case. Indeed, one can verify that the following critical lemmas that the algorithm relies on still hold: Lemmas~\ref{lem:optgeneral}, \ref{lem:valid}, \ref{lem:boundl2}, \ref{lem:leftrightpairs}, and \ref{lem:exchangebound}. By Theorem~\ref{theo:L2general}, we obtain the following.

\begin{theorem}\label{theo:line-sep}
Given in the plane a set $P$ of $n$ points and a set $S$ of $m$ weighted unit-disks such that $P$ and centers of disks $S$ are separated by a line $\ell$, one can compute a minimum weight disk coverage for $P$ in $O(nm\log(m+n))$ time or in $O((n+m)\log(n+m)+\kappa\log m)$ time,
where $\kappa$ is the number of pairs of disks of $S$ that intersect in the side of $\ell$ containing $P$.
\end{theorem}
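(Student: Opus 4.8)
The plan is to reduce the line-separable unit-disk coverage problem directly to the line-constrained $L_2$ case and then invoke Theorem~\ref{theo:L2general}. The key geometric insight is that, since all points of $P$ lie on one side of $\ell$ (say above) and all disk centers lie on the other side (on or below $\ell$), only the portion of each disk $s_i$ above $\ell$ is relevant for covering $P$, and this portion is an arc of the upper boundary of $s_i$. Because all disks have the same radius, any two such arcs cross at most once above $\ell$ — exactly the structural property that the $L_2$ algorithm of Section~\ref{sec:general} exploits when it treats disk boundaries as arcs and handles arc-arc intersections in the sweep.

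Next I would verify that each of the lemmas underpinning the $L_2$ algorithm survives the change of setting. For Lemma~\ref{lem:optgeneral} and Lemma~\ref{lem:valid}, the arguments only use the facts that (i) the points of $P$ are above $\ell$, (ii) the relevant boundary of each disk is a single $x$-monotone arc, and (iii) the ``containment of the region $D$'' property holds: if a disk $s_t$ covers $p_{i-1}$ while $s_k$ does not and $x(c_t)\le x(c_k)$, then the part of $s_k$ to one side of the vertical line through $p_{i-1}$ is contained in $s_t$. This last property is precisely where equal radii are used, and it carries over verbatim to the line-separable setting. Similarly, Lemma~\ref{lem:boundl2} (the $|\calC|\le 2(n+m)+\kappa$ bound) relies only on the ``any disk centered at $L$ containing $p_{a'}$ and $p_{b'}$ must contain $p_b$'' claim, whose proof uses equal radii and centers on a line — both still true here, with $\ell$ playing the role of $L$ and $\kappa$ counting crossings above $\ell$. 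The sweeping algorithms in Lemma~\ref{lem:leftrightpairs} and Lemma~\ref{lem:exchangebound} are phrased entirely in terms of arcs intersecting the sweep line $l$ and order-violation pairs charged to arc intersections, so they transfer directly.

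The only genuinely new bookkeeping is that a ``disk'' $s_i$ in the separated setting may not intersect the vertical strip relevant to $P$ at all above $\ell$, or its arc may be clipped by $\ell$; but this causes no difficulty, since points outside the $x$-range $[x(l_i),x(r_i)]$ of the upper arc are simply not covered, exactly as in the line-constrained case where $l_i$ and $r_i$ denote the extreme points of the (upper) arc. With all the supporting lemmas in hand, the reduction produces a 1D instance $(P',S')$ with $|S'|\le 2(n+m)+\kappa$ computable in $O((n+m)\log(n+m)+\kappa\log m)$ time, and Lemma~\ref{lem:bothmetrics} (equivalently, the time bound in Theorem~\ref{theo:L2general}) finishes the job, also yielding the alternative $O(nm\log(m+n))$ bound from the trivial $|S'|\le m\lceil n/2\rceil$ construction.

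**The main obstacle** I anticipate is not in any single step but in confirming that every invariant of the $L_2$ sweep — particularly the partition of $H$ into $H(i_0),H(i_1),\dots,H(i_t)$ and the maintenance of the set $H^*$ of lowest arcs — remains well-defined when arcs are clipped at $\ell$ rather than being full upper semicircles; one must check that an arc leaving through $\ell$ (rather than through its rightmost point $r_k$) is handled as a right-endpoint-type event, so that no arc lingers in $H$ after it stops being relevant. This is routine once noticed, but it is the place where a careless transfer of the proof could go wrong, so I would state it explicitly before citing the earlier lemmas.
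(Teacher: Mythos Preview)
Your approach is essentially the paper's: observe that above $\ell$ the relevant part of each unit disk is bounded by a single arc, that any two such arcs cross at most once (since the two intersection points of equal-radius circles are symmetric about the midpoint of the centers, which lies below $\ell$), and then run the $L_2$ sweep from Section~\ref{sec:general} verbatim, checking that Lemmas~\ref{lem:optgeneral}, \ref{lem:valid}, \ref{lem:boundl2}, \ref{lem:leftrightpairs}, and \ref{lem:exchangebound} carry over. The paper does exactly this, with even less detail than you give.

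One small correction to your justification: you write that the containment claim inside Lemma~\ref{lem:boundl2} ``uses equal radii and centers on a line --- both still true here, with $\ell$ playing the role of $L$.'' In the line-separable setting the centers are \emph{not} on $\ell$; they are strictly below it. The original proof of that claim (via the equidistant points $q_{a'b},q_{bb'}$ on $L$) does not transfer literally. What does transfer is the single-crossing property of the arcs above $\ell$, and this is enough: if $s'$ covers $p_b$ but not $p_{a'},p_{b'}$, and some disk $s$ covers both $p_{a'}$ and $p_{b'}$ but not $p_b$, then the arcs of $s$ and $s'$ would have to cross twice above $\ell$ (once between $p_{a'}$ and $p_b$, once between $p_b$ and $p_{b'}$), a contradiction. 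So the lemma still holds, but for the reason the paper actually highlights (unique crossing), not the reason you cite. Your ``main obstacle'' about arcs clipped at $\ell$ is handled automatically once you redefine $l_k,r_k$ as the endpoints of the arc of $s_k$ above $\ell$; the paper does not even mention it.
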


\paragraph{\bf Remark.}
Note that although disks of $S$ have the same radius, because their centers may not be on the same line, one can verify that Lemma~\ref{lem:unitcover} does not hold any more. Hence, we can not use the same algorithm as in Section~\ref{sec:unit} for the line-constrained unit-disk case. But if the centers of all disks of $S$ lie on the same line parallel to $\ell$ (and below $\ell$), then Lemma~\ref{lem:unitcover} will hold and thus we can use  the same algorithm as in Section~\ref{sec:unit} to solve the problem in $O((n+m)\log (n+m))$ time.
\bigskip

We now consider the half-plane coverage problem. Given in the plane a set $P$ of $n$ points and a set $S$ of weighted half-planes, the goal is compute a minimum weight half-plane coverage for $P$, i.e., compute a subset of half-planes to cover all points of $P$ so that the total sum of the weights of the half-planes in the subset is minimized.

We start with the {\em lower-only case} where all half-planes of $S$ are lower ones. The problem can be reduced to the line-separable unit-disk coverage problem. Indeed, we first find a horizontal line $\ell$ below all points of $P$. Then, since each half-plane $h$ of $S$ is a lower one, $h$ can be considered as a disk of infinite radius with center below $\ell$. In this way, $S$ becomes a set of unit-disks whose centers are below $\ell$. By Theorem~\ref{theo:line-sep}, we have the following result.

\begin{theorem}\label{theo:loweronly}
Given in the plane a set $P$ of $n$ points and a set $S$ of $m$ weighted lower half-planes, one can compute a minimum weight half-plane coverage for $P$ in $O(nm\log(m+n))$ time or in $O(n\log n+ m^2\log m)$ time.
\end{theorem}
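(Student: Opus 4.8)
# Proof Proposal for Theorem~\ref{theo:loweronly}

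The plan is to present this as an immediate corollary of Theorem~\ref{theo:line-sep}, so the entire argument reduces to a single geometric reduction together with a bookkeeping step for the running time. First I would fix a horizontal line $\ell$ that lies strictly below every point of $P$; this is trivial to find in $O(n)$ time since the points are already available. The key observation is that a lower half-plane $h$, bounded by a line of slope $\sigma$, can be viewed as the limit of a sequence of disks whose boundary circles pass through the same bounding line and whose centers recede to infinity in the direction perpendicular to that line, on the lower side. Concretely, for each lower half-plane $h\in S$ I would replace $h$ by a disk $s_h$ of some common huge radius $R$ (the same $R$ for all half-planes) such that, within a bounded region containing $P$ and the relevant portion of the plane, $s_h$ and $h$ cover exactly the same points of $P$. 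Since $P$ is finite, one can choose $R$ large enough that this holds for all half-planes simultaneously; the center of each such disk then lies far below $\ell$, so the center set is separated from $P$ by $\ell$.

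After this substitution, $S$ becomes a set of $m$ unit-disks (all of radius $R$) whose centers lie on the opposite side of $\ell$ from $P$, and a subset covers $P$ in the half-plane instance if and only if the corresponding subset of disks covers $P$ in the line-separable unit-disk instance; the weights are carried over unchanged, so optimal solutions correspond. Applying Theorem~\ref{theo:line-sep} then yields an $O(nm\log(m+n))$-time algorithm or an $O((n+m)\log(n+m)+\kappa\log m)$-time algorithm, where $\kappa$ is the number of pairs of disks that intersect on the side of $\ell$ containing $P$.

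The remaining step is to bound $\kappa$ by $O(m^2)$, which is immediate since there are only $\binom{m}{2}$ pairs of half-planes, so $\kappa\le m(m-1)/2$. Substituting $\kappa=O(m^2)$ into the second bound gives $O((n+m)\log(n+m)+m^2\log m)$, and since $(n+m)\log(n+m)=O(n\log n+m\log m+m\log n)$ is absorbed into $O(n\log n+m^2\log m)$ (using $m\log n\le \tfrac12(m^2+\log^2 n)$ or, more simply, that $m\log n = O(n\log n + m^2\log m)$ termwise), we obtain the claimed $O(n\log n+m^2\log m)$ bound. I would state both bounds and note that which one is preferable depends on the relative sizes of $n$ and $m$.

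The main obstacle is making the "half-plane as limiting disk" reduction fully rigorous rather than hand-wavy: one must verify that a single common radius $R$ can be chosen so that, for every $h\in S$, the disk $s_h$ agrees with $h$ on the membership of every point of $P$, and that the resulting center lies below $\ell$. This is where I would spend the care — it follows from compactness of $P$ and the fact that, for a fixed bounding line, the disk boundary converges uniformly to that line on any bounded region as $R\to\infty$ — but it is worth spelling out explicitly, since the rest of the proof is a direct appeal to the earlier theorem. (One subtlety: disks of "infinite radius" are degenerate, so it is cleaner to use a large finite $R$ depending on the input than to argue with literal half-planes; alternatively, one can observe directly that Lemmas~\ref{lem:optgeneral}, \ref{lem:valid}, \ref{lem:boundl2}, \ref{lem:leftrightpairs}, and \ref{lem:exchangebound} all hold verbatim when "arc" is replaced by "line" and the $L_2$ algorithm is run on half-plane boundaries, which sidesteps the limiting argument entirely.)
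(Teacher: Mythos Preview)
Your proposal is correct and follows essentially the same route as the paper: reduce to the line-separable unit-disk instance by placing a horizontal line $\ell$ below $P$ and viewing each lower half-plane as a (limiting) disk of large/infinite radius centered below $\ell$, then invoke Theorem~\ref{theo:line-sep} and use the trivial bound $\kappa\le\binom{m}{2}$. In fact you are more explicit than the paper on two points the paper leaves implicit—the rigor of the ``infinite-radius disk'' substitution and the simplification of $O((n+m)\log(n+m)+m^2\log m)$ to $O(n\log n+m^2\log m)$—so your write-up would actually strengthen the exposition.
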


For the general case where $S$ may contain both lower and upper half-planes, we reduce it to a set of $O(n^2)$ instances of the lower-only case, as follows.

Let $S_{opt}$ denote the subset of $S$ in an optimal solution. Har-Peled and Lee~\cite{ref:Har-PeledWe12} observed that if the half-planes of $S_{opt}$ together cover the entire plane then the size of $S_{opt}$ is $3$; in this case, we can enumerate all triples of $S$ and thus obtain an optimal solution in $O(n^3)$ time.

In the following we consider the case where the union of the half-planes of $S_{opt}$ does not cover the entire plane. In this case, the complement of the union of the half-planes of $S_{opt}$ is a (possibly unbounded) convex polygon $R$~\cite{ref:Har-PeledWe12}. For the ease of discussion, we assume that $R$ is bounded since the algorithm for the other case is similar. Let $a$ and $b$ refer to the leftmost and rightmost vertices of $R$, respectively. Let $P_1$ denote the subset of points of $P$ below the line through $a$ and $b$, and $P_2=P\setminus P_1$. The two vertices $a$ and $b$ together partition the edges of $R$ into two chains, a lower chain and an upper chain. Observe that the half-planes that are bounded by the supporting lines of the edges in the lower chain are all lower half-planes and they together cover $P_1$; similarly, the half-planes that are bounded by the supporting lines of the edges of the upper chain are all upper half-planes and they together cover $P_2$. In light of the observation, finding a minimum weight coverage for $P$ is equivalent to solving the following two lower-only case sub-problems: finding a minimum weight coverage for $P_1$ using lower half-planes of $S$ and finding a minimum weight coverage for $P_2$ using upper half-planes of $S$. Because we do not know $P_1$ and $P_2$, we enumerate all possible partitions of $P$ by a line. Clearly, there are $O(n^2)$ such partitions. Hence, solving the half-plane coverage problem for $P$ and $S$ is reduced to $O(n^2)$ instances of the lower-only case. By Theorem~\ref{theo:loweronly}, we can obtain the following result.

\begin{theorem}\label{theo:halfplane}
Given in the plane a set $P$ of $n$ points and a set $S$ of $m$ weighted half-planes, one can compute a minimum weight half-plane coverage for $P$ in $O(n^3m\log(m+n))$ time or in $O(n^3\log n+ n^2m^2\log m)$ time.
\end{theorem}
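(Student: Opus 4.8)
The plan is to turn the structural decomposition described above into a concrete algorithm and then simply tally the running time, since the combinatorial heavy lifting is already supplied by Theorem~\ref{theo:loweronly} and by the structural facts of Har-Peled and Lee.

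I would first split into two regimes according to the optimal solution $S_{opt}$. If the half-planes of $S_{opt}$ cover the whole plane, then $|S_{opt}|=3$, so trying every triple of $S$ and keeping the minimum-weight one whose members cover the plane (a test doable in $O(1)$ time) finds an optimal solution of this type, in $O(n^3)$ time as noted above. Otherwise the complement of $\bigcup S_{opt}$ is a convex polygon $R$; splitting $R$ at its leftmost and rightmost vertices $a,b$ yields a lower chain whose half-planes are lower ones and together cover $P_1$ (the points of $P$ below line $ab$) and an upper chain whose half-planes are upper ones and together cover $P_2=P\setminus P_1$. Since the correct partition $(P_1,P_2)$ is unknown, I would enumerate all $O(n^2)$ partitions of $P$ induced by a line (the combinatorially distinct ``points below a line'' subsets, which can be generated within the claimed time bound), and for each candidate $(P_1',P_2')$ compute a minimum-weight cover of $P_1'$ by the lower half-planes of $S$ via Theorem~\ref{theo:loweronly}, and a minimum-weight cover of $P_2'$ by the upper half-planes of $S$ by reflecting the plane through a horizontal line so that upper half-planes become lower ones and applying Theorem~\ref{theo:loweronly} again. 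The value of that candidate is the sum of the two sub-costs, which is exact because the two sub-solutions use disjoint subsets of $S$ (no half-plane is both lower and upper). The algorithm outputs the minimum of all these candidate values and the triple-regime value.

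Correctness reduces to two inequalities: for every candidate partition the union of the two sub-covers covers $P_1'\cup P_2'=P$, so every candidate value is at least $w(S_{opt})$; and for the true partition $(P_1,P_2)$ the lower-chain half-planes form a subset of the lower half-planes of $S_{opt}$ covering $P_1$ and the upper-chain half-planes form a subset covering $P_2$, with combined weight at most $w(S_{opt})$, so the optimum is attained. The unbounded-$R$ case is handled symmetrically. For the running time, we pay $O(n^2)$ times the cost of one lower-only instance from Theorem~\ref{theo:loweronly}, namely $O(n^2)\cdot O(nm\log(m+n))=O(n^3m\log(m+n))$ or $O(n^2)\cdot O(n\log n+m^2\log m)=O(n^3\log n+n^2m^2\log m)$, plus the $O(n^3)$ for the triple case, which is dominated.

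The step I expect to be the main obstacle is not the time analysis but the clean justification of the decomposition: that the lower-chain half-planes genuinely cover \emph{all} of $P_1$ (one must combine the convexity of $R$ with the fact that $P$ avoids $R$, so that a point of $P$ below line $ab$ lies strictly below the lower chain and hence below the supporting line of the lower-chain edge above it), and that restricting to line-partitions generated by pairs of points of $P$ does not miss the true partition. These are precisely the places where I would invoke the cited structural results of Har-Peled and Lee rather than reprove the polygon structure from scratch.
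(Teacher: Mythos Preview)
Your proposal is correct and follows essentially the same route as the paper: handle the plane-covering case by brute-forcing triples, otherwise use the Har-Peled--Lee convex-complement structure to split at the leftmost/rightmost vertices of $R$, enumerate all $O(n^2)$ line-partitions of $P$, and for each partition solve two lower-only instances via Theorem~\ref{theo:loweronly}. Your added remarks on correctness (the two inequalities and the disjointness of lower vs.\ upper half-planes) and on the reflection trick for the upper-half-plane subproblem make the argument slightly more explicit than the paper's own exposition, but the approach is the same.
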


\section{Concluding remarks}
\label{sec:conclude}

We show that our line-constrained disk coverage problem has an $\Omega((m+n)\log (m+n))$ time lower bound in the algebraic decision tree model even for the 1D case. To this end, in the following we prove that $\Omega(N\log N)$ is a lower bound with $N=\max\{m,n\}$, which implies the $\Omega((m+n)\log (m+n))$ lower bound as $N=\Theta(n+m)$.

The reduction is from the element uniqueness problem. Let $X=\{x_1,x_2,\ldots,x_n\}$ be a set of $n$ numbers, as an instance of the element uniqueness problem. We create an instance of the 1D disk coverage problem with a point set $P$ and a segment set $S$ on the $x$-axis $L$ as follows. For each $x_i\in X$, we create a point on $L$ with $x$-coordinate equal to $x_i$ and create a segment on $L$ which is the above point with weight equal to $1$. Let $P$ be the set of all such points and let $S$ be the set of all such segments. Then, $|P|=|S|=n$, and thus $N=n$. It is not difficult to see that the numbers of $X$ are distinct if and only if the optimal objective value of the 1D disk coverage problem is equal to $n$. As the element uniqueness problem has an $\Omega(n\log n)$ time lower bound under the algebraic decision tree model, our 1D disk coverage problem has an $\Omega(N\log N)$ time lower bound.

The lower bound implies that our algorithms for the 1D, unit-disk, $L_1$, and $L_{\infty}$ cases are all optimal. However, it remains open whether faster algorithms exist for the $L_2$ case. Another direction is to investigate whether the $L_2$ case is 3SUM-hard; if yes, then it is quite likely that our algorithm is nearly optimal.


%

\bibliographystyle{plain}
\bibliography{reference}





\end{document}